\documentclass[11pt]{article}
\ifdefined\pdfpagewidth \pdfpagewidth=8.5in \pdfpageheight=11in \fi
\ifdefined\pagewidth \pagewidth=8.5in \pageheight=11in \fi
\usepackage{amsmath,amssymb,amsthm,mathtools}
\usepackage{tabularx,array,longtable}
\usepackage{enumitem}
\usepackage{tikz}
\usetikzlibrary{arrows.meta,calc}
\usepackage{pgfplots}
\pgfplotsset{compat=1.18}
\makeatletter
\def\tc@slash{/}
\newcommand{\link}[1]{\texttt{\@tfor\tc@t:=#1\do{\tc@t\ifx\tc@t\tc@slash\allowbreak\fi}}}
\makeatother

\newtheorem{theorem}{Theorem}[section]
\newtheorem{proposition}[theorem]{Proposition}
\newtheorem{lemma}[theorem]{Lemma}
\newtheorem{corollary}[theorem]{Corollary}
\theoremstyle{definition}
\newtheorem{definition}[theorem]{Definition}
\newtheorem{construction}[theorem]{Construction}
\newtheorem{example}[theorem]{Example}
\theoremstyle{remark}
\newtheorem{remark}[theorem]{Remark}

\newcommand{\B}{\{0,1\}}
\newcommand{\Z}{\mathbb{Z}}

\newcommand{\R}{\mathbb{R}}
\newcommand{\Tern}{\{-1,0,1\}}
\newcommand{\cat}{\mathbin{\Vert}}
\newcommand{\Hv}{\mathrm{H}}
\newcommand{\Nhost}{N_{\mathrm{host}}}
\newcommand{\field}[1]{\langle #1\rangle}
\newcommand{\cell}[1]{\texttt{#1}}
\newcommand{\mem}{\mathrm{mem}}
\newcommand{\Th}{\mathrm{Th}}
\newcommand{\nlit}{n_{\mathrm{lit}}}
\newcommand{\nrep}{n_{\mathrm{rep}}}
\newcommand{\phstart}{\textsc{start}}
\newcommand{\phcodon}{\textsc{codon}}
\newcommand{\phgate}{\textsc{gate}}
\newcommand{\phaddr}{\textsc{addr}}
\newcommand{\rocq}[1]{\texttt{#1}}
\DeclareMathOperator{\Lev}{Lev}
\DeclareMathOperator{\Out}{Out}
\DeclareMathOperator{\ser}{ser}
\DeclareMathOperator{\inst}{inst}
\DeclareMathOperator{\depth}{depth}
\DeclareMathOperator{\describe}{describe}
\DeclareMathOperator{\code}{code}
\DeclareMathOperator{\erfc}{erfc}
\DeclareMathOperator{\popcount}{popcount}

\title{Universal Construction and Exact Self-Reproduction\\ in Ternary McCulloch--Pitts Networks}
\author{Charles C. Norton}
\date{}

\begin{document}
\maketitle

\begin{abstract}
A fixed network of McCulloch--Pitts threshold units with weights in $\{-1,0,1\}$
can hold other threshold networks in its state and run them: the state is a ring of
banks of records, each record a unit of a stored network, and each step evaluates
one record. We use such a network to carry out von Neumann's universal construction
and self-reproduction exactly. As a cellular automaton keeps its rule, the fixed
network keeps its weights, and what reproduces is a stored network. A constructor
of $143$ records reads the description
of a network from its tape, builds that network in the next bank, copies the
description onto the next tape and hands control to what it built; started on its
own description, it rebuilds itself, weights included, in every generation. The
scheme scales to a universal computer. A SUBLEQ computer of $17{,}598$ ternary
units, stored as $36{,}080$ records and running a program of $27$ instructions,
builds any network that fits a bank and reproduces itself in the same way, at every
word width from eight bits on; run directly, it emits the serialization of its own
weights, memory and tape. Integer pre-activations give every orbit a margin of
$1/2$, and replicating each unit $r$ times multiplies it by $r$. That suffices
against noise of any size on the pre-activations, but against von Neumann's output
flips only below a threshold inversely proportional to the fan-in. These results
are proved in Rocq.
\end{abstract}

\section{Introduction}\label{sec:intro}

McCulloch and Pitts introduced networks of binary threshold units as a logical
calculus of neural activity~\cite{McCullochPitts1943}, and von Neumann used their
notation for the logical design of the stored-program
computer~\cite{vonNeumann1945}. He later asked how a machine can build a copy of
itself, and answered with a universal constructor: given the description of a
machine, it builds that machine, and, supplied with its own description and a
mechanism that copies descriptions, it builds a copy of itself together with a copy
of the description~\cite{vonNeumann1966}. The self-reproducing automata that
followed~\cite{Arbib1966,Codd1968,Langton1984,Pesavento1995,PerrierSipperZahnd1996},
surveyed in~\cite{Sipper1998}, reproduce configurations of a cellular space under a
fixed transition rule.

This paper asks the same question of threshold networks: whether the weights of a
threshold network can be reproduced exactly by the dynamics of a threshold
network. The networks have weights in $\Tern$ and integer thresholds and are
iterated as a fixed composition of layers, so that each is a map
$F\colon\B^n\to\B^n$ whose behaviour is the orbit of an initial state. Exactness
comes from the integers. A trained network that outputs an approximation of its
weights loses information in every generation, while an orbit of integer
pre-activations carries a weight set bit for bit and survives every perturbation of
a pre-activation below one half.

We answer it with a fixed ternary network that holds other networks as data. Its
state is a ring of banks of records with a tape for each bank. A record is a
two-input unit of a stored network, and each step evaluates one record of the active
bank. The fixed network plays the part of von Neumann's transition rule and the
stored networks the part of his configurations. As in a cellular automaton, the
transition rule keeps its weights, and what reproduces is the stored network, its
weights and biases included.

These objects exist by classical results. By Kleene's recursion theorem a universal
computer has programs that print their own text~\cite{Kleene1938}; universal
circuits and fixed-weight networks run any circuit or network whose code they are
given~\cite{Valiant1976,DonnarummaPreveteTrautteur2012}; and von Neumann's design
separates construction from copying~\cite{vonNeumann1966}. This paper builds them
as ternary threshold networks and proves what they do. It gives exact sizes and
depths for every parameter, shows where ternary weights and threshold units cost a
level of depth, reproduces a stored network bit for bit, weights and biases
included, and bounds the effect of noise.

\subsection{Contributions}

\begin{enumerate}[label=(\arabic*),leftmargin=2em]
\item \emph{A network that runs stored networks.} The ring machine holds $K$ banks of
$S$ records and $K$ tapes of $L$ cells, with heads at absolute positions, and a
fixed ternary netlist of depth $19$ computes its step from every state, for every
window, ring, bank and tape (Theorem~\ref{thm:physnet}). One pass over a bank
evaluates the netlist stored in it, and every ternary unit compiles into records
(Lemmas~\ref{lem:feedback} and~\ref{cor:compile}).
\item \emph{A self-reproducing constructor.} An organism of $143$ records with a
description code of its own builds every network whose description fits a tape
into the next bank, copies the description onto the next tape and hands control to
what it built. Started on its own description it reproduces: the copies accumulate
until every bank holds one, and from then on the state repeats every $K$
generations (Theorems~\ref{thm:construct} and~\ref{thm:organism}).
\item \emph{A universal computer.} A family $H_A$ of ternary networks executes
SUBLEQ, a computer with a single instruction~\cite{MavaddatParhami1988}, on $2^A$
words of $A$ bits. Its step has depth $13$ at every word width, and the member with
eight-bit words has $17{,}598$ units (Proposition~\ref{prop:host}); a read and the
branch each need two levels of ternary units (Lemma~\ref{lem:muxdepth},
Remark~\ref{rem:ternarydepth}). A program of $21$ instructions builds every byte
string from a run-length description of it, and one of $27$, which separates
construction from copying as von Neumann did, makes the host emit the
serialization of itself, its memory and its tape (Theorems~\ref{thm:univ}
and~\ref{thm:selfrep}). This is the quine scheme on a universal machine: a map
outside the network turns the emitted bytes back into a running network.
\item \emph{A universal computer that reproduces by the dynamics.} Stored in the
ring machine, the host of every word width from $8$ on runs the second program,
builds every network that fits a bank and reproduces in every ring as the organism
does, with no map outside the network; with eight-bit words it is $36{,}080$
records (Theorems~\ref{thm:storedconstruct} to~\ref{thm:widths}).
\item \emph{Reproduction under noise.} Every orbit survives perturbations below $1/2$ of each
pre-activation. Replicating every unit $r$ times, Pierce's interwoven
redundancy~\cite{Pierce1965} with threshold units, multiplies that margin by $r$.
This suffices against Gaussian noise of any size on the pre-activations. Against
von Neumann's output flips with probability $\varepsilon$ it suffices only for
$\varepsilon<1/(2ek^\ast)$, $k^\ast$ the largest fan-in, and above $1/(2k^\ast)$ a
disjunction of fan-in $k^\ast$ fails with a probability that tends to one
(Theorem~\ref{thm:amplify}, Proposition~\ref{prop:flip}). A bound computed from the
orbit is the exact probability that some unit of a step fails, as measured, and
exceeds the rate at which the host leaves its orbit by the $5$ to $6\%$ of failures
that the state does not show; a second-order bound certifies the complete run of the
threefold host where the first-order bound exceeds one
(Propositions~\ref{prop:profile} and~\ref{prop:restore}).
\end{enumerate}

Every result is proved in Rocq for every word width, ring and size it names, except
Lemma~\ref{lem:generations}, which concerns the serialization format. Extracted to
OCaml, the Rocq definitions write the published files. The sizes of particular byte
strings, the values of the profile and restoration bounds along
particular orbits, and the measurements are computed by the artifact.
Appendix~\ref{sec:verification} lists the Rocq theorems behind each result and the
computations and runs that corroborate them, and the artifact~\cite{Norton2026}
holds the development, the generators, the evaluators and the record of every run.

Section~\ref{sec:prelim} fixes threshold units, layered maps and netlists.
Section~\ref{sec:interp} gives the ring machine and Section~\ref{sec:organism} the
organism. Sections~\ref{sec:host} to~\ref{sec:storedhost} build the SUBLEQ host, its
programs and its stored form, and Section~\ref{sec:robust} bounds the noise that a reproducing run tolerates.
Section~\ref{sec:related} discusses related work and Section~\ref{sec:conclusion}
concludes. The appendices give the serialization format, the sizes of the
constructions, the records of the organism, the restoration bound, the measurements
under noise and the verification.

\section{Ternary threshold networks}\label{sec:prelim}

This section fixes the networks of the paper: ternary threshold units, the layered
maps they form, and netlists, which describe such a map unit by unit and which
levelization turns into layers. Throughout, $\Sigma=\{0,1,\dots,255\}$ is the set of
bytes, $\Sigma^*$ the set of finite byte strings, $|u|$ the length of
$u\in\Sigma^*$ and $u\cat v$ concatenation. The Heaviside function is
$\Hv\colon\R\to\B$, with $\Hv(z)=1$ if $z\ge 0$ and $\Hv(z)=0$ otherwise, applied
coordinatewise to vectors.

\subsection{Units and layered maps}

\begin{definition}[Ternary threshold unit]\label{def:unit}
Let $k\ge 0$. A \emph{ternary threshold unit} with $k$ inputs is a map
$u\colon\B^k\to\B$ of the form
\[
u(x)=\Hv(\langle w,x\rangle+\theta),\qquad w\in\Tern^k,\ \theta\in\Z.
\]
The integer $\langle w,x\rangle+\theta$ is the \emph{pre-activation}. The
\emph{effective fan-in} of $u$ is $|\{j: w_j\neq 0\}|$.
\end{definition}

\begin{example}\label{ex:basis}
$\mathrm{AND}(x,y)=\Hv(x+y-2)$, $\mathrm{OR}(x,y)=\Hv(x+y-1)$,
$\mathrm{NAND}(x,y)=\Hv(1-x-y)$, $\mathrm{NOT}(x)=\Hv(-x)$, and the identity
$x=\Hv(x-1)$ on $\B$. More generally $\Hv(\sum_{i\in P}x_i-\sum_{i\in
N}x_i-|P|)$ is the conjunction of the literals $x_i$ ($i\in P$) and $\neg x_i$
($i\in N$), and $\Hv(\sum_{i\in P}x_i-\sum_{i\in N}x_i+|N|-1)$ is their
disjunction, so a negated literal costs no unit. Ternary threshold units therefore
form a functionally complete basis. A function that is not linearly separable, such
as parity, needs depth at least two~\cite{Muroga1971}; the exclusive or is
$\mathrm{XOR}(x,y)=\mathrm{AND}(\mathrm{OR}(x,y),\mathrm{NAND}(x,y))$, of depth two.
\end{example}

\begin{definition}[Layered ternary threshold map]\label{def:lttm}
Let $n_0,\dots,n_d$ be positive integers with $n_0=n_d=n$, and for
$\ell=1,\dots,d$ let $W_\ell\in\Tern^{n_\ell\times n_{\ell-1}}$ and
$\theta_\ell\in\Z^{n_\ell}$. The \emph{layered ternary threshold map} (LTTM) with these
parameters is $F=\Phi_d\circ\dots\circ\Phi_1\colon\B^n\to\B^n$, where
$\Phi_\ell(y)=\Hv(W_\ell y+\theta_\ell)$. Its \emph{depth} is $d$, its \emph{size} is
$M=\sum_{\ell=1}^d n_\ell$, the number of its \emph{rows}, and its dynamics is
$x_{t+1}=F(x_t)$ on $\B^n$.
\end{definition}

Every LTTM is a finite automaton with $2^n$ states, and every orbit is eventually
periodic; on threshold networks as finite automata
see~\cite{Kleene1956,Minsky1967}.

\subsection{Netlists and levelization}

A network is easier to design unit by unit than layer by layer. A netlist lists its
units in an order in which every unit comes after the units it reads, and
levelization turns the list into layers.

\begin{definition}[Netlist]\label{def:netlist}
A \emph{ternary threshold netlist} consists of a finite set $I$ of input signals
and a finite sequence $V$ of units. Each unit $v\in V$ has a bias in $\Z$ and a
list of distinct predecessors, each carrying a weight in $\{-1,1\}$; a predecessor
is an input signal or a unit that precedes $v$ in the sequence. An input that a
unit weights by $0$ is not one of its predecessors. The depth of an input signal is
$0$, and the depth of a unit is one more than the largest depth of its
predecessors, or $1$ if it has none. A \emph{clocked netlist} on $n$ state bits is
a netlist with $I=\{x_1,\dots,x_n\}$ in which each $x_i$ is assigned the unit or
input signal that computes its next value.
\end{definition}

\begin{lemma}[Levelization]\label{lem:lev}
Let $N$ be a clocked netlist on $n$ state bits whose units have maximal depth
$d\ge 1$. There is an LTTM $\Lev(N)$ of depth $d$ on $\B^n$ such that
$\Lev(N)(x)$ is the next state computed by $N$ for every $x\in\B^n$. Every unit
of $N$ that contributes to a next-state value occurs in $\Lev(N)$ with its
weights and bias unchanged, exactly once if its depth is less than $d$ and once
for each state bit whose next value it is if its depth is $d$; all other units
of $\Lev(N)$ are identity units $\Hv(x-1)$.
\end{lemma}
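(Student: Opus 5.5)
The plan is to construct $\Lev(N)$ explicitly, layer by layer, and then prove correctness by induction on depth.

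\textbf{Relevant units.} Fix $N$ with inputs $x_1,\dots,x_n$. Let $R$ be the set of units that contribute to some next-state value. These are the units from which a chain of predecessor links leads to a unit or input assigned to some $x_i$; equivalently, $R$ is the backward closure of the assigned units. Every unit of $R$ has depth between $1$ and $d$.

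\textbf{Signals carried at each level.} For each level $\ell=0,\dots,d-1$, let $S_\ell$ be the set of signals to be available after layer $\ell$. It consists of
\begin{itemize}
\item the units of $R$ of depth exactly $\ell$ (for $\ell=0$, the input signals), and
\item every signal $s$ of depth less than $\ell$ that is still needed later, meaning that $s$ is a predecessor of a unit of $R$ of depth greater than $\ell$, or $s$ is assigned to some $x_i$.
\end{itemize}
We take $n_0=n$ and let layer $0$ be all of $x$, with unused inputs simply ignored. A unit $v$ of depth $\ell$ becomes a row of layer $\ell$. Its weights are placed in the columns of the positions in $S_{\ell-1}$ that hold its predecessors, with $0$ elsewhere, and its bias is kept. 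A signal that is carried forward without being computed at layer $\ell$ gets an identity row $\Hv(y_j-1)$ reading its previous position.

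\textbf{Output layer.} Layer $d$ has exactly $n$ rows, the $i$-th computing the value assigned to $x_i$. If that value is a unit of depth $d$, the row is a copy of that unit. This is the source of the "once for each state bit" clause, and a unit of depth $d$ assigned to two bits appears twice. Otherwise the value is a signal of depth less than $d$, and the row is an identity copy of it. By construction every unit of $R$ of depth less than $d$ appears exactly once, at its own depth. All other rows are identity units.

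\textbf{Correctness.} We show by induction on $\ell$ that the position of $s\in S_\ell$ in $\Phi_\ell\circ\dots\circ\Phi_1(x)$ holds the value of $s$ under $N$ on input $x$.
\begin{itemize}
\item For a computed unit, the claim holds because its pre-activation over the positions of $S_{\ell-1}$ equals its pre-activation in $N$. This is where we need that all of its predecessors lie in $S_{\ell-1}$: each predecessor has depth at most $\ell-1$, and it is either computed at that level or carried forward because it is needed by this unit.
\item For identity rows, we use $\Hv(b-1)=b$ on $\B$.
\end{itemize}
At $\ell=d$ this gives the next state.

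\textbf{Main obstacle.} The step needing the most care is the bookkeeping for the "needed later" sets. Every predecessor of a depth-$\ell$ unit has to be present at level $\ell-1$. Since it may have much smaller depth, this requires defining the carry rule as above rather than only forwarding to the next level. Two degenerate points also need handling:
\begin{itemize}
\item A layer $n_\ell$ might be empty, which violates the positivity requirement in Definition~\ref{def:lttm}. We pad it with one identity row reading some present signal. Layers $1$ to $d-1$ are in fact nonempty, since a depth-$d$ unit forces a chain of units of every smaller depth.
\item A unit of depth $1$ with no predecessors is a constant. It reads nothing and keeps its bias, and the "exactly once" count must not duplicate it.
\end{itemize}
I would formalize the construction as a fold over the netlist order and discharge the induction as a per-position lemma.
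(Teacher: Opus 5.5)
Your proposal is correct and is essentially the paper's proof: your ``needed later'' sets $S_\ell$ are exactly the paper's rule of carrying each signal $u$ by identity units through the layers $\depth(u)<\ell<\lambda(u)$, where $\lambda(u)$ is the last layer that reads $u$ and an assigned signal counts as read at layer $d+1$, and correctness is the same induction on $\ell$ using $\Hv(x-1)=x$ on $\B$. One small correction: your chain argument for nonempty layers does not apply when no depth-$d$ unit contributes to the next state, but no padding is ever needed, since each layer $\ell<d$ holds the signal assigned to $x_1$ or a contributing ancestor of it of depth $\ell$.
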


\begin{proof}
Remove the units that no next-state value depends on, and place each unit $v$ in
layer $\depth(v)$ and the state bits in layer $0$. For a signal $u$, let
$\lambda(u)$ be the largest layer of a unit that reads it, a signal assigned as
the next value of a state bit counting as read at layer $d+1$. In every layer
$\ell$ with $\depth(u)<\ell<\lambda(u)$, place an identity unit that reads the copy
of $u$ in layer $\ell-1$. A unit $v$ then finds each predecessor $u$ in layer
$\depth(v)-1$, since $\depth(u)\le\depth(v)-1<\lambda(u)$. Layer $d$ holds one row
for each state coordinate, in order: the unit assigned to it if that unit has depth
$d$, and otherwise the copy of the assigned signal. Since $\Hv(x-1)=x$ on $\B$,
induction on $\ell$ shows that each row of layer $\ell$ outputs the value of its
signal of $N$.
\end{proof}

\section{Weights as state: the ring machine}\label{sec:interp}

This section gives a fixed network whose state holds threshold networks as data.
The state is a window of cells, in which a stored network keeps its state, and a
ring of banks and tapes; a bank lists the units of a stored network as records. Each
step evaluates the record under the slot head of the active bank on the window and
moves that head on by one slot. Three more heads read one tape and write the next
bank and the next tape, and this is how one stored network builds another
(Figure~\ref{fig:machine}). Every head moves by absolute positions, so a bank or a
tape need not have the size of what it holds.

\begin{definition}[The ring machine]\label{def:twohead}
Let $\alpha\ge4$, $K\ge2$ and $S,L\ge1$, and put $\beta=8+3\alpha$. A state of the
machine $\mathcal M^\alpha_{K,S,L}$ consists of a window $w\in\B^{2^\alpha}$, a bank index
$\kappa<K$, banks $Q_0,\dots,Q_{K-1}$ of $S$ slots of $\beta$ bits, each with a slot head and a
bit head, and tapes $D_0,\dots,D_{K-1}$ of $L$ cells, each with a head. Indices of banks and tapes are taken modulo $K$: bank
$\kappa$ is \emph{active} and bank $\kappa+1$ \emph{idle}, tape $\kappa$ is the
\emph{read tape} and tape $\kappa+1$ the \emph{copy tape}; the \emph{arm} is the bit of the
idle bank under its bit head, in the slot under its slot head.

Window cell $0$ is a sink, which reads $0$ and keeps nothing written to it. Cells
$1$, $2$ and $3$ are not stored: they name the cell under the read head, under the
arm and under the copy head. Cells $4$ to $12$ are requests: AR, AA and AC move the
read head, the arm and the copy head on by one cell, RR, RA and RC return them to
cell $0$, CB and CT clear the idle bank and the copy tape, and HO hands over.

A \emph{record} is a word $(c_0,c_1,\theta,a_0,a_1,a_o)\in\B^{\beta}$: two weight codes
of two bits, read as $w=1$ for $01$, $w=-1$ for $10$ and $w=0$ otherwise, a bias
$\theta\in[-8,7]$ in two's complement on four bits, and three window addresses of
$\alpha$ bits, most significant first. One step reads the record under the slot head
of the active bank and the values $x_a$ of the cells it names, and
\begin{enumerate}[label=(\roman*)]
\item computes $v=\Hv(w_0x_{a_0}+w_1x_{a_1}+\theta)$;
\item writes $v$ into window cell $a_o$ if $a_o\ge4$, under the arm if $a_o=2$,
under the copy head if $a_o=3$, and nowhere if $a_o\le1$;
\item returns the read head to cell $0$ if RR holds $1$ after the write, and
otherwise moves it on by one cell if AR does; clears the idle bank if CB holds $1$,
and then returns the arm to cell $0$ if RA does and otherwise moves it on by one
bit if AA does; acts on the copy tape likewise with CT, RC and AC;
\item moves the slot head of the active bank on by one slot, and sets
$\kappa\leftarrow\kappa+1$ if HO holds $1$;
\item clears the requests.
\end{enumerate}
Every head moves cyclically, from the last slot or cell to the first. We write
$\mathcal M_{K,S,L}$ for $\mathcal M^6_{K,S,L}$, whose window has $64$ cells and whose records have
$26$ bits.
\end{definition}

\begin{figure}[t]
\centering
\begin{tikzpicture}[font=\footnotesize,>=Stealth,
  c/.style={draw,minimum height=5.5mm,minimum width=9mm,inner sep=0pt,anchor=west,text height=1.5ex,text depth=0.3ex},
  s/.style={draw,minimum height=5.5mm,inner sep=0pt,anchor=west,text height=1.5ex,text depth=0.3ex},
  lab/.style={anchor=east},
  note/.style={font=\scriptsize,text=black!70}]
\node[lab] at (0,0) {window $w$};
\foreach \i/\t in {0/{sink},1/{rd},2/{arm},3/{cpy}} {
  \node[c] (w\i) at (0.15+0.9*\i,0) {\t};
  \node[font=\tiny,text=black!60] at ($(w\i.north)+(0,0.13)$) {\i};
}
\node[s,minimum width=56mm] (wq) at (w3.east) {AR AA AC RR RA RC CB CT HO};
\node[font=\tiny,text=black!60] at ($(wq.north)+(0,0.13)$) {4--12};
\node[s,minimum width=40mm] (wr) at (wq.east) {cells $13,\dots,2^\alpha-1$};
\node[lab] at (0,-1.45) {active bank $Q_\kappa$};
\node[s,minimum width=14mm] (r0) at (0.15,-1.45) {};
\node[s,fill=black!10,minimum width=40mm] (r1) at (r0.east) {record under the slot head};
\node[s,minimum width=20mm] (r2) at (r1.east) {next slot};
\node[s,minimum width=58mm] (r3) at (r2.east) {};
\node[note,anchor=east] at ($(r3.east)+(-0.1,0)$) {$S$ slots of $\beta$ bits};
\draw[->,rounded corners=3pt] (r1.north) -- ++(0,0.42) -| node[pos=0.25,fill=white,inner sep=2pt]
  {$v=\Hv(w_0x_{a_0}+w_1x_{a_1}+\theta)$ into cell $a_o$} (wr.south);
\foreach \y/\n/\l/\k/\q in {-2.35/rt/{read tape $D_\kappa$}/rd/{AR, RR}, -3.05/ib/{idle bank $Q_{\kappa+1}$}/arm/{AA, RA, CB}, -3.75/ct/{copy tape $D_{\kappa+1}$}/cpy/{AC, RC, CT}} {
  \node[lab] at (0,\y) {\l};
  \node[s,minimum width=30mm] (\n0) at (0.15,\y) {};
  \node[s,fill=black!10,minimum width=9mm] (\n1) at (\n0.east) {\k};
  \node[s,minimum width=93mm] (\n2) at (\n1.east) {};
  \node[note,anchor=east] at ($(\n2.east)+(-0.1,0)$) {moved by \q};
}
\end{tikzpicture}
\caption{The ring machine $\mathcal M^\alpha_{K,S,L}$ at bank index $\kappa$. A step
evaluates the record under the slot head of the active bank on the cells $a_0$ and
$a_1$, writes the result into cell $a_o$ and moves the slot head to the next slot.
Window cells $1$, $2$ and $3$ are the cells under the read head, the arm and the
copy head, shaded; the requests in cells $4$ to $12$ move these heads on or return
them to cell $0$, clear the idle bank and the copy tape, and hand over to bank
$\kappa+1$. The other $K-2$ banks and tapes are left as they are.}
\label{fig:machine}
\end{figure}

A \emph{pass} is $S$ steps: the slot head of the active bank comes back to the slot
it started from, so a pass evaluates the records of the bank once each, in order,
each on the values the records before it left. The \emph{empty} record, of $\beta$
zeros, reads the sink with weight $0$ and writes the sink, so it only moves the slot
head on. An \emph{image} is the sequence of the bits of a list of records, and a
bank holds an image \emph{ready} when it holds it from slot $0$, followed by empty
slots, with its slot head at slot $0$. A bank that holds an image of $G\le S$
records ready evaluates them in $G$ steps and then idles for $S-G$. The bank index is read at the start of a step,
so a record that sets HO passes control to the next bank from the next step on, at
that bank's current slot.

\begin{theorem}[The machine as a ternary threshold network]\label{thm:physnet}
Let $\alpha\ge4$, $K\ge2$ and $S,L\ge1$, and put $W=2^\alpha$ and $N=S\beta$. There is
a clocked ternary threshold netlist $N^\alpha_{K,S,L}$ of depth $19$, every weight
$\pm1$, on $W-4+K+K(N+S+\beta)+2KL$ state bits: the stored window cells $4,\dots,W-1$,
the bank index in one-hot form, the bits of each bank with its slot and bit heads in
one-hot form, and the cells of each tape with its head in one-hot form. For every
state $x$ of $\mathcal M^\alpha_{K,S,L}$, the next state that $N^\alpha_{K,S,L}$ computes from
the bits of $x$ is the encoding of the next state of $x$.
\end{theorem}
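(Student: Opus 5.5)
The plan is to write down $N^\alpha_{K,S,L}$ directly as a clocked netlist, one block of units per phase of the step in Definition~\ref{def:twohead}. Each block is made only of the conjunctions and disjunctions of literals of Example~\ref{ex:basis}, so every weight is $\pm1$ and a negated literal costs no extra unit. I would then check the next-state map and the depth one block at a time. The state bits are the ones listed in the statement, which gives the count $W-4+K+K(N+S+\beta)+2KL$ immediately: cells $0$–$3$ are not stored, and heads and the bank index are one-hot. If some next-state values come out at smaller depth than others, Lemma~\ref{lem:lev} pads the shallower paths with identity units $\Hv(x-1)$. So it is enough to show that every next-state value is computed at depth at most $19$, with equality on some path.

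The blocks, in order, with the depths I expect:
\begin{enumerate}[label=(\alph*),leftmargin=2em]
\item \emph{Fetch.} Each bit of the current record is $\bigvee_{\kappa,s}([\kappa]\wedge h^\kappa_s\wedge Q_\kappa[s,j])$, of depth $2$. I would compute in parallel, by the same scheme, the arm bit (idle bank $\kappa+1$, slot head, bit head), the read-tape cell and the copy-tape cell; these give the values of window cells $1$, $2$, $3$.
\item \emph{Operand read.} Decode each address into a one-hot vector of $W$ conjunctions of literals over its $\alpha$ bits. Then $x_{a_i}=\bigvee_c([a_i=c]\wedge x_c)$, where $x_0$ is the constant $0$ and $x_1,x_2,x_3$ come from (a). This is two more levels, and since the one-hot decode can be merged with the AND against $x_c$, it stays at two.
\item \emph{Activation.} The value $v$ is a Boolean function of $2+2+2+4=10$ bits: two weight codes, two operands and the bias. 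It factors through $s=w_0x_{a_0}+w_1x_{a_1}\in\{-2,\dots,2\}$. I would compute the five indicators $[s=k]$ as conjunctions of literals over the operand values and the weight codes (depth $2$, since the product $w_ix_{a_i}$ is itself a conjunction). Then $v=\bigvee_k([s=k]\wedge[\theta\ge -k])$, where each $[\theta\ge -k]$ is a fixed monotone-in-two's-complement DNF on the four bias bits and can be computed in parallel from the fetch stage. This costs about $2$–$3$ levels.
\item \emph{Write.} The effective value of cell $c$ is $([a_o=c]\wedge v)\vee(\neg[a_o=c]\wedge x_c)$ for the stored cells. The same formula with $c=2$ gives the arm target and with $c=3$ the copy target. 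This is $2$ levels, with the $a_o$ decode computed early.
\item \emph{Head and bank update.} This block works from the post-write values of the requests, treating them as ``RR overrides AR'' and ``CB, then RA overrides AA'', and so on. A new one-hot head is $h'_p=(\text{stay}\wedge h_p)\vee(\text{adv}\wedge h_{p-1})\vee(\text{ret}\wedge[p=0])$, gated by the bank index so that only the active, idle and copy structures move. New bank and tape bits are $\bigvee$ of keep, write-under-head and clear cases. The slot head of the active bank always advances, and $\kappa$ rotates when HO is set. Finally the requests cells $4$–$12$ are cleared, that is, set to constant $0$. Their write in (d) is used only internally, so their next value is just $0$ (a unit $\Hv(-1)$ of no predecessors, padded up).
\end{enumerate}

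Correctness then follows cell by cell by comparing with clauses (i)–(v). The invariant that makes this work is that one-hot encodings are preserved, which must be checked for cyclic moves and for the case where the idle bank is cleared and the arm moved in the same step. The clauses are independent of $K,S,L,\alpha$ except through the widths of the disjunctions. The fan-in grows with those widths, but the depth does not, which is why the depth is the same $19$ for every parameter.

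The main obstacle is the exact depth bookkeeping. Each stage above is cheap, but the chain fetch $\to$ decode $\to$ operand $\to$ $s$ $\to$ $v$ $\to$ write $\to$ request value $\to$ head/bank update is long. Some conditions are threshold functions that are not linearly separable in one ternary unit, for example ``write $v$ into cell $c$ else keep''; each of these costs a second level. To make the total come out as $19$ and not merely ``some constant'', I would first try to fold adjacent AND layers: conjunctions of literals compose into a single conjunction, so a layer of ANDs feeding an AND can be merged whenever no OR intervenes. I would then tabulate the critical path stage by stage, and finish by treating the arm and copy-head writes (cells $2,3$), whose results feed the bank and tape update with one extra gating level, as the path that attains depth $19$.
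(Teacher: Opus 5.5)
Your blocks are the paper's: you select the record and the cells under the heads by conjunctions with one-hot lines, decode the three addresses, and read the operands through decode-line multiplexers. You then write $v$ through a multiplexer on the decode line of $a_o$, shift one-hot head lines, and give each request a never-firing unit as its next value.

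The activation block fails as written, because the indicators $[s=k]$ for $k\in\{-1,0,1\}$ are not conjunctions of literals. Take the codes $01,10$, so $w_0=1$ and $w_1=-1$. Then $[s=0]$ holds at $(x_{a_0},x_{a_1})=(0,0)$ and at $(1,1)$ but not at $(1,0)$, and no conjunction of literals does that. Over the product units $P_k=[w_kx_{a_k}=1]$ and $N_k=[w_kx_{a_k}=-1]$ it is not even a threshold function. The true points $(P_0,N_0,P_1,N_1)=(1,0,0,1)$ and $(0,1,1,0)$ have the same midpoint as the false points $(1,0,1,0)$ and $(0,1,0,1)$. So your depth for this block is wrong. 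The paper avoids equality altogether. Each $[p\ge c]$ is the single unit $\Hv(P_0-N_0+P_1-N_1-c)$, and $v=\bigvee_c([p\ge c]\wedge[\theta\ge-c])$. This holds because $c=p$ is a witness whenever $p+\theta\ge0$, and any witness gives $p+\theta\ge0$.

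The second gap is the number $19$, which your proposal never establishes. Your plan for it also points the wrong way: folding AND layers can only lower the depth, and you name the arm and copy writes as the deepest path without a tally. In the paper the requests are decided at depth $14$, and the deepest units are the slot-head lines, at depth $19$. They sit on top of the post-write requests through the arm's carry. The idle bank's slot head moves only when the arm advances out of the last bit of a slot: AA set after the write, RA not set, the bit head on its last line, and the bank idle. Your generic update $h'_p$ omits this carry, and without it the idle bank's slot head is wrong whenever the arm crosses a slot boundary.

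Tallied as you describe them, your blocks come in well below $19$:
\begin{itemize}
\item the record bits and the cells under the heads at depth $2$;
\item the operands at $4$;
\item the products at $5$;
\item the comparisons at $6$;
\item $v$ at $8$;
\item the post-write cells and requests at $10$;
\item the head, bank and bank-index lines about two levels later.
\end{itemize}
To meet the statement you must bound every unit by $19$ and then lengthen some next-state line by identity units $\Hv(x-1)$ inside the netlist itself. Lemma~\ref{lem:lev} does not do this for you. Its padding happens in the levelized map, whereas the depth of a netlist is the largest depth of its units.
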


\begin{proof}
The netlist has thirty-two blocks. For each bank, conjunctions of the slot head's
lines with the bits select the record under the slot head, and a disjunction over
the banks weighted by the lines of the bank index selects the record of the active
bank. Conjunctions of head lines with cells give the cell under each head, and the
lines of the bank index pick out the read tape, the idle bank and the copy tape.
Three blocks of $W$ units decode $a_0$, $a_1$ and $a_o$, the line of address $j$
being the conjunction of the address bits that are $1$ in $j$ and the negations of
those that are $0$ (Example~\ref{ex:basis}). A disjunction of the conjunctions of
each decode line with the cell it names reads $x_{a_0}$ and $x_{a_1}$, cells $1$ to
$3$ through the selected heads.

From the weight codes, units for $[w_k=1]\wedge x_{a_k}$ and $[w_k=-1]\wedge
x_{a_k}$ give $p=w_0x_{a_0}+w_1x_{a_1}\in[-2,2]$. Five units test $p\ge c$ for
$c=-2,\dots,2$ and five test $\theta\ge-c$ on the four bits of $\theta$, and $v$ is
the disjunction of the five conjunctions, since $p+\theta\ge0$ exactly when some $c$,
namely $c=p$, has $p\ge c$ and $\theta\ge-c$. A multiplexer on the decode line of
$a_o$ writes $v$ into each stored window cell, and each request takes as its next
value a unit that never fires. A bank bit is the multiplexer of $v$ and its old
value on the conjunction of the bank being idle, the decode line of $2$ and the
bit's slot and bit lines, cleared when CB holds $1$. A head line moves to the next
line when its head advances, to the first when it returns, and stays otherwise; the
slot head of the active bank always advances, and that of the idle bank advances
when the arm advances from the last bit of a slot. The tapes are alike. The depth
of a unit depends only on its block, the requests being decided at depth $14$ and
the slot head lines at depth $19$.
\end{proof}

Appendix~\ref{app:sizes} counts the units and edges of $N^\alpha_{K,S,L}$ and the
rows of its levelization.

\subsection{Netlists in the ring}

A stored network runs in the ring machine as a netlist of units with at most two
predecessors, one record for each unit. The next two lemmas say when one pass over a
bank evaluates such a netlist, and how every ternary unit becomes such units.

\begin{lemma}[Stored netlists]\label{lem:feedback}
If the active bank holds records followed by empty slots and no record but the last
sets HO, a pass evaluates the records in order, each on the window and the cells
under the heads as the records before it left them. Let a clocked netlist have
units of at most two predecessors and its state bits in window cells, and list its
units as records in an order in which every unit follows its predecessors. Let each
record write a cell that holds the unit's value until the last record that reads it,
and let the record of the next value of each state bit write that bit after every
record that reads it. Then one pass over these records evaluates the netlist.
\end{lemma}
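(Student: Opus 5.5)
The first claim follows from the step semantics of Definition~\ref{def:twohead}. I would argue by induction on the number of steps taken in the pass. Suppose the active bank is $\kappa$ and its slot head starts at slot $0$. Step $j$ reads the record in slot $j$, because by item (iv) the slot head advances by exactly one slot per step. The bank index cannot change before the last record: by (iv), $\kappa$ changes only when HO holds $1$ after a write, and by hypothesis no earlier record sets HO. The empty slots write the sink and change nothing except the slot head. So the records run in order, and each one sees the window and the head cells as the records before it left them. Two facts make this work. Requests are cleared at the end of every step, by (v), so no request persists. Cells $1$ to $3$ are defined through the current head positions, which earlier records may have moved. The only subtle point is the last record when it sets HO. Here I would use the remark that the bank index is read at the start of a step, so this record is still evaluated in bank $\kappa$. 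I would also check that the idle bank's contents are not what gets read during the pass. Writes under the arm go to bank $\kappa+1$, never to the active bank $\kappa$, so the records being executed are not modified.

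For the second claim, let the units be $v_1,\dots,v_m$ in the chosen topological order, with record $r_i$ writing cell $c_i$. The invariant I would carry by induction on $i$ is this: after $r_1,\dots,r_i$, each cell $c_j$ with $j\le i$ holds the value of $v_j$ computed from the old state, at least until the last record that reads $c_j$. In addition, each state-bit cell still holds its old value until the record that writes its next value. For the inductive step, consider $r_{i+1}$. Each of its predecessors is one of two things. It may be an earlier unit $v_j$; then $r_{i+1}$ is among the readers of $c_j$, so by the holding hypothesis $c_j$ still carries $v_j$'s value. Or it may be a state bit; then the record writing that bit's next value comes after every reader, including $r_{i+1}$, so the old value is still present. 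Item (i) then gives $r_{i+1}$ exactly the threshold value of $v_{i+1}$, since the record's two weights and bias are those of the unit.

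At the end of the pass, every state-bit cell has been written by the record of its next value. Each such value was correct when written. It is not overwritten afterwards, because the only writer of a state cell is that bit's next-value record. If another unit's cell is shared with the state cell, that cell must no longer be needed, and this is what the hypothesis on holding values secures. I would state this last point explicitly as the consequence of the hypothesis. The main obstacle is making "holds the unit's value until the last record that reads it" precise enough to rule out interference. Here I would formalise a record as a tuple $(w_0,w_1,\theta,a_0,a_1,a_o)$. I would also treat cells $1$ to $3$ and the sink separately. Head cells may change between records through the requests, so I would count a head move as a write to that cell and require the same discipline. With that convention the induction is routine.
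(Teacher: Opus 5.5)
Your proposal is correct and follows the paper's proof. The first claim comes from the slot head advancing one slot per step, the bank index being read at the start of a step, and empty slots changing nothing; the second comes from induction over the records, with each state bit unchanged until its next-value record, which follows its last reader. Your extra checks fill in details the paper leaves implicit: requests are cleared at every step, and the active bank is never written since $K\ge2$. You also rightly treat ``the only writer of a state cell is its next-value record'' as an assumption the statement tacitly makes, rather than something it literally states.
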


\begin{proof}
The slot head of the active bank moves on by one slot per step, the bank index, read
at the start of a step, selects the bank, and the empty slots change nothing.
Induction over the records shows that each reads the values of its predecessors on
the current state, a state bit being unchanged until the record of its next value,
which follows its last reader.
\end{proof}

Lemma~\ref{lem:feedback} is the rank-ordered schedule of gate-level simulation
engines~\cite{Denneau1982} and the fixed-order sequential schedule of automata
networks~\cite{RiosWilsonTheyssier2024c}.

\begin{lemma}[Compilation]\label{cor:compile}
A ternary threshold unit $\Hv(\sum_{i\in P}x_i-\sum_{i\in N}x_i+\theta)$ is $1$
exactly when at least $t=|N|-\theta$ of its $k$ literals $x_i$ ($i\in P$) and $\neg x_i$
($i\in N$) are true. It is a constant for $t\le0$ and for $t>k$, a disjunction
for $t=1$ and a conjunction for $t=k$, each a copy of the literal when $k=1$ and a
chain of $k-1$ two-input units otherwise, and for $1<t<k$ the function $\Th_t$, which
is $1$ exactly when at least $t$ of its arguments are, satisfies
\[
\Th_t(x_1,\dots,x_k)=\Th_t(x_2,\dots,x_k)\vee\bigl(x_1\wedge \Th_{t-1}(x_2,\dots,x_k)\bigr).
\]
Every unit so obtained is a record.
\end{lemma}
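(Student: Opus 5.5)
The proof has three parts, taken in order: the counting reformulation, the classification by $t$, and the check that every unit produced fits a record.

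\emph{Counting.} For $x\in\B^n$ put $\ell_i=x_i$ for $i\in P$ and $\ell_i=1-x_i$ for $i\in N$. Then
\[
\sum_{i\in P}x_i-\sum_{i\in N}x_i=\sum_{i\in P\cup N}\ell_i-|N|,
\]
so the pre-activation is at least $0$ exactly when $\sum_i\ell_i\ge |N|-\theta=t$. Since $\sum_i\ell_i\in[0,k]$, the unit is constantly $1$ when $t\le0$ and constantly $0$ when $t>k$.

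\emph{Classification.} For $t=1$ the unit is the disjunction of its literals, and for $t=k$ their conjunction. When $k=1$ it is the literal itself, a unit with one predecessor, namely $\Hv(x-1)$ or $\Hv(-x)$. When $k\ge2$ the disjunction or conjunction is associative, so a left-to-right chain of $k-1$ two-input units computes it. By Example~\ref{ex:basis}, a negated literal costs no unit in either case.

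For $1<t<k$ I prove the recurrence by cases on $x_1$. If $x_1=0$, at least $t$ of the $k$ arguments are true exactly when at least $t$ of $x_2,\dots,x_k$ are. If $x_1=1$, this happens exactly when at least $t-1$ of $x_2,\dots,x_k$ are, and $\Th_t(x_2,\dots,x_k)$ implies $\Th_{t-1}(x_2,\dots,x_k)$, so the disjunction is right in both cases. Unfolding the recurrence gives a DAG of $\Th_s$ on suffixes $x_j,\dots,x_k$. It terminates because each application lowers the number of arguments, and every term that appears falls into the base cases: $s=1$, $s$ equal to the length of the suffix, or a constant.

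\emph{Records.} Every unit produced must have at most two predecessors with weights in $\{-1,1\}$ and a bias in $[-8,7]$:
\begin{itemize}
\item a two-input OR or AND over literals, $\Hv(\pm a\pm b+\theta')$, has $|\theta'|\le 2$;
\item a one-input copy or negation has bias $-1$ or $0$;
\item the constants can be $\Hv(0)$ and $\Hv(-1)$, with both weight codes $0$.
\end{itemize}
All of these biases lie in $[-8,7]$, so each unit is encoded with weight codes $01$ or $10$ and a four-bit bias, as in Definition~\ref{def:twohead}.

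The main obstacle is making this last part precise. Negated literals have to be absorbed into the gates through the weight sign at every level, including the gate $x_1\wedge\Th_{t-1}$ inside the recurrence, where $x_1$ stands for a possibly negated literal. One also has to confirm that an inner node of the chain, which is a unit output, is always used positively. I would handle both with a small induction stating that every internal node is a positive signal and only the leaves carry negations.
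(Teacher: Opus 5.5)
Your proof is correct and follows the paper's route. Both arguments rest on the same count of true literals against $t$, use chains of two-input units for $t=1$ and $t=k$, prove the recursion by a case split on $x_1$, and finish with a check of the biases. The paper differs from you in two small points:
\begin{itemize}
\item It obtains the constants as the disjunction and the conjunction of a literal with its negation, not as weight-$0$ records. This also makes every produced unit a gate in the sense of Definition~\ref{def:code}, which the paper relies on afterwards.
\item Your ``main obstacle'' does not arise. $\Hv(\pm x\pm y+c)$ conjoins any two literals for $c\in\{0,-1,-2\}$ and disjoins any two for $c\in\{-1,0,1\}$, so no positivity invariant on internal nodes is needed.
\end{itemize}
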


\begin{proof}
The pre-activation is the number of true literals minus $t$. The unit
$\Hv(\pm x\pm y+c)$ with $c\in\{0,-1,-2\}$ is the conjunction of any two literals
and with $c\in\{-1,0,1\}$ their disjunction, a copy is $\Hv(x-1)$ or $\Hv(-x)$, and
the constants are the disjunction and the conjunction of a literal and its
negation; a chain conjoins or disjoins each further literal with the result so
far. The recursion holds because at least $t$ of $x_1,\dots,x_k$ are true exactly
when at least $t$ of $x_2,\dots,x_k$ are, or $x_1$ is and at least $t-1$ of the
others are, and its base cases are the chains and the constants.
\end{proof}

\begin{example}[Exclusive or]\label{ex:xor}
Let the active bank of $\mathcal M_{2,4,L}$, whose window has $64$ cells and whose
records have $26$ bits, hold the three records
\[
(1,1,-1;\,13,14,15),\qquad(-1,-1,1;\,13,14,16),\qquad(1,1,-2;\,15,16,13),
\]
written $(w_0,w_1,\theta;\,a_0,a_1,a_o)$, ready, so that its fourth slot is empty.
The first record writes the disjunction of cells $13$ and $14$ into cell $15$, the
second their negated conjunction into cell $16$, and the third the conjunction of
cells $15$ and $16$ into cell $13$. From a window with $x_{13}=x_{14}=1$, the four
steps of a pass write $1$ into cell $15$, $0$ into cell $16$ and $0$ into cell $13$,
and move the slot head from the empty slot back to slot $0$. From every window the
pass replaces $x_{13}$ by $x_{13}\oplus x_{14}$ and keeps $x_{14}$: this is
Lemma~\ref{lem:feedback} for the clocked netlist on the state bits $x_{13}$ and
$x_{14}$ whose next values are $x_{13}\oplus x_{14}$ and $x_{14}$, its units
compiled into records as Example~\ref{ex:basis} and Lemma~\ref{cor:compile}
prescribe. A record that set HO in the fourth slot would pass control to the other
bank at the end of the pass, and the other bank would then run whatever it holds.
\end{example}

\section{A self-reproducing constructor}\label{sec:organism}

This section gives the main example, the organism: a stored network of $143$
records that carries out von Neumann's cycle in $\mathcal M_{K,S,L}$, the ring machine
with $64$ window cells. It reads the description on the read tape twice. In
construct mode it decodes the description into the next bank, in copy mode it copies
the description onto the next tape, and then it hands control to what it built.
Descriptions are written in a code of its own, in which a two-input gate costs four
bits besides its addresses.

\begin{definition}[The description code]\label{def:code}
A \emph{gate} is a record that computes the conjunction or the disjunction of
two literals $\ell_0,\ell_1$, each a window cell or its negation, or a copy of one
literal. Its weights are $\pm1$ by the polarities $u_0,u_1$ of the literals, with
bias $-u_0-u_1$ for a conjunction and $1-u_0-u_1$ for a disjunction; a copy has
weights $(\pm1,0)$ and bias $-u_0$. Its \emph{codon} is the four bits
$(k_1,k_0,u_0,u_1)$, with $k_1k_0$ equal to $00$, $01$ and $10$ for the three
kinds and $u_1=0$ for a copy, and it determines the first eight bits of the record,
its weight codes and its bias. The codons $1100$, $1101$ and $1110$ are
\emph{escapes} and $1111$ is \emph{end}. A \emph{description} of an image $X$ is a
word that the following reading carries to $X$: from a gate codon it emits the
gate's eight bits and the next $18$ bits, from an escape the next $26$ bits, and at
the end codon it stops, having read the whole word. The description $\code(X)$ of an
image $X\in\B^{26n}$ lists, record by record, the codon and the $18$ address bits of
each record whose first eight bits are those of a gate, and the escape $1100$
followed by the $26$ bits of every other record, and ends with $1111$.
\end{definition}

For an image $X$ of $n$ records of which $e$ are not gates, $\code(X)$ has
$22n+8e+4$ bits. Every unit that Lemma~\ref{cor:compile} produces is a gate.

\begin{lemma}[Completeness of the code]\label{lem:code}
For every image $X$ of whole records, $\code(X)$ is a description of $X$.
\end{lemma}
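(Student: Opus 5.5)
Let $X=X_1\cdots X_n$ with each $X_i\in\B^{26}$. By Definition~\ref{def:code}, $\code(X)=c_1\cdots c_n\cat 1111$, where $c_i$ is the codon of $X_i$ followed by its last $18$ bits when the first eight bits of $X_i$ are those of a gate, and $c_i=1100\cat X_i$ otherwise. I will show by induction on $i$ that the reading, started at the beginning of $\code(X)$, emits exactly $X_1\cdots X_i$ after consuming exactly $c_1\cdots c_i$. Then at $i=n$ it reads $1111$ and stops having consumed the whole word. That means it carries the word to $X$.

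The inductive step splits into two cases by the kind of $X_i$.

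\emph{Escape case.} Here $c_i=1100\cat X_i$. The reading sees the escape $1100$ and emits the next $26$ bits, which are $X_i$. It has then consumed exactly $c_i$.

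\emph{Gate case.} Here the first four bits of $c_i$ are a codon $(k_1,k_0,u_0,u_1)$ with $k_1k_0\in\{00,01,10\}$, so they are not an escape or end, both of which begin with $11$. The reading emits the gate's eight bits determined by the codon, then the next $18$ bits, which are the address bits of $X_i$. So the work is to check that the eight bits the reading emits from the codon are exactly the first eight bits of $X_i$.

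That check is the only substantive step. We know $X_i$ begins with the eight bits of \emph{some} gate; the question is whether the codon the encoder picks reproduces those same eight bits. I would make "the codon of a record" well defined by showing that the map from codons to eight-bit prefixes is injective, and then define the codon of $X_i$ as the unique preimage of its prefix. Injectivity comes from a short case analysis over the $12$ codons, namely three kinds with four polarity pairs, where a copy forces $u_1=0$.

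\begin{itemize}
\item The weight codes recover $u_0$, and $u_1$ for the two-input kinds.
\item A copy has $w_1=0$, so its second weight code is $00$ or $11$. The two-input kinds have $w_1=\pm1$, so their second weight code is $01$ or $10$. This separates copies from the other two kinds.
\item For fixed polarities, a conjunction has bias $-u_0-u_1$ and a disjunction has bias $1-u_0-u_1$. These differ by one, so the bias separates conjunction from disjunction.
\end{itemize}

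Since the second weight code distinguishes a copy from the two-input kinds, copy gates never collide with conjunctions or disjunctions. I expect this to be the only point needing care, and it is a finite check that can be done by exhaustion. I also need that every bias lies in $[-8,7]$ so the four-bit encoding is faithful; here the biases lie in $[-2,1]$.
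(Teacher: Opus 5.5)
Your proof is correct and follows the paper's argument: induction over the records, where the gate case rests on the codon determining the first eight bits, the escape case copies the $26$ bits unchanged, and the end codon closes the word. The injectivity check you add is needed only to make $\code$ well defined; for the lemma itself, any codon whose eight bits match the record's would do. Note also that the check ranges over ten codons, not twelve, since a copy has $u_1=0$.
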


\begin{proof}
By induction on the number of records. The reading of a gate's codon and address
bits gives back the record, since the codon determines the first eight bits; the
reading of an escape and the record's $26$ bits gives them back unchanged; and the
end codon closes the induction.
\end{proof}

\begin{construction}[The organism]\label{con:organism}
The organism $\rho$ is an image of $143$ records, $3718$ bits, listed in
Appendix~\ref{app:organism}. It uses $45$ window cells: the sink, the three head
cells, the nine requests, fourteen cells $13$ to $26$ that hold the state of a
decoder, and eighteen temporaries in cells $27$ to $44$, each written in a pass
before it is read in that pass. The decoder's state is a mode $m$, construct ($m=0$)
or copy, in cell $13$; a phase, one cell each from $14$ to $17$: \phstart{},
\phcodon{} for a codon, \phgate{} for the first eight bits of a record and \phaddr{}
for its addresses; a codon register $\xi_3\xi_2\xi_1\xi_0$, $\xi_i$ in cell $18+i$; and a
counter $z=z_4\cdots z_0$, $z_i$ in cell $22+i$. One pass carries out one step of
the decoder. In phases \phcodon{}, \phgate{} and \phaddr{}, while $z>0$, the step
decrements $z$ and keeps the phase; at $z=0$ it enters the phase given below. With
$\mathrm{rd}$ the bit under the read head, the step acts as follows.
\begin{enumerate}[label=(\roman*),leftmargin=2em]
\item[(o)] \phstart: the read head, the arm and the copy head return to cell $0$, the
idle bank and the copy tape are cleared, and the decoder enters \phcodon{} with
$z=3$ in construct mode.
\item \phcodon: $\mathrm{rd}$ is consumed and shifted into the codon register, and in copy
mode it is also copied under the copy head. At $z=0$ the codon is complete. At the
end codon in construct mode, the read head returns to cell $0$ and the decoder
enters \phcodon{} again, with $z=3$, in copy mode. At the end codon in copy mode,
the arm and the copy head return to cell $0$, the decoder enters \phstart{}, and the
pass hands over in its last record. At a gate codon in copy mode the decoder enters
\phaddr{} with $z=17$, and in every other case \phgate{} with $z=7$.
\item \phgate: in construct mode, bit $7-z$ of the gate's eight bits is written
under the arm for a gate codon, and the consumed bit $\mathrm{rd}$ at an escape; in copy mode,
at an escape, $\mathrm{rd}$ is consumed and copied. At $z=0$ the decoder enters \phaddr{} with
$z=17$.
\item \phaddr: $\mathrm{rd}$ is consumed and written under the arm in construct mode, and
under the copy head in copy mode. At $z=0$ the decoder enters \phcodon{} with $z=3$.
\end{enumerate}
A consumed bit moves the read head on, and a bit written under the arm or the copy
head moves that head on, by the requests the pass sets.
\end{construction}

A record is thus built in $30$ passes, $4$ in \phcodon{}, $8$ in \phgate{} and $18$
in \phaddr{}, and copied in one pass per bit of its description
(Figure~\ref{fig:decoder}).

\begin{figure}[t]
\centering
\begin{tikzpicture}[font=\footnotesize,>=Stealth,
  ph/.style={draw,rounded corners=3pt,minimum height=8mm,minimum width=13mm,align=center},
  lab/.style={font=\scriptsize}]
\node[ph] (I) at (0,0) {\phstart};
\node[ph] (C) at (4.0,0) {\phcodon\\[-2pt]{\scriptsize 4 bits}};
\node[ph] (G) at (8.4,0.8) {\phgate\\[-2pt]{\scriptsize 8 bits}};
\node[ph] (A) at (8.4,-0.8) {\phaddr\\[-2pt]{\scriptsize 18 bits}};
\draw[->] (I) -- node[lab,above]{$z=3$} (C);
\path[->] (C) edge[loop above,looseness=4] node[lab,above]{first end codon: copy mode} (C);
\draw[->] (C) -- node[lab,above,sloped]{$z=7$} (G);
\draw[->] (C) -- node[lab,below,sloped]{$z=17$} (A);
\draw[->] (G) -- node[lab,right]{$z=17$} (A);
\draw[->] (A) to[bend left=28] node[lab,below]{$z=3$} (C);
\draw[->] (C) to[bend left=22] node[lab,below]{second end codon: hand over} (I);
\end{tikzpicture}
\caption{The decoder of the organism, one step per pass, $z$ counting down the bits
of a phase: a codon in \phcodon, the eight bits a gate's codon determines or an
escaped record's first eight in \phgate, the address bits in \phaddr. Construct mode
writes each record under the arm, and copy mode copies every bit read under the copy
head.}
\label{fig:decoder}
\end{figure}

\begin{proposition}[One pass]\label{prop:pass}
Let a state of $\mathcal M_{K,S,L}$, $S\ge143$, hold $\rho$ ready in its active bank,
and in the window clear requests and a decoder state whose counter lies in the range
of its phase, at most $31$, $3$, $7$ or $17$ in \phstart, \phcodon, \phgate{} or
\phaddr. The first $143$ steps carry out the decoder's step of
Construction~\ref{con:organism} on the read tape, the idle bank and the copy tape at
their heads, and leave in the window clear requests and the next decoder state, whose
counter again lies in the range of its phase, whatever the other window cells hold.
They hand over at the $143$rd step exactly when the step ends copy mode; otherwise
the remaining $S-143$ steps bring the slot head back to slot $0$ and change nothing
else.
\end{proposition}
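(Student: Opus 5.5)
The plan is to derive the statement from Lemma~\ref{lem:feedback}: read the $143$ records of $\rho$ as a clocked netlist of two-input units and check that it computes the decoder step. First, the bank holds $\rho$ ready, followed by $S-143$ empty slots. I will check from the listing in Appendix~\ref{app:organism} that no record but the last writes cell $12$ (HO). Then Lemma~\ref{lem:feedback} applies: the first $143$ steps evaluate the records in order, each on the window and the head cells $1$ to $3$ as the earlier records left them. Each empty slot only moves the slot head on, so the remaining $S-143$ steps return it to slot $0$ and change nothing else.

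The main work is to show that the $143$ records compute the decoder's transition. I will split the records into blocks. Some blocks decode the phase and the condition $z=0$. Some recognise the codon kind (gate, escape, end) from $\xi_3\cdots\xi_0$ together with the bit just shifted in. Some compute the next counter, which is $z-1$ or the reload value $3$, $7$ or $17$. Some compute the next phase and mode. Others compute the bit to emit: the codon-determined bit $7-z$ of the gate's eight bits in \phgate{}, or $\mathrm{rd}$ otherwise. The last group sets the requests AR, AA, AC, RR, RA, RC, CB, CT and HO.

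The invariant I need is the one in the statement. The $18$ temporaries in cells $27$ to $44$ are each written in the pass before they are read, so their initial contents never matter. The state cells $13$ to $26$ are rewritten only after their last reader, or through temporaries that hold the old values. This is the write-after-read condition of Lemma~\ref{lem:feedback}, and I will check it mechanically from the listing.

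Correctness then reduces to a finite case analysis. The cases are the four phases, the two modes, $z=0$ against $z>0$, the codon class, and the bit $\mathrm{rd}$. The hypothesis that the counter lies in the range of its phase bounds $z$, so the case analysis is finite. It also rules out the bad inputs: in \phgate{} it guarantees $7-z$ is a valid bit index, and the decrement never underflows except at $z=0$, where the counter is reloaded. In each case I will compare the cell values after the pass with the next decoder state of Construction~\ref{con:organism}. This also shows that the new counter lies in the range of the new phase: decrements stay in range, and each reload value equals the maximum of its new phase.

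For the requests, all of cells $4$ to $12$ start at $0$. The records set exactly the requests the construction prescribes: AR for a consumed bit, AA or AC for a bit written under the arm or copy head, and the returns and clears in \phstart{} and at the end codons. Clause (v) of Definition~\ref{def:twohead} then clears them at every step, so the window ends with clear requests. HO is set in the last record exactly in \phcodon{}, copy mode, $z=0$, with the end codon, which is the case "the step ends copy mode".

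The main obstacle is this exhaustive verification. It has to cover not only the $30$-odd logical cases but also the ordering subtleties within the pass. Head cells $1$ to $3$ change meaning as soon as a request moves a head, but requests act only at the end of each step, so a write under the arm followed by AA must come in the record order the listing gives. I would do the check the way the paper's Rocq development does: evaluate the record list symbolically over all decoder states in range and all values of $\mathrm{rd}$ and the head cells, with the other window cells universally quantified, and compare the result with the specification by computation.
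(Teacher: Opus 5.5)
Your proposal is correct and is essentially the paper's proof. The paper likewise abstracts each head to the cell under it and what the pass does to it, uses the fact that every temporary is written before it is read, and checks the pass by evaluating all $1984$ in-range decoder states against the $8$ values of the three head cells, $15{,}872$ cases. One harmless slip: the reload on entering \phstart{} sets $z=1$, not the phase maximum $31$, though $1$ is still in range.
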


\begin{proof}
A pass of $\mathcal M_{K,S,L}$ is refined by a pass of an abstract machine that sees,
of each head, the cell under it and what the pass does to it: the value it writes
there, the cells it moves, whether it returns to cell $0$ and whether its bank or
tape is cleared. Every temporary is written before it is read, so the abstract pass
from a decoder state does not depend on what the temporaries hold. The proposition
is therefore a statement about the $1984$ decoder states of its hypothesis and the
eight values of the cells under the three heads, and evaluating the abstract pass in
each of these $15{,}872$ cases shows that it carries out the decoder's step, acts on
each head as the step prescribes, and hands over only in its last record. An empty
slot changes nothing but the slot head.
\end{proof}

A state of $\mathcal M_{K,S,L}$ \emph{starts the organism on} a word $\tau$ when its
active bank holds $\rho$ ready, its read tape holds $\tau$ from cell $0$, and its window
holds clear requests and the decoder in \phstart; the other banks, tapes and window
cells may hold anything.

\begin{theorem}[Universal construction by the organism]\label{thm:construct}
Let $K\ge2$ and $S\ge143$, let $X$ be an image of $n\le S$ records, and let $\tau$ be a
description of $X$ with $|\tau|\le L$. From a state that starts the organism on $\tau$,
after $30n+5+|\tau|$ passes, $(30n+4+|\tau|)S+143$ steps:
\begin{enumerate}[label=(\roman*),leftmargin=2em]
\item the next bank holds $X$ ready, with its bit head at $0$, and the next tape holds
$\tau$ from cell $0$, followed by zeros, with its head at $0$;
\item the active bank holds $\rho$ and the read tape $\tau$, the window holds the
decoder in \phstart{} with $z=1$ and the codon register $1111$ in construct mode,
and no other bank or tape has changed;
\item the bank index has moved to the next bank, at the last step.
\end{enumerate}
From then on $\mathcal M_{K,S,L}$ evaluates the records of $X$.
\end{theorem}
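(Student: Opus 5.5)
The proof reduces to Proposition~\ref{prop:pass}: every pass over the active bank is one step of the decoder of Construction~\ref{con:organism}. It then remains to trace the decoder's run on $\tau$. I will work at the level of the abstract decoder, whose state is the mode, phase, codon register and counter together with the three head positions and the contents of the idle bank and copy tape. I will show by induction over passes that after each pass the machine state encodes the expected decoder configuration. Before the last pass no pass hands over, so the slot head returns to slot $0$ and the hypothesis of Proposition~\ref{prop:pass} holds again. The bank index, the other $K-2$ banks and tapes, and the read tape (which is only read) are untouched, and $\rho$ stays in the active bank because the arm writes only the idle bank.

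I will count passes phase by phase. The first pass is \phstart: it returns the three heads, clears the idle bank and the copy tape, and sets construct mode with $z=3$.

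In construct mode I will induct over the records of $X$, following the reading that defines a description (Definition~\ref{def:code}). A gate record costs $4+8+18=30$ passes: the codon is consumed, the gate's eight bits are written from the codon, and the $18$ address bits are copied under the arm. An escaped record also costs $30$ passes, with the read head consuming $4+26$ bits. Since $\tau$ is a description of $X$, after the $n$ records the read head sits at the end codon, and $4$ more passes read it. At that point the bank holds $X$ from slot $0$ and is otherwise clear. The arm has advanced $26n\le 26S$ bits, moving the idle slot head across slots as it goes. The read head then returns to cell $0$. So construct mode takes $1+30n+4$ passes.

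Copy mode takes exactly $|\tau|$ passes, one per bit, by a similar induction. Gate codons jump to \phaddr{} with $z=17$, escapes go through \phgate{} copying, and each pass copies $\mathrm{rd}$ under the copy head. The total is $30n+5+|\tau|$ passes. The last pass reads the final end codon; it returns the arm and copy head, sets \phstart{}, and hands over at record $143$. All earlier passes take $S$ steps each, which gives the stated step count $(30n+4+|\tau|)S+143$. The residual window contents, \phstart{} with codon register $1111$ and the counter as stated, can be read off the transition at the end codon.

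The main obstacle is the bookkeeping of the idle bank's slot head in (i). I need to show that it ends at slot $0$, so that $X$ is ready. The arm's return via RA resets only the bit head, so I will check against the head semantics of Definition~\ref{def:twohead} how the slot head is reset. Presumably the clear CB in \phstart{} or a return request resets both heads; if not, I will need the fact that the arm advanced a whole number of slots from slot $0$. I also have to confirm that no write wraps the copy tape, which holds because $|\tau|\le L$, and that the bits after $\tau$ are zeros because of the clear CT. Once handover occurs, the bank index points to a bank holding $X$ ready, so the last claim follows from Lemma~\ref{lem:feedback}.
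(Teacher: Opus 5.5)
Your proposal is correct and is the paper's proof: reduce to Proposition~\ref{prop:pass}, then count one \phstart{} pass, $30$ passes per record and $4$ for the end codon in construct mode, and $|\tau|$ passes in copy mode, of which only the last hands over, at its $143$rd step. The obstacle you flag is settled by Definition~\ref{def:twohead}: RA returns the arm, meaning both the slot head and the bit head of the idle bank, to cell $0$, and both the \phstart{} pass and the final pass set RA, so $X$ is written from slot $0$ and left ready. Your fallback would not work, since advancing $n$ whole slots from an arbitrary starting slot does not bring the slot head back to $0$.
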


\begin{proof}
By Proposition~\ref{prop:pass} it suffices to follow the decoder over whole tapes.
The pass of \phstart{} clears the next bank and the next tape and returns every head
to cell $0$. In construct mode each record of $X$ costs $30$ passes and is written
under the arm, and the end codon costs $4$ passes, the last returning the read head
to cell $0$ and switching to copy mode; so the arm has written the $26n$ bits of $X$
in order from bit $0$, moving on once per bit. In copy mode every pass consumes one
bit and copies it, so after $|\tau|$ passes the copy tape holds $\tau$ from cell $0$; the
last pass, on the end codon, returns the arm and the copy head to cell $0$ and hands
over. Only that pass hands over, so every pass before it has $S$ steps and it has
$143$.
\end{proof}

With Lemma~\ref{lem:code}, the organism therefore builds in $\mathcal M_{K,S,L}$
every image of at most $S$ records whose description fits the tape, whatever its
size.

\begin{theorem}[Self-reproduction of the organism]\label{thm:organism}
Let $K\ge2$, $S\ge143$ and $L\ge3150=|\code(\rho)|$, and put
$T=(30\cdot143+4+3150)S+143=7444S+143$. Let $x_0$ start the organism on
$\code(\rho)$, followed by zeros, with the decoder in the state in which a generation
ends, \phstart{} with $z=1$ and the codon register $1111$ in construct mode. For every
$g\ge1$:
\begin{enumerate}[label=(\roman*),leftmargin=2em]
\item after $gT$ steps the bank index has moved $g$ banks on, and for every
$a\le\min(g,K-1)$ the bank that was active $a$ generations before holds $\rho$ and
its tape $\code(\rho)$, so that $\min(g+1,K)$ banks hold the organism; the banks
and tapes not yet reached are those of $x_0$;
\item for $g\ge K$, the state after $(g+K)T$ steps equals the state after $gT$
steps, the window included;
\item the network $N^6_{K,S,L}$ started from the bits of $x_0$ holds the bits of
these states at the same steps, and its orbit is periodic from the $K$-th generation
on, with period $KT$.
\end{enumerate}
\end{theorem}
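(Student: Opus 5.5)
The plan is an induction on $g$ driven by Theorem~\ref{thm:construct}, applied once per generation with $X=\rho$, $n=143$ and $\tau=\code(\rho)$, $|\tau|=3150$. Lemma~\ref{lem:code} makes $\code(\rho)$ a description of $\rho$, and $L\ge3150$ makes it fit the tape. The pass count is then $30\cdot143+5+3150$, so one generation takes $(30\cdot143+4+3150)S+143=T$ steps.

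First I would check that $x_0$ meets the hypotheses of Theorem~\ref{thm:construct}. The decoder starts in \phstart{} with $z=1$, which lies within the range $31$ of Proposition~\ref{prop:pass}. The \phstart{} pass resets the heads, clears the idle bank and copy tape, and sets $z=3$. From there on its effect does not depend on the initial counter, the codon register or the mode. The theorem's "starts the organism" condition asks only for \phstart{}, so the state ending a generation is again a valid start.

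Next I would set up the invariant. After $gT$ steps the state is $\kappa_0+g$ and bank $\kappa_0+g$ holds $\rho$ ready: its slot head is at $0$ because of the hand-over at step $143$ of the last pass, and Theorem~\ref{thm:construct}(i) gives it ready. Its tape holds $\code(\rho)$ followed by zeros with the head at $0$. The window holds the end-of-generation decoder state with clear requests. The previously active bank keeps $\rho$ and its tape by clause (ii), and no other bank or tape is touched.

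With this invariant, the generation from $gT$ to $(g+1)T$ writes only bank and tape $\kappa_0+g+1$. For $g+1\le K-1$ that index is a fresh bank, giving $\min(g+1,K)$ organisms; the banks and tapes not yet reached keep the values of $x_0$. For the leftover bits I would argue that the slot head of the old active bank returns to $0$. The bit heads of the written banks are $0$ by clause (i). Other heads are never moved, since only the active, idle and tape heads move.

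For (ii), once $g\ge K$ every bank has been written at least once, at generation $g'\le g$. Each bank is then fully determined: $\rho$ ready with bit head $0$, tape $\code(\rho)$ followed by zeros with head $0$. The window is determined too, since the stored window cells other than the decoder state are temporaries. These temporaries are not guaranteed equal a priori, but I would argue they are fixed functions of the last pass, which is identical across generations (the same records on the same decoder state and tape contents). Non-stored cells are ignored. The active index is $\kappa_0+g \bmod K$. Hence the states at $gT$ and $(g+K)T$ coincide.

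The main obstacle is this exact equality in (ii). It needs every component to be pinned down, including the slot heads of non-active banks, where the hand-over occurs exactly at slot $0$, and the untouched window cells $45$ to $63$. Those cells are not written by $\rho$, so they keep their $x_0$ values, which is consistent across generations. Part (iii) then follows from Theorem~\ref{thm:physnet} by induction on steps: the netlist's state after $t$ steps encodes the machine state after $t$ steps. Determinism together with the equality at $KT$ offset gives periodicity with period $KT$ from step $KT$ on.
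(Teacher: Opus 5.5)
Your route is the paper's: Theorem~\ref{thm:construct} gives one generation, using Lemma~\ref{lem:code} and $|\code(\rho)|=3150\le L$. The state after a generation again starts the organism with the same decoder state, so induction over generations applies. Periodicity then follows because after $K$ generations every bank and tape has been rebuilt. Every written window cell is a function of the hand-over pass, which is the same in every generation, and cells $45$ to $63$ keep their values. Theorem~\ref{thm:physnet} gives (iii).

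Your bookkeeping of the heads, however, is wrong, and it sits in the part you call the main obstacle. The hand-over happens at the $143$rd step of the last pass, not at the end of a pass. So the bank that was just active keeps its slot head at slot $143\bmod S$, which is $0$ only when $S=143$. That pass also returns only the arm and the copy head, after copy mode has consumed all $3150$ bits of the read tape. So the old read tape keeps its head at $3150\bmod L$, which is $0$ only when $L=3150$. Three of your claims are therefore false in general:
\begin{itemize}
\item that the slot head of the old active bank returns to $0$;
\item that the hand-over occurs at slot $0$;
\item that after $K$ generations every bank is ready and every tape head is at $0$.
\end{itemize}
Only the active bank is ready, and only the read tape has its head at $0$. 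The newly active bank is ready because the last pass returns the arm (Theorem~\ref{thm:construct}(i)), not because of the hand-over.

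The fix is what the paper does: keep, for each bank and tape, the number of generations since it was active. A bank active $a$ generations ago, $1\le a<K$, holds $\rho$ with slot head at $143\bmod S$ and bit head at $0$. Its tape holds $\code(\rho)$ followed by zeros, with head at $3150\bmod L$. These stale positions are harmless. The \phstart{} pass of the generation in which that bank and tape are next the idle bank and the copy tape clears them and returns their heads, through CB, RA, CT and RC. The positions are also the same in every generation, so the banks and tapes after generation $g\ge K$ still depend only on $g\bmod K$, which is all (ii) needs. With this correction your argument is the paper's.
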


\begin{proof}
The first generation is Theorem~\ref{thm:construct} with $X=\rho$ and
$\tau=\code(\rho)$, a description of $\rho$ by Lemma~\ref{lem:code}, without
escapes. After it the next bank holds $\rho$ and the next tape $\tau$ followed by zeros,
and the window holds the decoder in \phstart{} with the register $1111$, so the
hypothesis holds again for the next bank. Induction over the generations, keeping
for each bank and tape the number of generations since it was active, gives the
state after every generation bank by bank and tape by tape. After $K$ generations
every bank and tape has been rebuilt, and from then on they depend only on the bank
index. In the window, the cells no record writes keep their values, and every other
cell from $4$ on ends each generation as a function of the decoder's state at the
pass that hands over. That state is the same in every generation, since after the
pass of \phstart{} the decoder sees the same tapes. So for $g\ge K$ the state after
generation $g+K$ equals the state after generation $g$, and
Theorem~\ref{thm:physnet} carries the statement to the network.
\end{proof}

The description of $\rho$ has $3150$ bits, against $3718$ for the image: a gate is
named by four bits, and the organism writes its eight bits itself.

\section{A universal computer: the SUBLEQ host}\label{sec:host}

The organism builds and copies stored networks and computes nothing else. The next
three sections put a universal computer in its place: a ternary network that
executes SUBLEQ, a computer with a single instruction, with memory-mapped tape and
output devices. This section builds the network, one member $H_A$ for each word
width $A$; Section~\ref{sec:univ} programs it, and Section~\ref{sec:storedhost}
stores it in the ring machine, where it reproduces as the organism does.

\subsection{The machine and its devices}

A network reads and writes bytes through a tape transducer: a device outside the
network that acts on designated coordinates of the state between steps.

\begin{definition}[Tape transducer]\label{def:transducer}
A \emph{tape transducer} consists of an LTTM $F$ on $\B^n$ and disjoint designated
coordinate sets: an input register $R_{\mathrm{in}}$ and an output register
$R_{\mathrm{out}}$ of the same width $A\ge8$, and single bits $c_{\mathrm{rd}}$ (read
request), $c_{\mathrm{rw}}$ (rewind request), $c_{\mathrm{wr}}$ (write request),
$c_{\mathrm{eot}}$ (end-of-tape status) and $c_{\mathrm{halt}}$. An environment
state is a triple $(\tau,h,\omega)$ with tape $\tau\in\Sigma^*$, head position
$h\in\{0,\dots,|\tau|\}$ and output word $\omega\in\Sigma^*$. One step maps
$(x,\tau,h,\omega)$ to $(x',\tau,h',\omega')$ as follows. Let $y=F(x)$ and
initialize $x'=y$.
\begin{enumerate}[label=(\roman*)]
\item If $y_{c_{\mathrm{wr}}}=1$, then $\omega'$ is $\omega$ followed by the low
eight bits of $y_{R_{\mathrm{out}}}$, and $x'_{R_{\mathrm{out}}}=0$; otherwise
$\omega'=\omega$.
\item If $y_{c_{\mathrm{rd}}}=1$ and $h<|\tau|$, then $x'_{R_{\mathrm{in}}}$ holds
$\tau_h$ in its low eight bits and copies of its sign bit above them,
$x'_{c_{\mathrm{eot}}}=0$, and $h'=h+1$. If
$y_{c_{\mathrm{rd}}}=1$ and $h=|\tau|$, then $x'_{c_{\mathrm{eot}}}=1$ and $h'=h$.
If $y_{c_{\mathrm{rd}}}=0$, then $h'=h$.
\item If $y_{c_{\mathrm{rw}}}=1$, then $h'=0$.
\item Finally $x'_{c_{\mathrm{rd}}}=x'_{c_{\mathrm{rw}}}=x'_{c_{\mathrm{wr}}}=0$.
\end{enumerate}
The transducer halts at the first step with $y_{c_{\mathrm{halt}}}=1$, that step
included. If it halts from $(x_0,\tau,0,\varepsilon)$ with final output word
$\omega$, we write $\Out(x_0,\tau)=\omega$.
\end{definition}

\begin{definition}[SUBLEQ]\label{def:subleq}
Let $A\ge3$. The memory is $\mem\in\{0,\dots,2^A-1\}^{2^A}$ with arithmetic
modulo $2^A$, and the program counter is $p\in\{0,\dots,2^A-1\}$. One step reads
$a=\mem[p]$, $b=\mem[p+1]$ and $c=\mem[p+2]$ and sets $\mem[b]\leftarrow \mem[b]-\mem[a]\bmod 2^A$. It
then sets $p\leftarrow c$ if the new value of $\mem[b]$, read in two's complement on
$A$ bits, is at most $0$, and $p\leftarrow p+3$ otherwise~\cite{MavaddatParhami1988}.
Indices are taken modulo $2^A$, and the machine halts when $p=2^A-1$. For $A\ge8$ the
registers and the request and status bits of Definition~\ref{def:transducer} are
memory words at the seven top addresses, in the order of Table~\ref{tab:device}; a
request fires on the value $1$, a read loads a tape byte into $R_{\mathrm{in}}$
with its sign bit copied into the bits above the eighth, and an emission appends
the low byte of $R_{\mathrm{out}}$. For $3\le A<8$ a word does not hold a byte, and
only the step is considered. We write $\mathrm{SUBLEQ}_A$ for this machine. The host
is $A=8$: $256$ words of one byte.
\end{definition}

The bit $c_{\mathrm{halt}}$ is a state bit of the network that latches $p=2^A-1$ for
the state the step produces, so a halted state is a fixed point, and the network
treats the device words as ordinary memory.

\begin{table}[ht]
\centering\small
\begin{tabular}{@{}llll@{}}
\hline
Address & Symbol & Role & Effect \\
\hline
\cell{0xF9} & $c_{\mathrm{wr}}$  & write request     & value $1$ emits $R_{\mathrm{out}}$ and clears it\\
\cell{0xFA} & $c_{\mathrm{eot}}$ & end-of-tape status& set by a read at $h=|\tau|$, cleared by a read at $h<|\tau|$\\
\cell{0xFB} & $c_{\mathrm{rw}}$  & rewind request    & value $1$ sets $h=0$\\
\cell{0xFC} & $c_{\mathrm{rd}}$  & read request      & value $1$ loads $R_{\mathrm{in}}$ and advances $h$\\
\cell{0xFD} & $R_{\mathrm{in}}$  & input register    & written by the device\\
\cell{0xFE} & $R_{\mathrm{out}}$ & output register   & written by the program, cleared on emit\\
\cell{0xFF} & --                 & halt              & the machine halts when $p=255$\\
\hline
\end{tabular}
\caption{The memory-mapped device cells of the host, $A=8$.}
\label{tab:device}
\end{table}

\subsection{Cells}

A \emph{cell} below is a fixed small netlist; value bits are numbered from the least
significant. The cells are a subtractor, the branch and the memory. The branch needs two levels of
ternary units because the weights are ternary (Lemma~\ref{lem:branch} and
Remark~\ref{rem:ternarydepth}), and a read needs two levels of threshold units
whatever the weights (Lemma~\ref{lem:muxdepth}).

\begin{lemma}[Carry-lookahead subtractor]\label{lem:sub}
Let $x,y\in\{0,\dots,2^A-1\}$ have value bits $x_i,y_i$. Put
$p_i=\Hv(y_i-x_i)$ for $i<A-1$, and for $1\le k<A$
\[
\psi_{k,\ast}=\Hv\Bigl(\textstyle\sum_{i<k}p_i-k\Bigr),\qquad
\psi_{k,j}=\Hv\Bigl(y_j-x_j+\textstyle\sum_{j<i<k}p_i-(k-j)\Bigr)\quad(0\le j<k),
\]
$C_k=\Hv\bigl(\psi_{k,\ast}+\sum_{j<k}\psi_{k,j}-1\bigr)$ and
$h_{m,k}=\Hv(y_k-x_k+C_k+1-m)$ for $m=1,2,3$. Let $v_0=\mathrm{XOR}(x_0,y_0)$ and
$v_k=\Hv(h_{1,k}-h_{2,k}+h_{3,k}-1)$ for $1\le k<A$. Then
$\sum_kv_k2^k=(y-x)\bmod 2^A$, every unit is ternary, and the cell has depth five
over $x$ and $y$, whatever $A$.
\end{lemma}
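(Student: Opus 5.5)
The plan is to read the cell as a carry-lookahead adder for $y+\bar x+1$, where $\bar x$ is the bitwise complement of $x$ on $A$ bits. Since $y+\bar x+1=y+(2^A-1-x)+1\equiv y-x\pmod{2^A}$, it is enough to show two things: $C_k$ is the carry into position $k$ of this sum, and $v_k$ is the sum bit $y_k\oplus\neg x_k\oplus C_k$, with carry-in $C_0=1$. The proof then has three parts: identify each unit with a Boolean predicate, check the carry recursion, and count levels.

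\emph{The units.} At position $i$ the sum adds $y_i$ and $1-x_i$, which is $y_i-x_i+1\in\{0,1,2\}$.
\begin{itemize}
\item Position $i$ generates a carry exactly when $y_i-x_i=1$.
\item It propagates an incoming carry exactly when $y_i+(1-x_i)\ge1$, that is $y_i-x_i\ge0$. This is $p_i$.
\item $\psi_{k,\ast}$ fires exactly when all $p_i$ with $i<k$ are $1$: its sum of $k$ bits must reach $k$. This is the event that the initial carry $1$ propagates up to position $k$.
\item For $\psi_{k,j}$ the pre-activation is $(y_j-x_j)+\sum_{j<i<k}p_i-(k-j)$. The first term is at most $1$ and the sum has $k-j-1$ bits, so the pre-activation is $\ge0$ iff $y_j-x_j=1$ and every $p_i$ with $j<i<k$ is $1$. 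This is the event that position $j$ generates a carry which survives to $k$.
\item $C_k$ is the disjunction of $\psi_{k,\ast}$ and the $\psi_{k,j}$.
\end{itemize}

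\emph{The carry.} I will show by induction on $k$, using $c_{k+1}=g_k\vee(\pi_k\wedge c_k)$ with $c_0=1$, that $C_k=c_k$, the true carry into position $k$. Unrolling the recursion expresses $c_k$ as exactly this disjunction over the last generating position, or none. At position $j$, generate implies propagate, so the terms with a generator are the standard ones; this is where I would take care.

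\emph{The sum bit.} Put $s=y_k-x_k+C_k+1\in\{0,1,2,3\}$; this is $y_k+(1-x_k)+C_k$. Then $h_{m,k}=[s\ge m]$, and $h_{1,k}-h_{2,k}+h_{3,k}$ takes the values $0,1,0,1$ for $s=0,1,2,3$. So $v_k=s\bmod 2$, which is the sum bit. For $k=0$ we have $C_0=1$, so the bit is $y_0\oplus\neg x_0\oplus1=x_0\oplus y_0$, which is what the XOR of Example~\ref{ex:basis} gives. The indices used are consistent: $p_i$ is only needed for $i<k\le A-1$, so the range $i<A-1$ suffices. Summing the bits gives $\sum_kv_k2^k=(y-x)\bmod2^A$.

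\emph{Ternarity and depth.} By inspection every coefficient is $\pm1$ and every bias is an integer. The levels are:
\begin{itemize}
\item $p$ at depth $1$;
\item $\psi$ at depth $2$, since it reads $x,y$ and $p$;
\item $C$ at depth $3$;
\item $h$ at depth $4$, since it reads $x_k,y_k,C_k$;
\item $v_k$ at depth $5$;
\item $v_0$ at depth $2$.
\end{itemize}
None of these depends on $A$. The main obstacle is only bookkeeping: getting the thresholds of $\psi_{k,j}$ exactly right, in particular the boundary case $j=k-1$ with an empty sum, and matching the unrolled carry recursion term by term.
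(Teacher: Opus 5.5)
Your proposal is correct and is essentially the paper's proof: you rewrite $y-x$ as $y+\bar x+1$, read $p_i$ as $y_i\vee\neg x_i$ and the $\psi$ units as the terms of the unrolled carry recursion with $C_0=1$, read $v_k$ as the parity of $y_k+(1-x_k)+C_k$ through the $h_{m,k}$, and count levels. Two small remarks: the unrolling needs only distributivity, not the fact that generate implies propagate; and $\psi_{k,k-1}$ reads only $x_{k-1}$ and $y_{k-1}$, so it has depth one rather than two, which leaves the depth of the cell at five.
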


\begin{proof}
$y-x\equiv y+\bar x+1\pmod{2^A}$, where $\bar x$ complements the bits of $x$. The
unit $p_i$ is $y_i\vee\neg x_i$, $\psi_{k,j}$ the conjunction of $y_j$, $\neg x_j$ and
$p_{j+1},\dots,p_{k-1}$, and $\psi_{k,\ast}$ that of $p_0,\dots,p_{k-1}$, so $C_k$
unrolls the recursion $C_k=(y_{k-1}\wedge\neg x_{k-1})\vee(p_{k-1}\wedge C_{k-1})$,
$C_0=1$, of the carry into position $k$ of $y+\bar x+1$. The sum bit $k\ge1$ is the
parity of $y_k+(1-x_k)+C_k\in\{0,1,2,3\}$, and $h_{m,k}$ is $1$ exactly when this sum
is at least $m$, so $v_k$ reads the parity; the carry-in being $1$,
$v_0=x_0\oplus y_0$. The depths are one for the $p_i$, two for the terms, three for
the carries, four for the $h_{m,k}$ and five for the sum bits.
\end{proof}

\begin{lemma}[Branch unit]\label{lem:branch}
For $v=(v_0,\dots,v_{n-1})\in\B^n$ in two's complement with sign bit $v_{n-1}$, let
$z(v)=\Hv\bigl(-\sum_{i<n}v_i\bigr)$. Then $z(v)=1$ if and only if $v=0$, and the
branch condition $[v\le 0]$ equals $\Hv(v_{n-1}+z(v)-1)$.
\end{lemma}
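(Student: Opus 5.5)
The plan is to prove the two claims directly from Definition~\ref{def:unit} and the definition of $\Hv$, with a short case split on the sign bit.

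\emph{First claim.} The pre-activation of $z(v)$ is $-\sum_{i<n}v_i$. Since every $v_i\in\B$, the sum $\sum_i v_i$ is a nonnegative integer. It is $0$ exactly when every bit is $0$, that is, when $v=0$. So $-\sum_i v_i\ge0$ holds if and only if $v=0$, which gives $z(v)=1\iff v=0$. The unit has weights $-1$ and bias $0$, so it is ternary, and it is the conjunction of the negated literals of Example~\ref{ex:basis}.

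\emph{Second claim.} Read in two's complement, $v<0$ exactly when $v_{n-1}=1$, and $v=0$ exactly when $z(v)=1$. Hence $[v\le0]=v_{n-1}\vee z(v)$. The unit $\Hv(v_{n-1}+z(v)-1)$ is the disjunction of two literals, again by Example~\ref{ex:basis}: its pre-activation is nonnegative exactly when at least one of $v_{n-1}$ and $z(v)$ is $1$.

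I would also check the overlap. The two literals cannot both be $1$, since $v=0$ has $v_{n-1}=0$. This does not matter for the proof, because the disjunction is correct either way, but it confirms that the two cases partition $\{v\le0\}$.

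There is no real obstacle here. The only point that needs care is the two's complement convention: the sign bit alone decides negativity, including for the most negative value $v_{n-1}=1$ with all lower bits $0$. That value is negative and is caught by $v_{n-1}$, not by $z$.

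Finally, the decomposition explains the depth. The branch takes two levels of ternary units: the zero test $z(v)$ needs all $n$ bits with weight $-1$, and its result is then combined with $v_{n-1}$ at the next level. This is the fact Remark~\ref{rem:ternarydepth} refers to.
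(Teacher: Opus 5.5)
Your proof is correct and is essentially the paper's. The zero test holds because a sum of bits vanishes only when every bit does, and $[v\le0]=\mathrm{OR}(v_{n-1},z(v))$ because a two's complement value is nonpositive exactly when it is zero or its sign bit is set; your extra remarks on disjointness, the most negative value and the depth are accurate but not needed.
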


\begin{proof}
$-\sum_i v_i\ge 0$ holds exactly when every $v_i=0$. The value $v$ is at most $0$
exactly when $v=0$ or $v_{n-1}=1$, so $[v\le 0]=\mathrm{OR}(v_{n-1},z(v))$.
\end{proof}

\begin{remark}[Ternary weights cost one level at the branch]\label{rem:ternarydepth}
The branch predicate on $n$ bits in two's complement, $f_n(v)=1$ exactly when
$\sum_{i<n-1}2^iv_i-2^{n-1}v_{n-1}\le0$, is the single unit
$\Hv\bigl(2^{n-1}v_{n-1}-\sum_{i<n-1}2^iv_i\bigr)$ with integer weights, but no
ternary unit computes it when $n\ge3$. If $\Hv(\langle w,v\rangle+\theta)$ with
$w\in\Tern^n$ did, then $v=0$ would give $\theta\ge0$, each unit vector $e_i$ with
$i<n-1$ would give $w_i=-1$ and then $\theta=0$, and the vector of ones, whose value is
negative, would force $w_{n-1}\ge n-1\ge2$. So the cell of Lemma~\ref{lem:branch}
has the least ternary depth.
\end{remark}

\begin{lemma}[Address decoder and memory]\label{lem:mem}
For $j<2^A$ with bits $j_{A-1},\dots,j_0$ let
\[
\mathrm{sel}_j(\alpha)=\Hv\Bigl(\textstyle\sum_{i<A}(-1)^{1-j_i}\alpha_i-\popcount(j)\Bigr).
\]
Then $\mathrm{sel}_j(\alpha)=1$ if and only if $\alpha=j$. Writing $\mu_{j,t}$ for
bit $t$ of memory word $j$, the read multiplexer
$\mathrm{OR}_{j}\mathrm{AND}(\mu_{j,t},\mathrm{sel}_j(\alpha))$ returns bit $t$ of
$\mem[\alpha]$, and for a data word $\delta$ and a halt bit $h$ the cell
\[
\mu'_{j,t}=\mathrm{OR}\bigl(\mathrm{AND}(\mu_{j,t},\neg w_j),\ \mathrm{AND}(\delta_t,w_j)\bigr),
\qquad w_j=\mathrm{AND}(\mathrm{sel}_j(\alpha),\neg h),
\]
leaves every word unchanged when $h=1$ and otherwise writes $\delta$ into
$\mem[\alpha]$ and leaves the other words unchanged.
\end{lemma}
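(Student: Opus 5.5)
The proof splits into three parts: the selector, the read multiplexer and the write cell. Each reduces to a short computation on integer pre-activations.

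\emph{The selector.} Write the pre-activation of $\mathrm{sel}_j$ as a sum over bit positions. For each $i$ put $s_i=\alpha_i$ if $j_i=1$ and $s_i=-\alpha_i$ if $j_i=0$. Then the pre-activation is $\sum_i s_i-\popcount(j)$. Since $\sum_{i:j_i=1}\alpha_i\le\popcount(j)$ and $-\sum_{i:j_i=0}\alpha_i\le0$, it is at most $0$. Equality forces $\alpha_i=1$ wherever $j_i=1$ and $\alpha_i=0$ wherever $j_i=0$, that is, $\alpha=j$. So the unit fires exactly at $\alpha=j$; this is the conjunction of literals of Example~\ref{ex:basis}. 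Its weights are $\pm1$ and its bias an integer, so it is ternary.

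\emph{The read.} The selectors form a one-hot family: exactly one index $j$, namely $j=\alpha$, has $\mathrm{sel}_j(\alpha)=1$. Hence every conjunction $\mathrm{AND}(\mu_{j,t},\mathrm{sel}_j(\alpha))$ vanishes except the one with $j=\alpha$, which equals $\mu_{\alpha,t}$. The disjunction over $j$ therefore returns $\mu_{\alpha,t}$, bit $t$ of $\mem[\alpha]$. The wide disjunction is itself one ternary unit, $\Hv(\sum_j y_j-1)$, by Example~\ref{ex:basis}.

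\emph{The write.} First treat $h=1$: then $w_j=0$ for every $j$, so $\mu'_{j,t}=\mathrm{OR}(\mu_{j,t},0)=\mu_{j,t}$ and nothing changes. For $h=0$ we have $w_j=\mathrm{sel}_j(\alpha)$, which by the first part is $1$ only for $j=\alpha$. For $j\ne\alpha$ the cell gives $\mu'_{j,t}=\mu_{j,t}$. For $j=\alpha$ it gives $\mu'_{\alpha,t}=\delta_t$, since $\mathrm{AND}(\mu,\neg1)=0$ and $\mathrm{AND}(\delta_t,1)=\delta_t$. The negated literals $\neg w_j$ and $\neg h$ cost no unit, by Example~\ref{ex:basis}.

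The only step with any content is the characterization of $\mathrm{sel}_j$ through the upper bound on its pre-activation and the case of equality. Everything after it follows from the one-hot property by case analysis.
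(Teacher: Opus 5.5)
Your proof is correct and follows the paper's argument: the pre-activation of $\mathrm{sel}_j$ is at most $0$, with equality exactly when $\alpha=j$, and the read and write claims then follow from the definitions of $\mathrm{AND}$ and $\mathrm{OR}$ together with the fact that exactly one selector fires. Your explicit case split on $h$ spells out what the paper leaves to those definitions.
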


\begin{proof}
The sum $\sum_i(-1)^{1-j_i}\alpha_i$ equals
$|\{i:j_i=1,\alpha_i=1\}|-|\{i:j_i=0,\alpha_i=1\}|$, which is at most
$\popcount(j)$, with equality exactly when $\alpha=j$. The rest follows from the
definitions of $\mathrm{AND}$ and $\mathrm{OR}$, the negated literals being
carried by weights $-1$, and from the fact that exactly one $\mathrm{sel}_j$ is
$1$.
\end{proof}

\begin{lemma}[A read needs two levels]\label{lem:muxdepth}
No threshold unit, whatever its real weights, computes the selection
$(\alpha,\mu)\mapsto\mu_\alpha$ for $\alpha\in\B^A$ and $\mu\in\B^{2^A}$, $A\ge1$,
or computes $\bigvee_j(\mu_j\wedge P_j)$ on the inputs where $P$ is one-hot. So the
read multiplexer, two levels deep over the decode lines, has the least possible
depth.
\end{lemma}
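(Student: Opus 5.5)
The plan is to reduce both claims to the non-separability of a parity-like function, which we exhibit as a restriction. First, the selection for $A\ge1$. Fix all address bits except $\alpha_0$ to $0$, and fix all memory bits except $\mu_0$ and $\mu_1$ to $0$. On the four inputs $(\alpha_0,\mu_0,\mu_1)$ with $\mu_0\ne\mu_1$ the selection computes $\mu_{\alpha_0}$. Restricting a threshold unit by fixing inputs gives a threshold unit of the remaining inputs, with the fixed contributions absorbed into the bias. It is therefore enough to show that no unit $\Hv(a\alpha_0+b\mu_0+c\mu_1+\theta)$ agrees with $\mu_{\alpha_0}$ on these points.

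Now restrict further to $\mu_1=1-\mu_0$. Substituting, the pre-activation becomes $a\alpha_0+(b-c)\mu_0+(c+\theta)$, still affine in $(\alpha_0,\mu_0)$. On these inputs the target is $\mu_0$ when $\alpha_0=0$ and $1-\mu_0$ when $\alpha_0=1$, that is, $\neg(\alpha_0\oplus\mu_0)$. This is not linearly separable by the standard argument: sum the strict inequalities at the two true points and at the two false points, and the two sums coincide as affine expressions, giving a contradiction. This is the fact cited from~\cite{Muroga1971} in Example~\ref{ex:basis}, and I would write out this four-point argument explicitly to cover real weights.

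For the second claim, $\bigvee_j(\mu_j\wedge P_j)$ on one-hot $P$, take $A\ge1$ so there are at least two lines $P_0,P_1$, and set the other $\mu_j$ to $0$. Restrict to $P=(P_0,1-P_0,0,\dots)$, which is one-hot. The function becomes $(\mu_0\wedge P_0)\vee(\mu_1\wedge\neg P_0)$, again a selection by $P_0$. Each remaining input still enters the pre-activation affinely after substituting $P_1=1-P_0$, so the same restriction to $\mu_1=1-\mu_0$ yields $\neg(P_0\oplus\mu_0)$, and the argument of the previous step applies.

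Finally, the depth statement. Lemma~\ref{lem:mem} shows that the read multiplexer, a level of conjunctions followed by a disjunction, computes the selection from the decode lines. By the second claim, no single unit over the decode lines and memory bits does, so two levels are least. The only delicate point is that restriction to a sub-cube, or to an affine substitution like $P_1=1-P_0$, preserves being a single threshold unit; this holds because the pre-activation stays affine. I expect no real obstacle beyond stating this cleanly.
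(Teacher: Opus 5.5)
Your proposal is correct and is essentially the paper's proof: the same restriction to $(\alpha_0,\mu_0,\mu_1)$ (resp.\ $(P_0,P_1,\mu_0,\mu_1)$), the same four points (those with $\mu_1=1-\mu_0$ and $P_1=1-P_0$), and the same summation of the inequalities at the true and at the false points. Your affine substitution down to two variables only repackages that argument. One small slip: in the selection case the restricted function is $\alpha_0\oplus\mu_0$ (since $\alpha_0=0$ selects $\mu_0$), not its negation, which is harmless because the four-point argument is the same for both.
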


\begin{proof}
Fixing inputs of a threshold unit leaves a threshold unit, so it suffices to fix
all but $\alpha_0$, $\mu_0$ and $\mu_1$, with the other address bits $0$. The
selection is $1$ at $(\alpha_0,\mu_0,\mu_1)=(0,1,0)$ and $(1,0,1)$ and $0$ at
$(0,0,1)$ and $(1,1,0)$. If $\Hv(\langle w,z\rangle+\theta)$ computed it, summing the
inequalities at the two points where it is $1$ would give
$\langle w,(1,1,1)\rangle+2\theta\ge0$, and at the two points where it is $0$ the
strict opposite, a contradiction. The one-hot case is the same argument on
$(P_0,P_1,\mu_0,\mu_1)$ at $(1,0,1,0)$ and $(0,1,0,1)$ against $(1,0,0,1)$ and
$(0,1,1,0)$.
\end{proof}

\subsection{The host family}

\begin{construction}[The host family]\label{con:host}
For $A\ge3$ the netlist $H_A$ consists of the blocks of Table~\ref{tab:blocks}, in
that order, each unit after its predecessors. Its state is, in this order, the $A$
counter bits, a halt bit and the $A\cdot2^A$ memory bits word by word, the counter
and each word most significant bit first, so $n=A+1+A2^A$. For $A=8$ the counter
occupies coordinates $0$ to $7$, the halt bit $8$, and bit $k$ of word $j$, $k=0$
the most significant, coordinate $9+8j+k$. A step is computed in four parts.
\begin{enumerate}[label=(\roman*),leftmargin=2em]
\item The decode lines $P_j$ of $p$ fetch the three operands by reindexing,
$\mem[p+o]_t=\bigvee_j(\mu_{j+o,t}\wedge P_j)$, and give each bit of $p+3$ as the
disjunction of the lines $P_j$ for which that bit of $j+3\bmod 2^A$ is $1$.
\item The decode lines of $a$ and $b$ read $x=\mem[a]$ and $y=\mem[b]$ and give the write
enables.
\item Lemma~\ref{lem:sub} gives $v=y-x$. The branch is the single unit
$\Hv\bigl(h_{1,A-1}-h_{2,A-1}+h_{3,A-1}+e-1\bigr)$, where $e$ is the conjunction of
the bitwise equivalences of $x$ and $y$, since $v\le0$ exactly when its sign bit is
$1$ or $x=y$.
\item With $q=p+3$, the next counter bit is
$(h\wedge p_i)\vee(\neg h\wedge\mathrm{leq}\wedge c_i)\vee(\neg h\wedge\neg\mathrm{leq}\wedge q_i)$
and the next halt bit is
$h\vee(\neg h\wedge\mathrm{leq}\wedge[c=2^A-1])\vee(\neg h\wedge\neg\mathrm{leq}\wedge[p=2^A-4])$.
The next memory is the write cell of Lemma~\ref{lem:mem} with address $b$ and data
$v$.
\end{enumerate}
The host is $\Nhost=H_8$.
\end{construction}

\begin{table}[ht]
\centering\small
\begin{tabular}{@{}lrcl@{}}
\hline
Block & Units & Layer & Computes\\
\hline
$P_j$ & $2^A$ & 1 & decode lines of $p$\\
$HP_i$ & $A$ & 1 & $h\wedge p_i$\\
$FA_{o,t,j}$ & $3A2^A$ & 2 & $\mu_{j+o,t}\wedge P_j$\\
$Q_i$ & $A$ & 2 & bit $i$ of $p+3$\\
$F_{o,t}$ & $3A$ & 3 & the operands $a$, $b$, $c$\\
$PA_j$, $PB_j$ & $2\cdot2^A$ & 4 & decode lines of $a$ and $b$\\
$AC$ & $1$ & 4 & $[c=2^A-1]$\\
$W_j$ & $2^A$ & 5 & write enables\\
$RX_{t,j}$, $RY_{t,j}$ & $2A2^A$ & 5 & read conjunctions\\
$KP_{j,t}$ & $A2^A$ & 6 & $\mu_{j,t}\wedge\neg W_j$\\
$X_t$, $Y_t$ & $2A$ & 6 & $x=\mem[a]$, $y=\mem[b]$\\
subtractor & $\tfrac12(A-1)(A+14)+3$ & 7--11 & $v=y-x$ (Lemma~\ref{lem:sub})\\
equality & $3A+1$ & 7--9 & $e=[x=y]$\\
$\mathrm{LEQ}$ & $1$ & 11 & $[v\le0]$\\
$TA_i$, $TB_i$, $NP_i$ & $3A$ & 12--13 & next counter\\
$HA$, $HB$, $NH$ & $3$ & 12--13 & next halt bit\\
$WR_{j,t}$, $NM_{j,t}$ & $2A2^A$ & 12--13 & next memory (9--10 for $t=0$)\\
\hline
\end{tabular}
\caption{The blocks of $H_A$ in the order of the netlist. The layer is the depth
of the block's units.}
\label{tab:blocks}
\end{table}

\begin{proposition}[Constant-depth host]\label{prop:host}
For every $A\ge3$, one step of the netlist $H_A$, from every state, is one step of
$\mathrm{SUBLEQ}_A$ with the halt latch. Its weights are ternary, its biases are
integers in $[-A,1]$, and its step has depth $13$. For $A\ge8$, with the device on
its seven top words, it realizes the transducer of Definitions~\ref{def:transducer}
and~\ref{def:subleq}. The host $\Nhost=H_8$ has $17{,}598$ units.
\end{proposition}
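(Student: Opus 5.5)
The proof goes block by block through Table~\ref{tab:blocks}, in netlist order, checking each unit against the lemmas of this section. It then adds up the unit counts and the depths.

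\emph{Correctness.} Lemma~\ref{lem:mem} gives that the $P_j$ are exactly the one-hot decode of $p$, with biases $-\popcount(j)\in[-A,0]$. Since exactly one $P_j$ fires, the reindexed disjunctions $F_{o,t}$ return bit $t$ of $\mem[p+o]$, i.e.\ $a$, $b$, $c$. Each $Q_i$ is a disjunction of lines of a one-hot vector, so it returns bit $i$ of $p+3$. The same lemma gives that $PA_j$, $PB_j$ decode $a$ and $b$, that $X_t$, $Y_t$ read $x=\mem[a]$ and $y=\mem[b]$, and that the write cell with enable $W_j=\mathrm{AND}(PB_j,\neg h)$ writes $v$ into $\mem[b]$ and only there, or nowhere when $h=1$. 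Lemma~\ref{lem:sub} gives $v=(y-x)\bmod 2^A$ at depth five over $x,y$.

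For the branch, one unit suffices. Its pre-activation $h_{1,A-1}-h_{2,A-1}+h_{3,A-1}+e-1$ is nonnegative iff the sign bit $v_{A-1}=1$ or $e=1$, because the alternating sum is already the parity bit $v_{A-1}$. And $e=[x=y]$ coincides with $[v=0]$ modulo $2^A$, so Lemma~\ref{lem:branch} gives $\mathrm{LEQ}=[v\le0]$. For the counter and halt, I would check the three disjuncts case by case on $h$ and $\mathrm{leq}$. The halt latch fires exactly when the produced counter is $2^A-1$: either $c=2^A-1$ on a branch, or $p+3=2^A-1$, i.e.\ $p=2^A-4$. When $h=1$ the whole state is fixed. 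All weights are $\pm1$ by the construction of Example~\ref{ex:basis}. The biases I would read off: decoders $\ge -A$, the subtractor's $-(k-j)\ge -A$, the $h_{m,k}$ in $\{0,-1,-2\}$, disjunctions at most $1$, and the three-way conjunctions at $-2$ or $-3$.

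\emph{Depth.} I propagate depths: $P$ at depth $1$, then $FA$, $F$, the decode of $a$, $b$ at $4$, reads at $5$–$6$, the subtractor over $X,Y$ at $7$–$11$, $\mathrm{LEQ}$ at $11$ (fed by $h_{m,A-1}$ at depth $10$ and $e$ at $\le 9$), and the next-state disjunctions at $13$. So the depth is $13$, independent of $A$.

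\emph{Transducer and size.} For $A\ge8$ the network treats the top seven words as ordinary memory. The transducer of Definition~\ref{def:transducer} acts on them between steps, so the composite is exactly $\mathrm{SUBLEQ}_A$ with devices per Definition~\ref{def:subleq}; this is only a matter of naming coordinates.

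The count sums the table with $A=8$: $256+8+6144+8+24+512+1+256+4096+2048+16+\tfrac12\cdot7\cdot22+3+25+1+24+3+4096$. That gives $17{,}598$, which I would check with the arithmetic of the $WR/NM$ row ($2A2^A=4096$) and the subtractor ($80$).

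The main obstacle is the exact bookkeeping of the subtractor's unit count and of which units the table merges (for example whether the $t=0$ write bits share units). Matching $17{,}598$ exactly depends on those conventions, which in the end are certified by the Rocq development rather than by hand.
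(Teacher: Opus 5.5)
Your proposal is correct and follows the paper's argument: you check each block against Lemmas~\ref{lem:mem} and~\ref{lem:sub} and the identities of Construction~\ref{con:host}, check the next counter and halt bit by cases on $h$ and $\mathrm{leq}$ (a halted state is fixed, and $p+3=2^A-1$ exactly when $p=2^A-4$), propagate depths to layer $13$, and count units as the sum over Table~\ref{tab:blocks}. Your closing hedge is unnecessary, since your sum already gives $17{,}598$ exactly: the write units for $t=0$ are separate units counted in the $2A2^A$ of their row and differ only in layer. The one slip is that $\neg h\wedge\neg\mathrm{leq}\wedge q_i$ has bias $-1$ rather than $-2$ or $-3$, which still lies in $[-A,1]$.
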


\begin{proof}
Each block computes what Table~\ref{tab:blocks} states: the decoders, the reads and
the write by Lemma~\ref{lem:mem}, the subtraction by Lemma~\ref{lem:sub}, and the
rest by the identities of Construction~\ref{con:host}. The three terms of the next
counter are mutually exclusive. The next counter is $2^A-1$ exactly when one of the
two branch cases leads there or the machine is already halted, and $p+3=2^A-1$
exactly when $p=2^A-4$. When the halt bit is $1$, the next counter is $p$, the write
enables are $0$ and the halt bit stays $1$, so a halted state is a fixed point. The
layers follow from those of the predecessors, and the longest paths end at the next
counter, the next halt bit and the next memory in layer $13$, whatever $A$.
\end{proof}

Appendix~\ref{app:sizes} gives the numbers of units, edges and parameters of $H_A$
for every $A$, the sizes of $\Lev(\Nhost)$, and the length of its serialization.

\begin{remark}[Depth $11$]\label{rem:fold}
Writing every fetch conjunction $FA_{o,t,j}$, every read conjunction $RX_{t,j}$,
$RY_{t,j}$ and every write enable $W_j$ as a single unit that reads the bits of its
address directly, of fan-in $A+1$, removes the decode lines of $a$ and $b$ and one
level each from the fetch and from the operand reads. The folded netlist computes
the step of $\mathrm{SUBLEQ}_A$ from every state with depth $11$ for every $A\ge3$,
each of its units computing the value of the unit of $H_A$ in its place. For $A=8$ it
has $17{,}086$ units and $120{,}508$ edges, against $17{,}598$ and $51{,}132$ for
$\Nhost$.
\end{remark}

\section{Programs for the host}\label{sec:univ}

This section programs the host. Section~\ref{sec:instances} says what it means for
the host, with a memory image and a tape, to reproduce itself. A program $P$ of $21$
instructions decodes a run-length recipe into the byte string it describes, which
makes the host a universal constructor of byte strings
(Section~\ref{sec:univconst}). A program $P^\ast$ of $27$ instructions also copies
the recipe, so that the host emits a serialization of itself from which it is
rebuilt (Section~\ref{sec:selfrep}). Both programs are stated for $\Nhost=H_8$ and
run on $H_A$ at every word width from $8$ on (Remark~\ref{rem:wider}).

\subsection{Instances and reproduction}\label{sec:instances}

An instance pairs a network with an initial memory and a tape, and it reproduces
itself when its output rebuilds it. The \emph{serialization} $\sigma(N)$ of a
clocked netlist $N$ is a byte string from which $N$, the order of its units
included, is recovered; Appendix~\ref{app:sigma} specifies it. Netlists with the
same serialization differ only by renaming of units, and identifying such netlists,
$\sigma$ is injective.

\begin{definition}[Instances, framing, instantiation]\label{def:instance}
Let $N$ be a clocked netlist realizing the SUBLEQ transducer with $A=8$. For
$u\in\Sigma^*$ let $\field{u}=\lambda(|u|)\cat u$, where $\lambda(k)$ is the
eight-byte little-endian encoding of $k$. An \emph{instance} is a triple
$\Omega=(N,m,\tau)$ with initial memory image $m\in\Sigma^{256}$ and tape
$\tau\in\Sigma^*$. Its initial state $x_0(m)\in\B^n$ has memory contents $m$,
program counter $0$ and all other coordinates $0$, the counter and every memory byte
stored most significant bit first at the coordinates the state layout names. Define
\[
\ser(\Omega)=\field{\sigma(N)}\cat\field{m}\cat\tau,
\qquad
\Out(\Omega)=\Out(x_0(m),\tau)\ \text{for the transducer } \Lev(N),
\]
and let $\inst$ be the partial map
$\inst(\field{a}\cat\field{b}\cat c)=(\sigma^{-1}(a),b,c)$, defined when $a$ lies
in the image of $\sigma$ and $|b|=256$.
\end{definition}

\begin{definition}[Self-description and self-reproduction]\label{def:selfrep}
An instance $\Omega=(N,m,\tau)$ is \emph{self-describing} if it halts and
$\Out(\Omega)=\sigma(N)$. It is \emph{exactly self-reproducing} if it halts and
$\Out(\Omega)=\ser(\Omega)$.
\end{definition}

\begin{lemma}\label{lem:generations}
If $\Omega$ is exactly self-reproducing, then $\inst(\Out(\Omega))=\Omega$.
Consequently the sequence $\Omega_0=\Omega$, $\Omega_{k+1}=\inst(\Out(\Omega_k))$
is constant, and every $\Omega_k$ is exactly self-reproducing.
\end{lemma}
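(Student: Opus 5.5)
The plan is to unfold the definitions and use two facts: the framing $\field{\cdot}$ can be parsed back, and $\sigma$ is injective up to renaming. If $\Omega=(N,m,\tau)$ is exactly self-reproducing, it halts and $\Out(\Omega)=\ser(\Omega)=\field{\sigma(N)}\cat\field{m}\cat\tau$.

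First I show that $\inst$ is well defined on strings of the form $\field{a}\cat\field{b}\cat c$, that is, that such a string determines $a$, $b$ and $c$. The first eight bytes give $\lambda(|a|)$. Since $\lambda$ is injective on the lengths that occur, this fixes $|a|$, and the next $|a|$ bytes are $a$. The same reading then fixes $b$, and $c$ is what remains. So the decomposition is unique, and $\inst$ is a function of the string.

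Applying this to $\Out(\Omega)$ gives $a=\sigma(N)$, $b=m$ and $c=\tau$. Here $a$ lies in the image of $\sigma$, and $|b|=256$ because $m\in\Sigma^{256}$. So $\inst(\Out(\Omega))=(\sigma^{-1}(\sigma(N)),m,\tau)$. Under the identification of netlists that differ only by renaming of units, $\sigma$ is injective, so $\sigma^{-1}(\sigma(N))=N$, and therefore $\inst(\Out(\Omega))=\Omega$.

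The sequence is then handled by induction on $k$. The hypothesis is that $\Omega_k=\Omega$; this holds for $k=0$. Then $\Omega_{k+1}=\inst(\Out(\Omega_k))=\inst(\Out(\Omega))=\Omega$, so every $\Omega_k$ equals $\Omega$ and is exactly self-reproducing.

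The main obstacle is the caveat about renaming, and the ill-posedness it could cause. $\Out(\Omega)$ depends on $N$ only through $\Lev(N)$, so I must check that two renamings of $N$ give the same LTTM, or at least the same transducer behaviour. This holds because levelization, as in the proof of Lemma~\ref{lem:lev}, depends only on the order of the units and their weights and not on their names. That order is part of what $\sigma$ records (Appendix~\ref{app:sigma}). I would also confirm from the serialization format that $\lambda$ covers every length of $\sigma(N)$ that arises, so that the framing parse is never ambiguous.
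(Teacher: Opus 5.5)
Your proposal is correct and follows the paper's argument: the length fields make the decomposition $\field{a}\cat\field{b}\cat c$ unique, and injectivity of $\sigma$ up to renaming gives $\inst\circ\ser=\mathrm{id}$ on instances, with the statement about the sequence following by induction on $k$. Your added check, that $\Out(\Omega)$ depends on $N$ only through $\Lev(N)$ and that renaming leaves $\Lev(N)$ unchanged, is a well-posedness point the paper leaves implicit; it does not change the argument.
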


\begin{proof}
The length fields make the decomposition $\field{a}\cat\field{b}\cat c$ unique,
and $\sigma$ is injective, so $\inst\circ\ser$ is the identity on instances. The
second statement follows by induction on $k$.
\end{proof}

The maps $\inst$ and $\Lev$, together with the device of
Definition~\ref{def:transducer}, form the \emph{instantiation map}: the network
emits a description, and $\inst$, $\sigma^{-1}$ and $\Lev$ turn it into the next
running network. Section~\ref{sec:storedhost} removes this map: held as records in
the ring machine, the host carries out the same step by the dynamics, as the
organism does.

\subsection{Universal construction}\label{sec:univconst}

A universal constructor emits any prescribed object from a description of it. Here
the objects are byte strings and the descriptions are run-length recipes. Every
string has a recipe barely longer than itself, the encoder used later attains that
bound, and the program $P$ turns any recipe into the string it describes. For a byte
$x$ we write $\nu(x)=(-x)\bmod 256$.

\begin{definition}[Recipe language]\label{def:recipe}
The set $\mathcal{R}\subseteq\Sigma^*$ of \emph{recipes} is generated by the
grammar
\[
\mathcal{R}\ ::=\ T^*\,\texttt{END},\qquad
T\ ::=\ \mathrm{LIT}\mid\mathrm{REP},\qquad \texttt{END}=\cell{0x00},
\]
where a literal token $\mathrm{LIT}$ is a byte $n$ with $1\le n\le 127$ followed
by $n$ bytes, and a repeat token $\mathrm{REP}$ is a byte $t$ with
$129\le t\le 255$ followed by one byte. The tag $128$ occurs in no recipe. The
decoding map $\delta\colon\mathcal{R}\to\Sigma^*$ is defined token by token by
\[
\delta(\mathrm{LIT}: n\,x_1\cdots x_n)=\nu(x_1)\cdots\nu(x_n),\qquad
\delta(\mathrm{REP}: t\,x)=\nu(x)^{\,256-t},\qquad \delta(\texttt{END})=\varepsilon,
\]
and extended by concatenation. Every payload byte is stored negated, so that a
single subtraction from a zero register recovers it.
\end{definition}

\begin{lemma}[Completeness of the recipe language]\label{lem:complete}
For every $f\in\Sigma^*$ there is $\tau\in\mathcal{R}$ with $\delta(\tau)=f$ and
$|\tau|\le |f|+\lceil |f|/127\rceil+1$.
\end{lemma}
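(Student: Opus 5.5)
The recipe is built explicitly, using only literal tokens, and its length is counted. Write $f=f_1\cdots f_m$ with $m=|f|$, and cut $f$ into $k=\lceil m/127\rceil$ consecutive blocks $B_1,\dots,B_k$. Every block except possibly the last has exactly $127$ bytes, and every block has at least one byte; when $m=0$ there are no blocks. For a block $B=b_1\cdots b_n$ with $1\le n\le127$, the token is the literal $n\,\nu(b_1)\cdots\nu(b_n)$. Set
\[
\tau=\mathrm{tok}(B_1)\cdots\mathrm{tok}(B_k)\,\texttt{END}.
\]

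The first check is that $\tau\in\mathcal R$. Each token starts with a byte $n\in[1,127]$ and is followed by $n$ arbitrary bytes, so it is a $\mathrm{LIT}$ token. The grammar $T^*\,\texttt{END}$ then accepts $\tau$. Only the concatenation matters here: the grammar fixes where each token ends, so no separator is needed between tokens.

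Next comes $\delta(\tau)=f$. The map $\nu$ is an involution on $\Sigma$, since $\nu(\nu(x))=x$ modulo $256$. So each literal token decodes to $\nu(\nu(b_1))\cdots\nu(\nu(b_n))=B$. Because $\delta$ extends by concatenation and $\delta(\texttt{END})=\varepsilon$, we get $\delta(\tau)=B_1\cdots B_k=f$.

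Finally, each token contributes $|B_i|+1$ bytes, and $\texttt{END}$ contributes one more. Hence $|\tau|=m+k+1=|f|+\lceil|f|/127\rceil+1$, which is the bound with equality.

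The only point needing care is the decomposition. The plan fixes the block sizes so that every block lies in $[1,127]$ and the number of blocks is exactly $\lceil m/127\rceil$. Concretely, $m=127q+r$ with $0\le r<127$ gives $q$ full blocks plus one block of $r$ bytes when $r>0$. The case $m=0$ is covered by the empty product of tokens, which gives $\tau=\texttt{END}$ of length $1$. Repeat tokens are not needed for the bound. They would only shorten recipes with long runs, which the lemma does not require.
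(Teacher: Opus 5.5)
Your proof is correct and is the paper's own argument: you cut $f$ into $\lceil|f|/127\rceil$ blocks of at most $127$ bytes, emit one literal token per block with negated payload, close with \texttt{END}, get $\delta(\tau)=f$ because $\nu$ is an involution, and obtain the length bound with equality. Your explicit treatment of $|f|=0$ and of the block sizes only spells out what the paper leaves implicit.
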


\begin{proof}
Cut $f$ into $q=\lceil |f|/127\rceil$ blocks of length $127$, the last possibly
shorter, and let $\tau$ be the literal tokens of the blocks followed by
\texttt{END}. Each block contributes one tag byte and its own length, so
$|\tau|=|f|+q+1$, and $\delta(\tau)=f$ because $\nu$ is an involution.
\end{proof}

\begin{proposition}[Optimality of the bound]\label{prop:optimal}
If $f\in\Sigma^*$ has no two equal adjacent bytes, every $\tau\in\mathcal{R}$ with
$\delta(\tau)=f$ satisfies $|\tau|\ge |f|+\lceil |f|/127\rceil+1$.
\end{proposition}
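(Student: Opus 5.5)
Because $\delta$ is defined token by token and extended by concatenation, a recipe $\tau=T_1\cdots T_r\,\texttt{END}$ with $\delta(\tau)=f$ cuts $f$ into consecutive segments $f=\delta(T_1)\cdots\delta(T_r)$. I would bound the cost of each token against the number of bytes of $f$ it produces. The total $|\tau|$ is then $1$ for \texttt{END} plus the sum of the token lengths, and the goal is to show that this sum is at least $|f|+\lceil |f|/127\rceil$.

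I would first dispose of repeat tokens. A repeat token $t\,x$ produces $\nu(x)^{256-t}$ with $256-t\in[1,127]$. A run of length at least $2$ would place two equal adjacent bytes inside $f$, since the segments are contiguous in $f$. So under the hypothesis every repeat token produces exactly one byte ($t=255$) at a cost of two bytes. A literal token $n\,x_1\cdots x_n$ produces $n\le127$ bytes at a cost of $n+1$. Each token therefore costs exactly one more than the number of bytes it produces, and produces between $1$ and $127$ of them.

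Summing, $|\tau|=1+\sum_i(|\delta(T_i)|+1)=1+|f|+r$. Since each token produces at most $127$ bytes, $127r\ge|f|$, so $r\ge\lceil|f|/127\rceil$, which gives the bound. The case $f=\varepsilon$ is covered because then $r=0$ and $|\tau|\ge1$. The only step needing care is the repeat-token observation: I must use that a token's output is a contiguous factor of $f$, so a repeat of length at least $2$ contradicts the absence of equal adjacent bytes. Everything else is counting. It is also worth noting that no token produces the empty string, since $n\ge1$ and $256-t\ge1$; the argument does not actually need this, because a token that produced nothing would only add cost.
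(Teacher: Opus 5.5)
Your proof is correct and is essentially the paper's argument. The tokens cover consecutive blocks of $f$, so the hypothesis forces every repeat token to cover a single byte. Each token then costs exactly one byte more than it covers and covers at most $127$, which gives $|\tau|=|f|+k+1$ with $k\ge\lceil|f|/127\rceil$.
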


\begin{proof}
Write $\tau=t_1\cdots t_k\texttt{END}$. The tokens cover consecutive blocks of $f$,
which has no two equal adjacent bytes, so a repeat token covers one byte with two
and a literal token at most $127$ with one more: $|\tau|=|f|+k+1$ and
$k\ge\lceil|f|/127\rceil$.
\end{proof}

\begin{lemma}[The encoder]\label{lem:describe}
Let $\describe(f)$ scan $f$ from its first position. At position $i$, let $\ell$ be
the length of the run of bytes equal to $f_i$ that starts at $i$, capped at $127$.
\begin{enumerate}[label=(\roman*)]
\item If $\ell\ge4$, emit the repeat token $(256-\ell)\,\nu(f_i)$ and move to
$i+\ell$.
\item Otherwise let $k$ be the least of $|f|$, $i+127$ and the first position after
$i$ at which four equal bytes begin; emit the literal token covering
$f_i,\dots,f_{k-1}$ and move to $k$.
\end{enumerate}
After the last byte, emit \texttt{END}. Then $\delta(\describe(f))=f$ and
$|\describe(f)|\le|f|+\lceil|f|/127\rceil+1$ for every $f\in\Sigma^*$, with
equality exactly when $f$ contains no four equal consecutive bytes.
\end{lemma}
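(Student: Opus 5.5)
The plan is to prove the two claims separately: first that $\describe$ terminates and decodes correctly, then the length bound, by grouping the emitted tokens into maximal runs of literal tokens separated by repeat tokens.

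\emph{Termination and correctness.} Every token advances the position by at least one. A repeat token advances by $\ell\ge4$. A literal token advances to $k>i$, since $k$ is the least of $|f|$, $i+127$ and a position after $i$. A literal token covers at most $127$ bytes, so its tag $k-i$ lies in $[1,127]$. A repeat token has tag $256-\ell\in[129,252]$. So $\describe(f)$ is a recipe. Because $\nu$ is an involution, $\delta$ gives back $f_i,\dots,f_{k-1}$ from a literal token and $f_i^{\,\ell}$ from a repeat token. Concatenating over the tokens, which cover $f$ consecutively, gives $\delta(\describe(f))=f$.

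\emph{The bound.} Let $r$ be the number of repeat tokens, covering $R_1,\dots,R_r\ge4$ bytes. Let the literal tokens fall into $s\le r+1$ maximal consecutive stretches covering $M_1,\dots,M_s$ bytes. The key observation concerns where a literal token can end. It stops at $i+127$, at $|f|$, or at a position where four equal bytes begin. In the last case the next token has $\ell\ge4$ and so is a repeat token. Hence inside a stretch every literal token but the last is full, and the stretch uses exactly $\lceil M_u/127\rceil$ tokens. Counting bytes,
\[
|\describe(f)|=\sum_u\bigl(M_u+\lceil M_u/127\rceil\bigr)+2r+1 .
\]
I will bound the two sums in turn. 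Subadditivity of the ceiling gives $\sum_u\lceil M_u/127\rceil\le\lceil\sum_u M_u/127\rceil+s-1$. Since each $R_j\ge4$, we have $2r\le\sum_j R_j-2r$. Together with $\sum_u M_u+\sum_j R_j=|f|$ this yields
\[
|\describe(f)|\le|f|+\lceil|f|/127\rceil+1+(s-1-2r).
\]
Since $s\le r+1$, the last term is at most $-r\le0$, which proves the bound.

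\emph{Equality.} If $r\ge1$, then $s-1-2r\le-r<0$ and the inequality is strict. If $r=0$, there is a single stretch with $M=|f|$, or none when $f$ is empty, and equality holds. So it remains to show that $r\ge1$ exactly when $f$ contains four equal consecutive bytes; this is the step that needs care.

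One direction is immediate: a repeat token covers $\ell\ge4$ equal consecutive bytes.

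For the converse, let four equal bytes begin at some position $j$, and consider the token whose span contains $j$. If it is a repeat token, we are done. If it is a literal token starting at $i$, there are two cases. When $i=j$, that token would have $\ell\ge4$, a contradiction. When $i<j$, the token stops at some $k\le j$. If it stops at a four-run start, the next token is a repeat. Otherwise it stops by the cap at $i+127\le j$. In that case I repeat the argument from the next token start, which is strictly closer to $j$. Since token starts increase, this reaches a repeat token before or at $j$. The edge case $f=\varepsilon$ gives $\describe(f)=\texttt{END}$ of length $1$, matching equality.
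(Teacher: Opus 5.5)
Your proof is correct and is essentially the paper's. Both rest on two facts: a literal token that is neither full nor last is followed by a repeat token, and no run of four equal bytes begins inside a literal token. Your stretches with ceiling subadditivity and $s\le r+1$ are another bookkeeping for the number of literal tokens, which the paper bounds by counting those that cover $127$ bytes.

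Two small repairs are needed. First, $\sum_u\lceil M_u/127\rceil\le\lceil\sum_uM_u/127\rceil+s-1$ is false for $s=0$; for $f=\varepsilon$ your displayed bound even reads $1\le0$. For $s=0$ and $f\neq\varepsilon$, use $\lceil|f|/127\rceil\ge1$ instead. Second, in the converse a literal token starting at $i<j$ ends at some $k\le j$, so it cannot contain $j$, and no iteration is needed.
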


\begin{proof}
The tokens partition $f$ and each decodes to the bytes it covers, so
$\delta(\describe(f))=f$. Of its $\nlit$ literal and $\nrep$ repeat tokens, a literal
token occupies one byte more than it covers and a repeat token two bytes for at
least four, so $|\describe(f)|\le|f|+\nlit-2\nrep+1$. No run of four equal bytes
begins inside a literal token, since one would have been emitted as a repeat token
and a literal token ends where such a run begins; so $\nrep=0$ exactly when $f$ has
no four equal consecutive bytes, and then every literal token but the last covers
$127$ bytes and the bound holds with equality. Otherwise let $n_{127}$ literal tokens
cover $127$ bytes. A literal token of fewer bytes that is not the last ends where a
run of four begins, so a repeat token follows it, and $\nlit-n_{127}\le\nrep+1$; the
full literal tokens and the repeat tokens cover disjoint bytes, so
$127n_{127}+4\nrep\le|f|$. Hence $\nlit-2\nrep\le n_{127}\le(|f|-4)/127<\lceil
|f|/127\rceil$, and the bound is strict.
\end{proof}

\begin{construction}[The constructor $P$]\label{con:P}
Let $\cell{Z}=\cell{0xF0}$, $\cell{ONE}=\cell{0xF1}$, $\cell{T1}=\cell{0xF2}$,
$\cell{T2}=\cell{0xF3}$ and $\cell{NEG1}=\cell{0xF4}$ be memory cells holding
initially $0$, $1$, $0$, $0$ and $255$. The program $P$ consists of the $21$
instructions of Table~\ref{tab:P}, occupying the bytes $\cell{0x00}$ to
$\cell{0x3E}$. The memory image $m_P$ contains $P$ and these constants and is
zero elsewhere, apart from the cell $\cell{EOK}=\cell{0xF5}$ holding $1$, which
$P$ never reads.
\end{construction}

Since the device clears the request bits at the end of every step, a request
cell holds $0$ on entry, and the instruction $\texttt{NEG1}\ c\ \ell$ assigns
$\mem[c]\leftarrow 0-255=1$ and, the result being positive, passes control to the
next instruction: one instruction sets one request.

\begin{theorem}[Universal construction]\label{thm:univ}
For every $\tau\in\mathcal{R}$ the instance $(\Nhost,m_P,\tau)$ halts and
$\Out(\Nhost,m_P,\tau)=\delta(\tau)$; by Lemma~\ref{lem:complete}, every byte string is
the output of an instance with host $\Nhost$ and memory image $m_P$. Writing $\nlit$
and $\nrep$ for the numbers of literal and repeat tokens of $\tau$, and
$f_{\mathrm{lit}}$ and $f_{\mathrm{rep}}$ for the numbers of bytes each kind
contributes to $\delta(\tau)$, the run takes exactly
$5f_{\mathrm{lit}}+4f_{\mathrm{rep}}+7\nlit+9\nrep+8$ steps.
\end{theorem}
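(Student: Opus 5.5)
The proof reduces the theorem to a program-level invariant for $P$ on $\mathrm{SUBLEQ}_8$ with its device, and then lifts it to the network. By Proposition~\ref{prop:host}, one step of $\Nhost$ from every state is one step of $\mathrm{SUBLEQ}_8$ with the halt latch. The device words behave as in Definitions~\ref{def:transducer} and~\ref{def:subleq}. By Lemma~\ref{lem:lev}, $\Lev(\Nhost)$ computes the same next state. So it suffices to reason about the abstract machine started from $m_P$ with $p=0$, tape $\tau$, head $0$ and empty output. The second sentence of the theorem then follows directly: given $f$, Lemma~\ref{lem:complete} supplies $\tau\in\mathcal R$ with $\delta(\tau)=f$.

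I would split $P$ (Table~\ref{tab:P}) into blocks and give each a Hoare-style specification on a small abstract state. The state consists of the program counter at a block entry, the constants $\cell{Z},\cell{ONE},\cell{NEG1}$ intact, $\cell{T1},\cell{T2}$ with known contents, the request cells $0$, the tape head position $h$, and the output word $\omega$. The blocks I expect are the following.

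\begin{enumerate}[label=(\alph*)]
\item A dispatch loop at entry $0$. It issues a read request (one instruction $\texttt{NEG1}\ c_{\mathrm{rd}}\ \ell$, as in the remark after Construction~\ref{con:P}), moves $R_{\mathrm{in}}$ into a temporary, and branches three ways on the tag. Tag $0$ leads to the halt address $\cell{0xFF}$. A tag in $[1,127]$ is positive. A tag in $[129,255]$ is negative, because the device sign-extends it.
\item A literal loop. For each of the $n$ payload bytes it reads the byte, computes $\nu(x)=0-x$ by subtraction into $R_{\mathrm{out}}$ (which the device cleared to $0$), sets $c_{\mathrm{wr}}$, and decrements the counter.
\item A repeat loop. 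It reads $x$ once, then emits $\nu(x)$ exactly $256-t$ times, counting the signed value $t-256$ up to $0$.
\end{enumerate}

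For each block I would prove that entry with the loop counter at value $k$ leads, after a fixed number of steps per iteration, back to the loop head with $k-1$ and one more byte appended to $\omega$. I would also prove that the exit leads back to dispatch with the constants and temporaries restored. From the per-iteration step counts (about $5$ per literal byte and $4$ per repeat byte) and the per-token overheads ($7$ and $9$), I would read off the formula $5f_{\mathrm{lit}}+4f_{\mathrm{rep}}+7\nlit+9\nrep+8$.

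The main argument is then an induction on the number of tokens of $\tau=t_1\cdots t_k\,\texttt{END}$. The invariant is that at the dispatch entry the head sits at the start of the next token, $\omega=\delta(t_1\cdots t_i)$, and the memory equals $m_P$ outside the scratch cells. Each token preserves the invariant and costs the stated steps. At $\texttt{END}$ the read yields $0$, the branch sends $p$ to $255$, and the halt latch fires; this contributes the constant $8$ together with the initial setup. Two side facts are needed. Reads never reach end of tape, since $\tau$ is a recipe ending in $\texttt{END}$, so $\cell{EOK}$ and $c_{\mathrm{eot}}$ never matter. The program never writes its own code bytes $\cell{0x00}$–$\cell{0x3E}$; this follows because every $b$-operand in Table~\ref{tab:P} is a scratch or device cell.

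The hardest step is the exact sign and branch bookkeeping in SUBLEQ. Every instruction both subtracts and possibly jumps, so for every instruction one must check that the intended fall-through or jump actually occurs on all inputs. The delicate cases are the tag boundaries $127/129$, a payload byte $x=0$ (where $\nu(x)=0$ makes a "$\le0$" branch fire), and the last repeat iteration. My plan is to handle these by case analysis over the sign of each intermediate value. Failing that, because all quantities are bytes, I would prove each block's specification by exhaustive evaluation over the $256$ (or $256^2$) values it depends on, as Proposition~\ref{prop:pass} does for the organism, and then assemble the blocks by the induction above.
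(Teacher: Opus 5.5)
Your proposal is correct and is essentially the paper's own proof. The paper carries the same invariant at \cell{0x00} token by token: the head sits at the next token, the output is $\delta(t_1\cdots t_j)$, \cell{Z}, \cell{ONE}, \cell{NEG1} and $R_{\mathrm{out}}$ hold $0,1,255,0$, the requests are clear and the code is unmodified; it then reads the costs $5\chi+7$, $4\ell+9$ and $8$ off the instructions executed, and leaves implicit the passage from $\Lev(\Nhost)$ to $\mathrm{SUBLEQ}_8$ that you make explicit.

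Three small corrections to your sketch:
\begin{itemize}
\item The constant $8$ is exactly the end token (instructions $0$--$5$, $7$ and $20$); there is no separate setup.
\item The repeat loop counts $\cell{T1}=\nu(t)=256-t$ down to $0$ by subtracting \cell{ONE}, rather than counting $t-256$ up to $0$.
\item Your worry about $\nu(x)=0$ does not arise, because instructions $10$, $11$, $16$ and $17$ branch to their own successor.
\end{itemize}
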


\begin{proof}
Write $\tau=t_1\cdots t_k\texttt{END}$, and let $I_j$ assert that control reaches
$\cell{0x00}$ with the head at the first byte of $t_{j+1}$ and the output word
$\delta(t_1\cdots t_j)$, with $\cell{Z}$, $\cell{ONE}$, $\cell{NEG1}$ and
$R_{\mathrm{out}}$ holding $0$, $1$, $255$ and $0$, all request bits $0$ and the bytes
of $P$ unmodified; $I_0$ holds at the start.

From $I_j$, instructions $0$ to $5$ read the tag $\chi$ of $t_{j+1}$, set $\cell{T1}$
to $\nu(\chi)$ and $\cell{T2}$ to $\chi$, and branch exactly when $\chi=0$ or
$\chi\ge128$. For a literal token, $1\le\chi\le127$, the loop at $\cell{0x1B}$ reads
each payload byte $x$, emits $\nu(x)$ through $R_{\mathrm{out}}$ and the write
request, and counts $\cell{T2}$ down to $0$. For a repeat token, $129\le\chi\le255$,
$\cell{T1}=256-\chi$ is positive, and the loop at $\cell{0x2D}$ reads $x$ once and
emits $\nu(x)$ while it counts $\cell{T1}$ down. Both loops return to $\cell{0x00}$
with $I_{j+1}$. At the end token $\cell{T1}$ is $0$, and at the tag $128$ it is
negative; both reach $\cell{0xFF}$, which halts with the output $\delta(\tau)$. Each
token consumes a tape byte, so the run halts.

For the count, a literal token of length $\chi$ executes instructions $0$ to $6$,
then $\chi$ passes of $9$ to $12$, all but the last also executing $13$, and then
$14$: $5\chi+7$ steps. A repeat token of run length $\ell=256-\chi$ executes $0$ to
$5$, $7$, $8$ and $15$, then $\ell$ passes of $16$ to $18$, all but the last also
executing $19$, and then $14$: $4\ell+9$ steps. The end token executes $0$ to $5$,
$7$ and $20$: eight steps.
\end{proof}

\begin{corollary}[Self-description]\label{cor:selfdesc}
For $\tau_H=\describe(\sigma(\Nhost))$ the instance $(\Nhost,m_P,\tau_H)$ is
self-describing, and its run takes $793{,}709$ steps.
\end{corollary}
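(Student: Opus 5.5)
The statement follows from Theorem~\ref{thm:univ} applied to the recipe $\tau_H=\describe(\sigma(\Nhost))$; what remains is to check that $\tau_H$ is a recipe and to count the steps.

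First, $\tau_H\in\mathcal R$. By Lemma~\ref{lem:describe}, $\describe$ emits only literal tokens with tag $1\le n\le127$, repeat tokens with tag $256-\ell$ where $4\le\ell\le127$ (so the tag lies in $[129,252]$), and a final \texttt{END}. Hence $\tau_H$ is generated by the grammar of Definition~\ref{def:recipe}, and the same lemma gives $\delta(\tau_H)=\sigma(\Nhost)$.

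Theorem~\ref{thm:univ} then says the instance halts with $\Out(\Nhost,m_P,\tau_H)=\delta(\tau_H)=\sigma(\Nhost)$. This is exactly self-description in the sense of Definition~\ref{def:selfrep}, with $N=\Nhost$.

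The step count also comes from Theorem~\ref{thm:univ}: the run takes
\[
5f_{\mathrm{lit}}+4f_{\mathrm{rep}}+7\nlit+9\nrep+8
\]
steps. To evaluate this, I would compute the byte string $\sigma(\Nhost)$ from the serialization format of Appendix~\ref{app:sigma} and the $17{,}598$ units of Proposition~\ref{prop:host}. I would then run $\describe$ on it to obtain the four counts $f_{\mathrm{lit}}$, $f_{\mathrm{rep}}$, $\nlit$ and $\nrep$, which must satisfy $f_{\mathrm{lit}}+f_{\mathrm{rep}}=|\sigma(\Nhost)|$. Substituting them into the formula should give $793{,}709$.

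I expect this computation to be the main obstacle. It depends on the concrete serialization, which (per the remark on Lemma~\ref{lem:generations}) sits partly outside the Rocq development, and on its exact run structure. So the value is established by computation in the artifact rather than by a closed form. As a cross-check, I would simulate $\Lev(\Nhost)$ directly through the transducer of Definition~\ref{def:transducer} and confirm both the output and the step count.
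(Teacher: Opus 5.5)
Your proposal is correct and follows the paper's proof exactly. The paper also applies Theorem~\ref{thm:univ} to $\tau_H$, with $\delta(\tau_H)=\sigma(\Nhost)$ by Lemma~\ref{lem:describe}, and obtains the step count by substituting the artifact-computed census of $\tau_H$, recorded in Appendix~\ref{app:sizes} as $\nlit=1137$, $\nrep=147$, $f_{\mathrm{lit}}=143{,}459$ and $f_{\mathrm{rep}}=16{,}781$ (summing to $|\sigma(\Nhost)|=160{,}240$), which indeed gives $793{,}709$.
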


\begin{proof}
Theorem~\ref{thm:univ} with $\delta(\tau_H)=\sigma(\Nhost)$ by
Lemma~\ref{lem:describe}; Appendix~\ref{app:sizes} gives the census of $\tau_H$.
\end{proof}

The memory image $m_P$ and the recipe $\tau_H$ do not occur in the output, so this
instance describes itself in the sense of the Turing machines of Lee and Thatcher
that print their own description~\cite{Lee1963,Thatcher1963}. The next program
closes the gap.

\subsection{Self-reproduction}\label{sec:selfrep}

Von Neumann's self-reproducing automaton uses its description twice: once as
instructions for building, and once as data to be copied. The program $P^\ast$
does the same on the host. Phase~(A) decodes a recipe into the string it
describes, phase~(R) rewinds the tape, and phase~(B) copies the recipe itself. Run
on a recipe for the host's own serialization and memory image, it emits everything
from which the instance is rebuilt.

\begin{construction}\label{con:selfrep}
The program $P^\ast$ consists of three phases.
\begin{enumerate}[label=(\Alph*)]
\item \emph{Construction.} The twenty instructions of the decoding loop of $P$,
unchanged, except that on reading the end token control passes to phase~(R)
instead of halting.
\item[(R)] \emph{Rewind.} One instruction $\texttt{NEG1}\ c_{\mathrm{rw}}$.
\item[(B)] \emph{Copying.} Six instructions: set the read request; subtract
$c_{\mathrm{eot}}$ from \cell{EOK}, halting when the result is $\le 0$; assign
$\mem[\cell{Z}]\leftarrow 0-\mem[R_{\mathrm{in}}]$; assign
$\mem[R_{\mathrm{out}}]\leftarrow 0-\mem[\cell{Z}]$; set the write request; and restore
$\mem[\cell{Z}]\leftarrow 0$ while jumping back to the head of the phase.
\end{enumerate}
It has $27$ instructions, occupying $\cell{0x00}$ to $\cell{0x50}$
(Table~\ref{tab:Pstar}). Let $m^\ast\in\Sigma^{256}$ hold $P^\ast$ and the
constants $\mem[\cell{ONE}]=1$, $\mem[\cell{NEG1}]=255$ and $\mem[\cell{EOK}]=1$, and be zero
elsewhere, and let
\[
\tau^\ast=\describe\bigl(\field{\sigma(\Nhost)}\cat\field{m^\ast}\bigr)\in\mathcal{R}.
\]
\end{construction}

\begin{table}[t]\centering\small
\begin{tabular}{@{}rlllll@{}}
\hline
$k$ & addr & $a$ & $b$ & $c$ & effect\\
\hline
0 & \cell{0x00} & \cell{T2} & \cell{T2} & \cell{0x03} & $\mem[\cell{T2}]\leftarrow 0$\\
1 & \cell{0x03} & \cell{T1} & \cell{T1} & \cell{0x06} & $\mem[\cell{T1}]\leftarrow 0$\\
2 & \cell{0x06} & \cell{NEG1} & $c_{\mathrm{rd}}$ & \cell{0x09} & read request\\
3 & \cell{0x09} & $R_{\mathrm{in}}$ & \cell{T1} & \cell{0x0C} & $\mem[\cell{T1}]\leftarrow\nu(\chi)$\\
4 & \cell{0x0C} & \cell{T1} & \cell{T2} & \cell{0x0F} & $\mem[\cell{T2}]\leftarrow \chi$\\
5 & \cell{0x0F} & \cell{Z} & \cell{T2} & \cell{0x15} & branch if $\chi=0$ or $\chi\ge128$\\
6 & \cell{0x12} & \cell{Z} & \cell{Z} & \cell{0x1B} & goto \textsc{lit}\\
7 & \cell{0x15} & \cell{Z} & \cell{T1} & \cell{0x3C} & branch to \textsc{end} if $\nu(\chi)\le0$\\
8 & \cell{0x18} & \cell{Z} & \cell{Z} & \cell{0x2D} & goto \textsc{rep}\\
9 & \cell{0x1B} & \cell{NEG1} & $c_{\mathrm{rd}}$ & \cell{0x1E} & \textsc{lit}: read request\\
10 & \cell{0x1E} & $R_{\mathrm{in}}$ & $R_{\mathrm{out}}$ & \cell{0x21} & $\mem[R_{\mathrm{out}}]\leftarrow\nu(x)$\\
11 & \cell{0x21} & \cell{NEG1} & $c_{\mathrm{wr}}$ & \cell{0x24} & write request\\
12 & \cell{0x24} & \cell{ONE} & \cell{T2} & \cell{0x2A} & $\mem[\cell{T2}]\mathrel{-}=1$; done $\to$ 14\\
13 & \cell{0x27} & \cell{Z} & \cell{Z} & \cell{0x1B} & next literal byte\\
14 & \cell{0x2A} & \cell{Z} & \cell{Z} & \cell{0x00} & goto 0\\
15 & \cell{0x2D} & \cell{NEG1} & $c_{\mathrm{rd}}$ & \cell{0x30} & \textsc{rep}: read request\\
16 & \cell{0x30} & $R_{\mathrm{in}}$ & $R_{\mathrm{out}}$ & \cell{0x33} & $\mem[R_{\mathrm{out}}]\leftarrow\nu(x)$\\
17 & \cell{0x33} & \cell{NEG1} & $c_{\mathrm{wr}}$ & \cell{0x36} & write request\\
18 & \cell{0x36} & \cell{ONE} & \cell{T1} & \cell{0x2A} & $\mem[\cell{T1}]\mathrel{-}=1$; done $\to$ 14\\
19 & \cell{0x39} & \cell{Z} & \cell{Z} & \cell{0x30} & emit again\\
20 ($P$) & \cell{0x3C} & \cell{Z} & \cell{Z} & \cell{0xFF} & \textsc{end}: halt\\
20 ($P^\ast$) & \cell{0x3C} & \cell{NEG1} & $c_{\mathrm{rw}}$ & \cell{0x3F} & \textsc{r}: rewind request\\
21 & \cell{0x3F} & \cell{NEG1} & $c_{\mathrm{rd}}$ & \cell{0x42} & \textsc{b}: read request\\
22 & \cell{0x42} & $c_{\mathrm{eot}}$ & \cell{EOK} & \cell{0xFF} & $\mem[\cell{EOK}]\leftarrow1-c_{\mathrm{eot}}$; halt if $\le0$\\
23 & \cell{0x45} & $R_{\mathrm{in}}$ & \cell{Z} & \cell{0x48} & $\mem[\cell{Z}]\leftarrow\nu(x)$\\
24 & \cell{0x48} & \cell{Z} & $R_{\mathrm{out}}$ & \cell{0x4B} & $\mem[R_{\mathrm{out}}]\leftarrow x$\\
25 & \cell{0x4B} & \cell{NEG1} & $c_{\mathrm{wr}}$ & \cell{0x4E} & write request\\
26 & \cell{0x4E} & \cell{Z} & \cell{Z} & \cell{0x3F} & $\mem[\cell{Z}]\leftarrow0$; goto \textsc{b}\\
\hline
\end{tabular}
\caption{The constructor $P$ of Construction~\ref{con:P}, instructions $0$ to $20$ at
\cell{0x00}--\cell{0x3E}, and the self-reproducing program $P^\ast$ of
Construction~\ref{con:selfrep}, instructions $0$ to $19$ of $P$ followed by $20$ to $26$ at
\cell{0x3C}--\cell{0x50}. An instruction $a\;b\;c$ assigns $\mem[b]\leftarrow \mem[b]-\mem[a]$ and
branches to $c$ when the result is at most $0$; $x$ is the tape byte just read.}
\label{tab:P}\label{tab:Pstar}\end{table}

\begin{theorem}[Exact self-reproduction]\label{thm:selfrep}
For every recipe $\tau$ the instance $(\Nhost,m^\ast,\tau)$ halts and
$\Out(\Nhost,m^\ast,\tau)=\delta(\tau)\cat\tau$; its run reads no byte at
$h=|\tau|$ before it requests the rewind, and requests it at $h=|\tau|$. In
particular $\Omega^\ast=(\Nhost,m^\ast,\tau^\ast)$ is exactly self-reproducing,
$\Out(\Omega^\ast)=\ser(\Omega^\ast)$, and the generations
$\Omega_{k+1}=\inst(\Out(\Omega_k))$ starting from $\Omega^\ast$ are constant.
\end{theorem}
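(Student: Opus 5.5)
We prove the statement in three stages: phase~(A), phase~(R) with the first step of phase~(B), and then the copy loop of phase~(B). By Proposition~\ref{prop:host}, each step of $\Nhost$ is a step of $\mathrm{SUBLEQ}_8$ with the device of Definition~\ref{def:transducer}, so we may argue about $P^\ast$ as a SUBLEQ program.

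\textbf{Phase~(A).} Instructions $0$ to $19$ of $P^\ast$ coincide with those of $P$, and $m^\ast$ agrees with $m_P$ on the cells the loop reads: $\cell{Z}$, $\cell{ONE}$, $\cell{T1}$, $\cell{T2}$, $\cell{NEG1}$, the request cells and $R_{\mathrm{out}}$. So the invariants $I_j$ of the proof of Theorem~\ref{thm:univ} carry over verbatim, and we add the requirement that $\cell{EOK}$ hold $1$. Writing $\tau=t_1\cdots t_k\texttt{END}$, control reaches $\cell{0x3C}$ with output $\delta(\tau)$. At that point the head has just consumed the byte \texttt{END}, so $h=|\tau|$. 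Each read in phase~(A) was issued at a head position inside some token, hence at $h<|\tau|$; this gives the clause that no byte is read at $h=|\tau|$ before the rewind. Instruction $20$ then sets $c_{\mathrm{rw}}$ to $1$ at $h=|\tau|$, so the device resets $h$ to $0$ and clears the request. The result $1$ is positive, so control falls through to $\cell{0x3F}$.

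\textbf{Phase~(B).} We use a loop invariant $J_i$ for $0\le i\le|\tau|$: control is at $\cell{0x3F}$, $h=i$, the output is $\delta(\tau)\cat\tau_0\cdots\tau_{i-1}$, $\cell{Z}=0$, $\cell{EOK}=1$, $R_{\mathrm{out}}=0$, and the request bits are $0$. From $J_i$ with $i<|\tau|$, the read loads $\tau_i$ into $R_{\mathrm{in}}$ and clears $c_{\mathrm{eot}}$, so $\cell{EOK}\leftarrow 1-0=1$ is positive and there is no branch. Instruction $23$ sets $\cell{Z}=\nu(\tau_i)$, and instruction $24$ sets $R_{\mathrm{out}}=0-\nu(\tau_i)=\tau_i$. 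The write request then emits $\tau_i$ and clears $R_{\mathrm{out}}$. Instruction $26$ resets $\cell{Z}$ to $0$ and, since the result $0\le0$, jumps back, which gives $J_{i+1}$. At $J_{|\tau|}$ the read sets $c_{\mathrm{eot}}=1$, so $\cell{EOK}\leftarrow0\le0$ and control goes to $\cell{0xFF}$, which halts. The output is $\delta(\tau)\cat\tau$.

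\textbf{Main obstacle: the branch decisions.} The delicate point is that every branch in phase~(B) is decided correctly in two's complement. Instruction $23$ may produce a negative or positive value, and in either case it must not jump; this holds because its $c$ field is the next address $\cell{0x48}$, and likewise instruction $24$ falls through to $\cell{0x4B}$. We also rely on the device semantics exactly as stated: the read sets $R_{\mathrm{in}}$ and $c_{\mathrm{eot}}$ after $F$, and requests are cleared at the end of every step. We would check each of the seven instructions against Table~\ref{tab:Pstar} explicitly.

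\textbf{The instance $\Omega^\ast$.} Apply the result to $\tau^\ast$. By Lemma~\ref{lem:describe}, $\delta(\tau^\ast)=\field{\sigma(\Nhost)}\cat\field{m^\ast}$, so
\[
\Out(\Omega^\ast)=\field{\sigma(\Nhost)}\cat\field{m^\ast}\cat\tau^\ast=\ser(\Omega^\ast).
\]
Lemma~\ref{lem:generations} then shows that the generations starting from $\Omega^\ast$ are constant.
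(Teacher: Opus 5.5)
Your proposal is correct and follows the paper's proof essentially step for step. Phase~(A) reuses the invariants $I_j$ of Theorem~\ref{thm:univ}, since the loop never addresses $\cell{EOK}$; the rewind is requested at $h=|\tau|$; a loop invariant $J_i$ carries phase~(B); and Lemmas~\ref{lem:describe} and~\ref{lem:generations} then give the statement for $\Omega^\ast$. Your explicit checks, that every phase~(A) read lies inside a token and that instructions $23$ and $24$ branch to their own successor address, are details the paper leaves implicit and change nothing in the argument.
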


\begin{proof}
Phase~(A) consists of the instructions of $P$ with the same memory layout, and they
do not address the cells $\cell{EOK}$ and $c_{\mathrm{eot}}$, so the proof of
Theorem~\ref{thm:univ} applies. Every token of $\tau$ is consumed in order, and no
byte is read after the end token. When the end token is read, the output word is
$\delta(\tau)$, the head is at $|\tau|$, $\mem[\cell{Z}]=0$, $\mem[R_{\mathrm{out}}]=0$,
$\mem[\cell{EOK}]=1$, and all request bits are $0$. Phase~(R) assigns
$\mem[c_{\mathrm{rw}}]\leftarrow 1$, and the device sets $h=0$.

Let $J_i$ assert that control reaches the head of phase~(B) with head position $i$
and output word $\delta(\tau)\cat\tau_0\cdots\tau_{i-1}$, with $\mem[\cell{Z}]=0$,
$\mem[\cell{EOK}]=1$, $\mem[R_{\mathrm{out}}]=0$ and all request bits $0$. $J_0$ holds.
Assume $J_i$; the first instruction requests a read. If $i<|\tau|$, the device
loads $\tau_i$, clears $c_{\mathrm{eot}}$ and advances the head. The second
instruction then leaves $\cell{EOK}=1$, the third and fourth assign
$\mem[\cell{Z}]\leftarrow\nu(\tau_i)$ and $\mem[R_{\mathrm{out}}]\leftarrow\tau_i$, the
fifth emits $\tau_i$, and the sixth restores $\mem[\cell{Z}]=0$ and branches back, so
$J_{i+1}$ holds. If $i=|\tau|$, the device sets $c_{\mathrm{eot}}=1$, and the second
instruction assigns $\mem[\cell{EOK}]\leftarrow0$ and halts with the output word
$\delta(\tau)\cat\tau$.

For $\tau=\tau^\ast$, $\delta(\tau^\ast)=\field{\sigma(\Nhost)}\cat\field{m^\ast}$
by Lemma~\ref{lem:describe}, so the output is $\ser(\Omega^\ast)$; the last
statement is Lemma~\ref{lem:generations}.
\end{proof}

\begin{proposition}[Running time of $P^\ast$]\label{prop:steps}
For $\tau\in\mathcal{R}$ with the census of Theorem~\ref{thm:univ}, the instance
$(\Nhost,m^\ast,\tau)$ halts after
\[
\underbrace{5f_{\mathrm{lit}}+4f_{\mathrm{rep}}+7\nlit+9\nrep+7}_{\text{phase (A)}}
\;+\;\underbrace{1}_{\text{phase (R)}}
\;+\;\underbrace{6|\tau|+2}_{\text{phase (B)}}
\]
steps. For $\tau=\tau^\ast$ the run takes $1{,}664{,}939$ steps.
\end{proposition}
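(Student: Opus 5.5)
The count follows the proofs of Theorems~\ref{thm:univ} and~\ref{thm:selfrep}, phase by phase, counting instructions executed along the invariants $I_j$ and $J_i$ already established there.

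\emph{Phase (A).} Its instructions are those of $P$ at the same addresses, so each literal token costs $5\chi+7$ steps and each repeat token $4\ell+9$ steps, exactly as in the proof of Theorem~\ref{thm:univ}. Summing over tokens gives $5f_{\mathrm{lit}}+7\nlit+4f_{\mathrm{rep}}+9\nrep$. The end token executes instructions $0$ to $5$ and $7$: seven steps. The count stops there, because instruction $20$ is now the rewind, which belongs to phase (R). This gives the $+7$ in place of the $+8$ of Theorem~\ref{thm:univ}.

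\emph{Phase (R).} It is the single instruction $20$.

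\emph{Phase (B).} Along $J_i$ with $i<|\tau|$, one round executes instructions $21$ to $26$, six steps. Instruction $22$ does not branch, since $\mem[\cell{EOK}]$ stays $1$, and instruction $26$ always jumps back, since $0-0=0\le0$. Instructions $23$ and $24$ also fall through or continue: each branch target is the next address $0x48$ or $0x4B$, so branching or not lands on the same instruction. So each round is exactly six steps, and $|\tau|$ rounds give $6|\tau|$. At $i=|\tau|$, instructions $21$ and $22$ execute and $22$ branches to $0xFF$, which halts the machine. By Definition~\ref{def:transducer} the halting step is counted, so this adds $2$.

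The one point to check carefully is this halting convention. The step that sets $p=255$ is the step at which the halt latch fires, and the paper counts it, consistent with the $+8$ of Theorem~\ref{thm:univ}. So no further step at $0xFF$ is added.

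For the value at $\tau=\tau^\ast$, substitute the census of $\tau^\ast=\describe(\field{\sigma(\Nhost)}\cat\field{m^\ast})$: $f_{\mathrm{lit}}$, $f_{\mathrm{rep}}$, $\nlit$, $\nrep$ and $|\tau^\ast|$. These come from the artifact computation in Appendix~\ref{app:sizes}. I expect this numeric evaluation, which depends on the exact serialization $\sigma(\Nhost)$, to be the only step not provable by hand. It must be taken from the generator or Rocq extraction, and it yields $1{,}664{,}939$. As a consistency check, $\delta(\tau^\ast)$ has length $f_{\mathrm{lit}}+f_{\mathrm{rep}}=|\sigma(\Nhost)|+8+256+8$.
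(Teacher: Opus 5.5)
Your proposal is correct and follows the paper's own proof. Phase~(A) is the decoding loop of $P$ with its halt instruction replaced by the rewind, so the end token costs seven steps instead of eight. Phase~(B) costs the six instructions $21$ to $26$ per tape position, plus instructions $21$ and $22$ on the last pass. The value for $\tau^\ast$ comes from the census in Appendix~\ref{app:sizes}. Your extra checks are accurate details that the paper leaves implicit: that instructions $23$ and $24$ reach the next instruction whether or not they branch, and that the halting step is counted by Definition~\ref{def:transducer}.
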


\begin{proof}
Phase~(A) is the decoding loop of $P$ with its halt instruction replaced by the
rewind. In phase~(B) each of the $|\tau|$ positions executes the six instructions
$21$ to $26$ of Table~\ref{tab:Pstar}, and the last pass executes $21$ and $22$. The
census of $\tau^\ast$ is in Appendix~\ref{app:sizes}.
\end{proof}

\begin{remark}[Wider words]\label{rem:wider}
For $A\ge8$ let $m_{P,A}$ and $m^\ast_A$ hold $P$ and $P^\ast$ in $2^A$ words, the
cells $\cell{0xF0}$ to $\cell{0xFF}$ and the addresses naming them moved to the top
sixteen words, with $\mem[\cell{ONE}]=\mem[\cell{EOK}]=1$ and $\mem[\cell{NEG1}]=2^A-1$, so
that $m_{P,8}=m_P$ and $m^\ast_8=m^\ast$. Theorems~\ref{thm:univ}
and~\ref{thm:selfrep} and Proposition~\ref{prop:steps} hold for $H_A$, $m_{P,A}$ and
$m^\ast_A$ with the same proofs: a byte read with its sign bit copied has the same
value in two's complement on $A$ bits as on eight, and the low byte of
$R_{\mathrm{out}}$ is again $\nu(x)$ or $x$, since $256$ divides $2^A$. On every
recipe the run of $P^\ast$ reads no byte at the end of the tape before it requests
the rewind, and requests the rewind once, at the end of the tape.
\end{remark}

\section{The host reproduces itself by the dynamics}\label{sec:storedhost}

Inside the ring machine, the instantiation step of Section~\ref{sec:instances} is a
write into the next bank followed by a hand-over. This section stores the host, at
every word width from $8$ on, as a network of records in a bank of a ring machine
whose window holds its state bits. A device compiled into the same records writes
the bytes the host emits under the arm until the tape is rewound, and under the copy
head after it. Run on a recipe for its own records, $P^\ast$ then writes the host
into the next bank, copies the recipe onto the next tape and hands control to the
copy (Figure~\ref{fig:generation}), as the organism does with its own code. The
proofs reuse those of Section~\ref{sec:univ}: a pass of the stored host is one step
of the host followed by a device that agrees with the tape transducer on every run
of $P^\ast$.

\begin{construction}[The stored host]\label{con:stored}
Let $A\ge8$ and $w\ge1$. The window holds, after the thirteen cells of
Definition~\ref{def:twohead}, a cell $\mathrm{booted}$ at $13$, a counter
$\mathrm{cnt}$ of $w$ bits from $14$ on, least significant first, a cell
$\mathrm{dest}$, a limit $\mathrm{lim}$ of $w$ bits, the $n=A+1+A2^A$ state bits of
$H_A$ in the order of Construction~\ref{con:host}, a cell for each unit of $H_A$,
three cells for each level of the recursion of Lemma~\ref{cor:compile}, and $2w+9$
cells for the device. The stored host $\mathcal S_{A,w}$ lists the following records.
\begin{enumerate}[label=(\alph*),leftmargin=2em]
\item \emph{Startup}, $A2^A+A+2w+8$ records: for each memory bit $c$,
$c\leftarrow c\vee\neg\mathrm{booted}$ if its value in $m^\ast_A$ is $1$ and
$c\leftarrow c\wedge\mathrm{booted}$ if it is $0$; then
$c\leftarrow c\wedge\mathrm{booted}$ for the bits of the counter, the halt bit and
the bits of $\mathrm{cnt}$, $\mathrm{lim}$ and $\mathrm{dest}$; then five records
copying $\neg\mathrm{booted}$ into RR, CB, RA, CT and RC; then
$\mathrm{booted}\leftarrow1$.
\item \emph{Step}: the units of $H_A$ in the order of the netlist, each compiled by
Lemma~\ref{cor:compile} into its own cell, which also accumulates its chain; the
unit of the next value of a state bit writes that bit instead, with its last
record. A unit compiled by the recursion holds $\Th_t$, $\Th_{t-1}$ and the conjunction
of each level in the three cells of that level.
\item \emph{Device}, $11A+12w+75$ records: the device of Definition~\ref{def:ring}
on the state that~(b) leaves.
\item \emph{Hand-over}: the fewest empty records, at most seven, that make the image
a whole number of bytes; then $\mathrm{booted}\leftarrow\mathrm{booted}\wedge\neg h$,
where $h$ is the halt bit, and three records copying $h$ into RA, RC and HO, the
last record.
\end{enumerate}
The window is the least $2^\alpha$ that holds these cells. The stored host
$\mathcal S=\mathcal S_{8,20}$ has $36{,}080$ records and a window of $2^{15}$ cells,
and $\tau_{\mathcal S}=\describe(\mathcal S)$ is the recipe of its image, its records
in order read as bytes most significant bit first; Appendix~\ref{app:sizes} gives
the layout of its window and the lengths of its image and of $\tau_{\mathcal S}$.
\end{construction}

\begin{definition}[The device of the stored host]\label{def:ring}
The device acts on the state $y$ of $H_A$ after a step, the registers
$\mathrm{cnt}$, $\mathrm{lim}$ and $\mathrm{dest}$, the read tape, the idle bank
and the copy tape, each request firing on the value $1$:
\begin{enumerate}[label=(\roman*)]
\item if $y_{c_{\mathrm{wr}}}=1$, the low eight bits of $y_{R_{\mathrm{out}}}$, most
significant first, are written under the arm if $\mathrm{dest}=0$ and under the
copy head if $\mathrm{dest}=1$, that head moving on by one cell after each, and
$R_{\mathrm{out}}$ is cleared;
\item if $y_{c_{\mathrm{rd}}}=1$, then $c_{\mathrm{eot}}$ is set if
$\mathrm{dest}=1$ and $\mathrm{cnt}=\mathrm{lim}$; otherwise the eight bits under
the read head, the head moving on by one cell after each, are loaded into the low
byte of $R_{\mathrm{in}}$ most significant first, their sign bit is copied into the
bits above them, $c_{\mathrm{eot}}$ is cleared and $\mathrm{cnt}$ is incremented;
\item if $y_{c_{\mathrm{rw}}}=1$, then $\mathrm{lim}\leftarrow\mathrm{cnt}$,
$\mathrm{cnt}\leftarrow0$ and $\mathrm{dest}\leftarrow1$, and the read head returns
to cell $0$;
\item the request cells are cleared.
\end{enumerate}
\end{definition}

\begin{figure}[t]
\centering
\begin{tikzpicture}[font=\footnotesize,>=Stealth,
  s/.style={draw,minimum height=6mm,inner sep=1pt,anchor=west,text height=1.5ex,text depth=0.3ex},
  lab/.style={anchor=east},
  note/.style={font=\scriptsize,text=black!70}]
\node[lab] at (0,0) {active bank $Q_\kappa$};
\node[s,fill=black!10,minimum width=13mm] (a1) at (0.15,0) {startup};
\node[s,fill=black!10,minimum width=35mm] (a2) at (a1.east) {step of $\Nhost$};
\node[s,fill=black!10,minimum width=13mm] (a3) at (a2.east) {device};
\node[s,fill=black!10,minimum width=15mm] (a4) at (a3.east) {hand-over};
\node[note,anchor=south] at (a1.north) {$2104$};
\node[note,anchor=south] at (a2.north) {$33{,}568$ records};
\node[note,anchor=south] at (a3.north) {$403$};
\node[note,anchor=south] at (a4.north) {$1+4$};
\node[lab] at (0,-1.1) {read tape $D_\kappa$};
\node[s,minimum width=76mm] (d0) at (0.15,-1.1) {$\tau_{\mathcal S}=\describe(\mathcal S)$, $240{,}900$ bytes};
\node[lab] at (0,-2.2) {next bank $Q_{\kappa+1}$};
\node[s,densely dashed,minimum width=76mm] (q1) at (0.15,-2.2) {$\mathcal S$, written under the arm in phase~(A)};
\node[lab] at (0,-3.3) {next tape $D_{\kappa+1}$};
\node[s,densely dashed,minimum width=76mm] (d1) at (0.15,-3.3) {$\tau_{\mathcal S}$, written under the copy head in phase~(B)};
\node[draw,rounded corners,align=left,anchor=north west,text width=31mm,minimum height=39mm,inner sep=4pt]
  (win) at (9.75,0.3)
  {\textbf{window}\\[2pt] the host's state, with\\ $P^\ast$ and $m^\ast$ in its\\ memory; the device's\\ $\mathrm{cnt}$, $\mathrm{lim}$, $\mathrm{dest}$;\\ $\mathrm{booted}$; the requests,\\ HO set by the\\ hand-over at the halt};
\draw[->] (a4.east) -- node[note,above] {runs} (win.west|-a4.east);
\draw[->] (d0.east) -- node[note,above] {(A), (B)} (win.west|-d0.east);
\draw[<-] (q1.east) -- node[note,above] {(A)} (win.west|-q1.east);
\draw[<-] (d1.east) -- node[note,above] {(B)} (win.west|-d1.east);
\end{tikzpicture}
\caption{A generation of the stored host. Each pass evaluates the records of $\mathcal S$
and carries out one step of the host on the window. The first pass clears the next
bank and the next tape and returns every head to cell $0$. $P^\ast$ then reads
$\tau_{\mathcal S}$ twice: in phase~(A) it decodes it, and the device writes the decoded bytes,
the image of $\mathcal S$, under the arm; in phase~(B) it copies $\tau_{\mathcal S}$ under the copy head.
At the halt the hand-over passes control to the next bank, whose first pass loads
$m^\ast$ and runs the copy.}
\label{fig:generation}
\end{figure}

\begin{proposition}[One pass]\label{prop:storedpass}
Let a state of $\mathcal M^\alpha_{K,S,L}$, $S\ge G$, hold $\mathcal S_{A,w}$, of $G$
records, ready in its active bank, and in the window clear requests and a state $x$
of $H_A$. If
$\mathrm{booted}=0$, one pass replaces $x$ by $x_0(m^\ast_A)$, clears
$\mathrm{cnt}$, $\mathrm{lim}$ and $\mathrm{dest}$, returns the read head, the arm
and the copy head to cell $0$ and clears the idle bank and the copy tape. It then
carries out one step of $H_A$ followed by the device of Definition~\ref{def:ring},
and sets $\mathrm{booted}$ to the negation of the halt bit. If the halt bit is $1$,
it returns the arm and the copy head to cell $0$ and hands over at its $G$-th step;
otherwise its $S$ steps bring the slot head back to slot $0$. It leaves
$\mathcal S_{A,w}$ in the bank it started in and the requests clear, whatever the
temporaries held.
\end{proposition}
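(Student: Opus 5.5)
We plan to prove the statement by following one pass block by block, in the order of Construction~\ref{con:stored}. Each block is treated as a short netlist of two-input records, and Lemma~\ref{lem:feedback} applies to it. The bank holds $\mathcal S_{A,w}$ ready and only the last record touches HO, so the first part of that lemma says the $G$ records are evaluated once each, in order, on the window and the head cells as the earlier records left them. The remaining $S-G$ steps are empty slots and only move the slot head back to slot $0$. Every record writes only window cells, the arm or the copy head, never the active bank, so $\mathcal S_{A,w}$ stays where it was. What remains is a sequence of local correctness claims, one per block, each checked as a composition.

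\emph{Startup.} With $\mathrm{booted}=0$, the record $c\leftarrow c\vee\neg\mathrm{booted}$ sets $c$ to $1$, and $c\leftarrow c\wedge\mathrm{booted}$ sets it to $0$. The memory therefore becomes $m^\ast_A$, and the counter, the halt bit, $\mathrm{cnt}$, $\mathrm{lim}$ and $\mathrm{dest}$ become $0$, so the state is $x_0(m^\ast_A)$. With $\mathrm{booted}=1$ the same records leave everything unchanged. The five copies of $\neg\mathrm{booted}$ into RR, CB, RA, CT, RC fire exactly when $\mathrm{booted}=0$. Item~(iii) of Definition~\ref{def:twohead} then returns the read head, clears the idle bank and returns the arm, and clears the copy tape and returns the copy head. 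The machine clears the requests after each step (item~(v)), so the next record sees them clear again. Finally $\mathrm{booted}\leftarrow1$.

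\emph{Step.} We apply the second part of Lemma~\ref{lem:feedback} to $H_A$ compiled by Lemma~\ref{cor:compile}, using Proposition~\ref{prop:host} for correctness of the unit list. Three things must be checked. First, each unit's cell holds its value until its last reader. This holds because each unit has its own cell and nothing else writes it. Second, the shared three cells of each recursion level are temporaries, written before being read within the same unit's compilation; this makes the result independent of what they held. Third, the record of the next value of each state bit follows every reader of that bit. This holds because the next-value units are the last units of $H_A$ (layers $12$--$13$, Table~\ref{tab:blocks}, in netlist order), and the only later readers are other next-value units. Those units read $W_j$, $h$, $p_i$, $\mathrm{leq}$ and $q_i$, which are intermediate units, not state bits, except $\mu_{j,t}$ and $h$ and $p_i$. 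So we must check that $KP_{j,t}$ and $HP_i$, $HA$ precede every state-bit write. We expect this to follow from the block order: $HP_i$ is in layer $1$ and $KP$ in layer $6$. After this block the window holds the successor state $y$.

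\emph{Device.} This block is the main obstacle. Its $11A+12w+75$ records must realize Definition~\ref{def:ring}, including the eight moves of the arm, read or copy head per byte, the increment of $\mathrm{cnt}$, the comparison $\mathrm{cnt}=\mathrm{lim}$, and the sign extension. Rather than argue it by hand, we would refine, as in the proof of Proposition~\ref{prop:pass}, to an abstract pass that sees only the registers and the cells under the heads. Every temporary is written before it is read, so the check reduces to a finite, width-parametric symbolic evaluation in Rocq: a ripple incrementer and equality test on $w$ bits, and the fixed request logic.

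\emph{Hand-over.} The padding records are empty. The record $\mathrm{booted}\leftarrow\mathrm{booted}\wedge\neg h$ gives $\neg h$, since $\mathrm{booted}=1$ at that point. The three copies of $h$ into RA, RC, HO return the arm and the copy head and, when $h=1$, set HO at the $G$-th step, so the bank index advances from the next step on. Otherwise no record sets HO and the pass runs its full $S$ steps. The requests are cleared by item~(v) after each step.
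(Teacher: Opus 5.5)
Your decomposition matches the paper's. You use Lemma~\ref{lem:feedback} to evaluate the blocks of Construction~\ref{con:stored} in order, gate the startup by $\mathrm{booted}$, treat the step block as $H_A$ compiled by Lemma~\ref{cor:compile}, then the hand-over, and note that no record writes the active bank. Your startup and hand-over arguments are right.

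The gap is the device block. You do not argue it, and the check you propose does not fit it. The abstraction of Proposition~\ref{prop:pass} works because each of the organism's heads is read or written at one position and moved only by requests set in its last records. So window cells $1$ to $3$ each name one cell for the whole pass. The device instead moves the read head, and the arm or the copy head, up to eight times in the middle of the block, interleaved with the records that read and write under them. It must also be right for every $A\ge8$ and $w\ge1$, so a case evaluation like that of Proposition~\ref{prop:pass} is not available.

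What has to be shown is the order of the device records, and the paper's proof supplies it:
\begin{itemize}
\item All conditions are formed from $y$ before anything changes. The requests $[y_c=1]$ are conjunctions of $A$ literals, since a request fires on the word value $1$. The test $[\mathrm{cnt}=\mathrm{lim}]$ is a conjunction of $w$ equivalences, conjoined with $\mathrm{dest}$.
\item Each bit of $R_{\mathrm{out}}$, and each bit under the read head, is moved by a multiplexer record. Before the next bit, a conditional request advances that head, so the next bit's record finds the next cell.
\item The sign bit is copied above the low byte.
\item $\mathrm{cnt}$ is incremented by carries formed before any of its bits changes.
\item The rewind is carried out with RR set by its last record.
\item The host's request words are cleared.
\end{itemize}

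Your ordering check in the step block is also mis-stated. It is not true that only next-value units follow the first of them. In Table~\ref{tab:blocks}, $HA$, $HB$ and $WR_{j,t}$ come after the $NP_i$, and $HA$ does not precede every state-bit write, so the criterion you set yourself fails.

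The criterion Lemma~\ref{lem:feedback} needs is per bit, as the paper states it. Every unit that reads a state bit comes no later than the unit of that bit's next value, which reads the bit before its last record writes it. In addition, no unit reads a next-value unit, because that unit's value goes into the state bit and is never left in its own cell. So ``each unit has its own cell'' is not enough.

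Both conditions hold in the block order:
\begin{itemize}
\item $\mu_{j,t}$ is read by $FA$, $RX$, $RY$ and $KP$, all before $NM_{j,t}$.
\item $h$ is read by $HP_i$, $W_j$, $TA_i$, $TB_i$, $HA$, $HB$ and $NH$ itself, all no later than $NH$.
\item $HA$ follows the writes of the counter bits harmlessly, since it reads only $h$, $\mathrm{leq}$ and $AC$.
\end{itemize}
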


\begin{proof}
By Lemma~\ref{lem:feedback} the pass evaluates the blocks of
Construction~\ref{con:stored} in order. The startup records leave every cell
unchanged when $\mathrm{booted}=1$. Otherwise they set each memory bit to its value
in $m^\ast_A$, clear the other cells they name, and set the five requests that
return the heads and clear the idle bank and the copy tape. The step block is $H_A$
compiled by Lemma~\ref{cor:compile} in the order of the netlist, which
Lemma~\ref{lem:feedback} admits: every unit follows its predecessors, a unit that
reads a state bit comes no later than the unit of its next value, which reads the bit
before its last record writes it, and no unit reads the unit of a next value. So
after the block the state bits hold the next state of $H_A$.

The device records form the requests $[y_c=1]$ as conjunctions of $A$ literals, the
bit $[\mathrm{cnt}=\mathrm{lim}]$ as a conjunction of $w$ equivalences, and its
conjunction with $\mathrm{dest}$. They then carry out clauses (i) to (iv) of
Definition~\ref{def:ring} in order. Each bit of the low byte of $R_{\mathrm{out}}$
is written under the arm or the copy head by a multiplexer, followed by the request
that moves that head on, and $R_{\mathrm{out}}$ is cleared on a write. Each bit under
the read head is loaded into $R_{\mathrm{in}}$ by a multiplexer on the condition of a
read, followed by the request that moves the read head on, and the sign bit is
copied above them. Then $c_{\mathrm{eot}}$ is set, $\mathrm{cnt}$ is incremented by
carries formed before any of its bits changes, the rewind is carried out, its last
record setting RR, and the requests of the host are cleared.

The hand-over records set $\mathrm{booted}$ and, at the halt, the requests that
return the arm and the copy head and, last, HO. Every temporary is written in the
pass before it is read and is not written after its last read, so a cell shared by
temporaries holds, at each read, the value of the temporary read. No record writes
the active bank, and the empty slots after the records change nothing but the slot
head.
\end{proof}

The theorems below use what Proposition~\ref{prop:storedpass} states and one more
property of the image: every cell read by a record after the startup records, other
than the cells under the heads and the sink, is written earlier in the pass. The
image $\mathcal S_{A,w}$ has this property for every $A\ge8$ and $w\ge1$, and it has
$B=\beta G/8$ bytes, with $\beta=8+3\alpha$. The proofs apply to every image with these two
properties, so the step block may compile any netlist that computes the step of
$H_A$. A state of $\mathcal M^\alpha_{K,S,L}$ \emph{starts} $\mathcal S_{A,w}$
\emph{on} a recipe $\tau$ when its active bank holds $\mathcal S_{A,w}$ ready, its read
tape holds $\tau$ from cell $0$ followed by zeros, and its window holds
$\mathrm{booted}=0$ and clear requests; the other banks, tapes and cells may hold
anything.

\begin{theorem}[Universal construction by the dynamics]\label{thm:storedconstruct}
Let $A\ge8$ and $w\ge1$, let $\mathcal S_{A,w}$ have $G$ records, and let $K\ge2$
and $S\ge G$. Let $\tau$ be a recipe with $|\tau|<2^w$ whose string $X=\delta(\tau)$ has at
most $S\beta/8$ bytes, let $L\ge8|\tau|$, and let $P$ be the number of steps that
Proposition~\ref{prop:steps} gives for $\tau$. From a state that starts
$\mathcal S_{A,w}$ on $\tau$, after $P$ passes, each of $S$ steps but the last, which
has $G$:
\begin{enumerate}[label=(\roman*),leftmargin=2em]
\item the next bank holds the bits of $X$ from slot $0$, followed by zeros, with
both its heads at $0$, and the next tape holds $\tau$ from cell $0$, followed by zeros,
with its head at $0$;
\item the active bank still holds $\mathcal S_{A,w}$ and the read tape $\tau$, the
window holds $\mathrm{booted}=0$ and clear requests, and no other bank or tape has
changed;
\item the bank index has moved to the next bank.
\end{enumerate}
From then on $\mathcal M^\alpha_{K,S,L}$ evaluates the records of $X$.
\end{theorem}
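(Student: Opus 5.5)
By Proposition~\ref{prop:storedpass}, each pass of the stored host is one step of $H_A$ followed by the device of Definition~\ref{def:ring}. So the plan is to reduce the run of $\mathcal M^\alpha_{K,S,L}$, pass by pass, to the run of the instance $(H_A,m^\ast_A,\tau)$ under the tape transducer of Definition~\ref{def:transducer}. Theorem~\ref{thm:selfrep}, in the form of Remark~\ref{rem:wider} for wider words, and Proposition~\ref{prop:steps} then give the output and the step count.

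The first pass has $\mathrm{booted}=0$. By Proposition~\ref{prop:storedpass} it loads $x_0(m^\ast_A)$, clears $\mathrm{cnt}$, $\mathrm{lim}$ and $\mathrm{dest}$, returns all three heads to cell $0$, and clears the idle bank and the copy tape. It then performs the first host step, and it sets $\mathrm{booted}=1$, since a single step from $p=0$ does not halt. Every later pass leaves the startup records inert. Pass $k$ is therefore step $k$ of $H_A$ followed by the ring device.

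The core is a simulation invariant maintained by induction on passes. Before the rewind request, $\mathrm{dest}=0$, $\mathrm{cnt}=h$, the read head is at cell $8h$, and the arm stands $8|\omega|$ bits past slot $0$ of the idle bank, which holds $\omega$ bitwise followed by zeros. After the rewind, $\mathrm{dest}=1$, $\mathrm{lim}=|\tau|$, $\mathrm{cnt}=h$, and the copy tape holds the bytes emitted since the rewind, with its head after them. Under this invariant, each clause of Definition~\ref{def:ring} agrees with the corresponding clause of Definition~\ref{def:transducer}:
\begin{itemize}
\item A read loads $\tau_h$ with its sign bit extended, since the tape holds $\tau$ most significant bit first.
\item A read at the end of the tape is answered through $\mathrm{dest}=1$ and $\mathrm{cnt}=\mathrm{lim}$.
\item A rewind sets $\mathrm{lim}=\mathrm{cnt}=|\tau|$.
\end{itemize}

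This agreement holds only along runs of $P^\ast$, and that is the main obstacle. The ring device answers an end-of-tape read correctly only after the rewind. Before the rewind it would read the zeros beyond $\tau$. It also relies on the rewind occurring exactly at $h=|\tau|$ and exactly once. Theorem~\ref{thm:selfrep} and Remark~\ref{rem:wider} provide exactly these facts: no read at $h=|\tau|$ before the rewind, and a single rewind requested at $h=|\tau|$. The hypothesis $|\tau|<2^w$ keeps $\mathrm{cnt}$ from wrapping. The hypothesis $L\ge 8|\tau|$ keeps the read head and the copy head from wrapping. The bound of $S\beta/8$ bytes on $X$ keeps the arm within the bank, including its slot-head advance at the end of each slot.

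Given the invariant, the transducer run halts after $P$ steps with output $\delta(\tau)\cat\tau$. At that point the idle bank holds $X$ from slot $0$ followed by zeros, and the copy tape holds $\tau$ followed by zeros. These zeros remain because the first pass cleared both.

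At the halting pass, Proposition~\ref{prop:storedpass} gives the rest. It returns the arm and the copy head to $0$ and hands over at step $G$, and it sets $\mathrm{booted}=\neg h=0$. All earlier passes have $S$ steps and do not hand over. The rewind left the read head at $0$, and the halting pass reads no further bytes, since the end-of-tape read at $\mathrm{cnt}=\mathrm{lim}$ moves no head; so the read tape is back at $0$ and unchanged. No record writes the active bank, and no other bank or tape lies under a head. The slot head of the idle bank returns to $0$ with the arm, which is claim (i).

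For the closing sentence, the next bank holds $X$ ready and is active from the next step. Lemma~\ref{lem:feedback} then says the machine evaluates its records.
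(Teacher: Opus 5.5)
This is essentially the paper's proof, with the simulation invariant spelled out more explicitly. You handle the first pass by Proposition~\ref{prop:storedpass}, and take from Theorem~\ref{thm:selfrep} and Remark~\ref{rem:wider} that no byte is read at $h=|\tau|$ before the rewind and that the single rewind comes at $h=|\tau|$. You then show the two devices agree along that run, with $\mathrm{cnt}=h$ and $\mathrm{lim}=|\tau|$, and use $|\tau|<2^w$, $8|X|\le S\beta$ and $8|\tau|\le L$ for the fit.

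One remark near the end is wrong, though harmless for this statement. The read head does not end at cell $0$: phase~(B) reads $\tau$ again from the read tape after the rewind, so it leaves that head at cell $8|\tau|$ (modulo $L$). Statement (ii) asserts only that the read tape still holds $\tau$, which is true because a record with $a_o=1$ writes nowhere. The paper's proof of Theorem~\ref{thm:storedrep} in fact relies on this head being left after $\tau'$, so the slip would matter there. Similarly, the rewind sets $\mathrm{lim}\leftarrow\mathrm{cnt}=|\tau|$ and $\mathrm{cnt}\leftarrow 0$; it does not set both to $|\tau|$.
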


\begin{proof}
By Proposition~\ref{prop:storedpass} the first pass loads $x_0(m^\ast_A)$, returns
the three heads to cell $0$, clears the next bank and the next tape, and carries out
the first step of the host. Each later pass carries out one more step, with the
device of Definition~\ref{def:ring}, up to the pass in which the halt bit is set;
that pass clears $\mathrm{booted}$, returns the arm and the copy head to cell $0$ and
hands over at its last step. By Theorem~\ref{thm:selfrep} and
Remark~\ref{rem:wider}, $P^\ast$ on $\tau$ with the device of
Definition~\ref{def:transducer} halts after the steps of
Proposition~\ref{prop:steps} with output $\delta(\tau)\cat \tau$, reads no byte at
$h=|\tau|$ before it requests the rewind, and requests the rewind once, at $h=|\tau|$.

On this run the two devices agree. As $|\tau|<2^w$, $\mathrm{cnt}$ does not wrap.
Before the rewind $\mathrm{dest}=0$, and each read loads the next byte of $\tau$ from
the read tape and increments $\mathrm{cnt}$, so $\mathrm{cnt}=h$. At the rewind the
read head returns to cell $0$ and $\mathrm{lim}$ becomes $|\tau|$. After it, each read
loads the next byte while $\mathrm{cnt}<\mathrm{lim}$ and sets $c_{\mathrm{eot}}$ at
$\mathrm{cnt}=|\tau|$, as Definition~\ref{def:transducer} does at $h=|\tau|$.

The bytes emitted before the rewind, $\delta(\tau)=X$, are written under the arm from
bit $0$ of the cleared bank, the arm moving on by one bit after each, and they fit
since $8|X|\le S\beta$. Those emitted after it, $\tau$, are written under the copy head
from cell $0$ of the cleared tape, and they fit since $8|\tau|\le L$. The active bank
and the read tape are not written, and no other bank or tape is addressed. From the
next step on, the bank holding $X$ is active, at slot $0$.
\end{proof}

By Lemma~\ref{lem:complete} every byte string has a recipe, so $\mathcal S_{A,w}$
builds every image that fits a bank from a recipe that fits a tape, itself among
them, whatever the size of the image.

\begin{theorem}[Self-reproduction of the host by the dynamics]\label{thm:storedrep}
Let $A\ge8$ and $w\ge1$, let $\mathcal S_{A,w}$ have $G$ records, let $\tau'$ be a
recipe of its image with $|\tau'|<2^w$, let $K\ge2$, $S\ge G$ and $L\ge8|\tau'|$,
and let $x_0$ start $\mathcal S_{A,w}$ on $\tau'$ with $\kappa=0$. A generation is
$(P-1)S+G$ steps, where $P$ is the number of steps that
Proposition~\ref{prop:steps} gives for $\tau'$. For every $g\ge1$:
\begin{enumerate}[label=(\roman*),leftmargin=2em]
\item after $g$ generations $\kappa=g\bmod K$, and the banks $0,\dots,\min(g,K-1)$
of the ring, counted back from bank $\kappa$, hold $\mathcal S_{A,w}$ from slot $0$
and their tapes $\tau'$ from cell $0$: the copies accumulate until every bank holds
one;
\item from the $K$-th generation on, the state repeats every $K$ generations;
\item the network $N^\alpha_{K,S,L}$ started from the bits of $x_0$ holds the bits
of these states at the same steps.
\end{enumerate}
\end{theorem}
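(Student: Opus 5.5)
The plan is to follow the proof of Theorem~\ref{thm:organism}, with Theorem~\ref{thm:storedconstruct} in place of Theorem~\ref{thm:construct}. The first generation is Theorem~\ref{thm:storedconstruct} applied to $\tau=\tau'$. Its hypotheses hold: $|\tau'|<2^w$ and $L\ge8|\tau'|$ are assumed, and $X=\delta(\tau')$ is the image of $\mathcal S_{A,w}$, whose $B=\beta G/8\le S\beta/8$ bytes fit a bank. The theorem gives $P$ passes, the last of $G$ steps, so the generation has $(P-1)S+G$ steps. At its end:
\begin{enumerate}[label=(\alph*),leftmargin=2em]
\item the next bank holds the bits of $\mathcal S_{A,w}$ from slot $0$, followed by zeros, with both heads at $0$, so it holds $\mathcal S_{A,w}$ ready;
\item the next tape holds $\tau'$ from cell $0$ followed by zeros;
\item the window has $\mathrm{booted}=0$ and clear requests;
\item the bank index has moved one bank on.
\end{enumerate}
This is exactly the hypothesis ``starts $\mathcal S_{A,w}$ on $\tau'$'' for the new active bank. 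That hypothesis allows the other banks, tapes and window cells to hold anything, so the generation length and the same statement apply again.

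Induction on $g$ then gives (i). I would keep, for each bank and tape, the number of generations since it was last active. Clause (ii) of Theorem~\ref{thm:storedconstruct} says that only the idle bank and the copy tape are written, and the active bank and the read tape are preserved. So the bank written $a$ generations back, for $a\le\min(g,K-1)$, still holds $\mathcal S_{A,w}$ from slot $0$ and its tape still holds $\tau'$. There is one subtlety: a bank that has been active is left with its slot head somewhere, since the hand-over happens at slot $G-1$. The statement only says "from slot $0$", and when the bank becomes active again the idle-bank clear and rewrite in the preceding generation resets it. So (i) does not need the heads.

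For (ii), the main point, and what I expect to be the obstacle, is that the window after a generation must be the same in every generation. Unlike the organism, the window here is large: it holds the state of $H_A$, $\mathrm{cnt}$, $\mathrm{lim}$, $\mathrm{dest}$ and temporaries. I would argue as follows.
\begin{enumerate}[label=(\alph*),leftmargin=2em]
\item The first pass of each generation runs with $\mathrm{booted}=0$. By Proposition~\ref{prop:storedpass}, it overwrites the state of $H_A$ with $x_0(m^\ast_A)$, clears $\mathrm{cnt}$, $\mathrm{lim}$ and $\mathrm{dest}$, and clears the requests.
\item By the stated property of the image, every other cell read after the startup records, apart from the head cells and the sink, is written earlier in the pass.
\item The head cells read the same data in every generation from $K$ on, since the read tape holds $\tau'$ followed by zeros and the cleared bank and tape are the same.
\end{enumerate}
Hence the whole trajectory of written cells within a generation is a function of generation-invariant data, and the window at its end is the same each time. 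Cells no record writes keep their values throughout.

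For the banks and tapes: from generation $K$ on, every bank and tape has been rewritten at least once. Their contents, including the head positions left by the writes and by the hand-over, then depend only on their distance from $\kappa$. This gives periodicity $K$ in generations. Finally, (iii) follows from Theorem~\ref{thm:physnet} step by step.
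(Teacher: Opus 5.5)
Your proposal is correct and follows the paper's proof essentially step by step. One generation comes from Theorem~\ref{thm:storedconstruct} and re-establishes that $\mathcal S_{A,w}$ is started on $\tau'$ in the next bank; induction gives (i); the startup pass with $\mathrm{booted}=0$ and the written-before-read property of the image make the window the same after every generation; banks and tapes depend only on their distance from $\kappa$ once all have been rebuilt; and Theorem~\ref{thm:physnet} gives (iii). Two small points you could state more precisely are that a bank's slot head after its active generation sits at slot $G\bmod S$, and that the heads see the same data in every generation, not only from the $K$-th on.
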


\begin{proof}
The first generation is Theorem~\ref{thm:storedconstruct} with $X$ the image of
$\mathcal S_{A,w}$ and $\tau=\tau'$. After it the next bank holds $\mathcal S_{A,w}$ from
slot $0$ with its heads at $0$, the next tape $\tau'$ from cell $0$ followed by
zeros, and the window $\mathrm{booted}=0$ and clear requests, so the state satisfies
the hypothesis of Theorem~\ref{thm:storedconstruct} for the next bank. A generation
changes only the active bank's slot head, the read tape's head, and the next bank
and tape, which it rebuilds; induction over the generations gives $\kappa$ and the
copies.

For the period, let $g\ge K$. Every bank has then been rebuilt and has been active
since: the bank active $a$ generations ago, $1\le a<K$, holds $\mathcal S_{A,w}$
with its slot head where its hand-over pass left it, at slot $G\bmod S$, and its bit
head at $0$, and its tape holds $\tau'$ followed by zeros with the head where the
copying left it, after $\tau'$; the active bank and its tape have all heads at $0$.
So the banks, the tapes and $\kappa$ after generation $g$ depend only on $g\bmod K$.
In the window, the cells no record writes keep their values. The first pass of a
generation loads the same state of the host and the same registers into every
window with $\mathrm{booted}=0$, and the heads and the tapes they read are the same
in every generation; every cell a record after the startup reads is written earlier
in the pass, apart from the cells under the heads and the sink. So, pass by pass,
the windows of any two generations agree on every cell the records read or write,
and after a generation the window is the same in every generation. The state after
generation $g+K$ therefore equals the state after generation $g$, and
Theorem~\ref{thm:physnet} carries the statement to the network.
\end{proof}

For $\mathcal S$ with $K=3$, $S=G+5=36{,}085$ and $L=8|\tau_{\mathcal S}|+64$, a
generation is $2{,}653{,}813$ passes, $95{,}762{,}842{,}100$ applications of
$\mathcal M^{15}_{3,S,L}$; Appendix~\ref{app:sizes} gives the number of passes in closed
form.

\begin{theorem}[Every word width]\label{thm:widths}
For every $A\ge8$ there is a least $w_A\ge20$ for which the image of
$\mathcal S_A=\mathcal S_{A,w_A}$, of $B_A$ bytes, has $2B_A+1<2^{w_A}$; then its
recipe of literal tokens and $\describe(\mathcal S_A)$ are shorter than $2^{w_A}$.
The stored host $\mathcal S_A$ builds every image that fits a bank and reproduces
itself by the dynamics of $N^{\alpha_A}_{K,S,L}$ as
Theorems~\ref{thm:storedconstruct} and~\ref{thm:storedrep} state. It has
$\Theta(A2^A)$ records and a window of $2^{\alpha_A}$ cells with
$\alpha_A=A+\Theta(\log A)$, and in banks of $G_A$ slots, $G_A$ its number of
records, a generation takes $\Theta(A^34^A)$ applications of
$\mathcal M^{\alpha_A}_{K,G_A,L}$.
\end{theorem}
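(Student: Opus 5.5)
The plan is to first settle the size bookkeeping, and then obtain construction and reproduction from Theorems~\ref{thm:storedconstruct} and~\ref{thm:storedrep}, whose hypotheses are exactly $|\tau|<2^w$, $S\ge G$ and $L\ge8|\tau|$.

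\emph{Existence of $w_A$.} By Construction~\ref{con:stored}, the number of records $G_{A,w}$ is $G^{(b)}_A+A2^A+A+2w+8+11A+12w+75+O(1)$. Here $G^{(b)}_A$ is the step block, which does not depend on $w$. The window then has about $n+|H_A|+3d+2w+O(w)$ cells, so $\alpha$ grows like $\log_2(A2^A+w)$ and the image has $B_{A,w}=\beta G_{A,w}/8$ bytes with $\beta=8+3\alpha$. For fixed $A$, $B_{A,w}$ is therefore $O(w\log w)$, while $2^w$ grows exponentially. Hence the inequality $2B_{A,w}+1<2^w$ holds for all large $w$, and a least $w_A\ge20$ exists.

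\emph{Recipe length and the stated properties.} Lemma~\ref{lem:complete} gives a literal-token recipe of length $B+\lceil B/127\rceil+1\le2B+1<2^{w_A}$. Lemma~\ref{lem:describe} gives $|\describe(\mathcal S_A)|$ no larger than this. So $\tau'=\describe(\mathcal S_A)$ satisfies $|\tau'|<2^{w_A}$. The two properties that the proofs of Theorems~\ref{thm:storedconstruct} and~\ref{thm:storedrep} require (Proposition~\ref{prop:storedpass}, and that every cell read after startup is written earlier in the pass) hold for all $A\ge8$ and $w\ge1$, as stated after that proposition. Applying both theorems with $w=w_A$ then yields construction of every image that fits a bank, and self-reproduction.

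\emph{Asymptotics.} Proposition~\ref{prop:host} and Table~\ref{tab:blocks} give $|H_A|=\Theta(A2^A)$ with fan-in at most $O(A)$. I need to check that the compilation of Lemma~\ref{cor:compile} costs only $O(1)$ records per unit on average, or $O(\text{fan-in})$ records for chains. The chains and the recursion cost $O(k\cdot t)$ per unit, so I will verify from the block list that the recursive units are few: the $\psi$, $C_k$ and decode units sum to $O(A^2)+O(A2^A)$. This gives $G_A=\Theta(A2^A)$. Since $B_A=\Theta(A2^A\log A2^A)$, we get $w_A=\log_2 B_A+O(1)=A+\Theta(\log A)$, which makes $w_A$ negligible. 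The window holds $\Theta(A2^A)$ cells, so $\alpha_A=\lceil\log_2\Theta(A2^A)\rceil=A+\Theta(\log A)$.

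\emph{Generation time.} A generation is $(P-1)G_A+G_A$ steps. By Proposition~\ref{prop:steps}, $P=\Theta(|\delta(\tau')|+|\tau'|)=\Theta(B_A)=\Theta(A2^A\cdot A)$, because $\beta=\Theta(\alpha_A)=\Theta(A)$. Multiplying, $P\cdot G_A=\Theta(A^2 2^A\cdot A2^A)=\Theta(A^34^A)$. For the lower bound on $P$, note that each output byte costs at least four steps. I expect the main obstacle to be the upper bound $G_A=O(A2^A)$: it requires confirming that the threshold recursion units ($\psi_{k,j}$, the equality and carry units with $1<t<k$) contribute only $O(A^3)$ records, and that the fan-in-$2^A$ disjunctions (reads, $Q_i$) compile into chains of total length $O(A2^A)$.
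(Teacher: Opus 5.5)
Your proposal is correct and follows the paper's proof. Records grow linearly in $w$ and the window logarithmically, which gives $w_A$; the literal-token recipe bounds both recipes by $2B_A+1$; and $\alpha_A=A+\Theta(\log A)$, $\beta=\Theta(A)$, $B_A=\Theta(A^22^A)$ and $\Theta(B_A)$ passes of $G_A$ steps give the rest, with the paper reading $G_A=\Theta(A2^A)$ off the closed form of Appendix~\ref{app:sizes}, which is the block-by-block compilation count you propose.

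Two slips, neither affecting the result. $H_A$ does not have fan-in $O(A)$: the fetch, the operand reads and the $Q_i$ are $O(A)$ disjunctions of fan-in up to $2^A$. As you note at the end, their chains still total $O(A2^A)$ records. Also, the subtractor's $\psi_{k,j}$ are conjunction chains totalling $O(A^3)$ records, not $O(A^2)$, which is equally negligible.
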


\begin{proof}
The records of $\mathcal S_{A,w}$ grow linearly in $w$ and its window
logarithmically, so $B$ grows at most quadratically in $w$ while $2^w$ grows
exponentially, and $w_A$ exists. The recipe of literal tokens of Lemma~\ref{lem:complete} has
$B+\lceil B/127\rceil+1\le2B+1$ bytes, and $\describe$ gives no more by
Lemma~\ref{lem:describe}. The window holds the $A2^A+A+1$ state bits, a cell for each
of the $\Theta(A2^A)$ units of $H_A$ and $O(A+w_A)$ further cells, so
$\alpha_A=A+\Theta(\log A)$; then $\beta=8+3\alpha_A=\Theta(A)$,
$B=\beta G_A/8=\Theta(A^22^A)$, and Theorem~\ref{thm:storedrep} gives a generation of
$\Theta(B)$ passes of $G_A$ steps. Appendix~\ref{app:sizes} gives $G_A$, $w_A$ and
the bounds in closed form.
\end{proof}

\section{Reproduction under noise}\label{sec:robust}

The reproductions of Sections~\ref{sec:organism} to~\ref{sec:storedhost} are exact
because every pre-activation is an integer: a run rebuilds the weights bit for bit
while no perturbation of a pre-activation reaches one half. This section asks how
much noise a reproducing run tolerates. It proves this margin of one half, bounds the
probability of leaving the orbit under random perturbations from the
pre-activations the orbit visits, and compares the bounds with the host under noise
(Section~\ref{sec:certified}). It shows that replicating every unit multiplies the
margin, which keeps a run exact under three models of noise, below a threshold in
one of them (Section~\ref{sec:amplify}), and it gives the noise under which a
generation of the stored host, which rebuilds the host's weights in the next bank,
runs exactly (Section~\ref{sec:storednoise}).

The probabilistic theorems are proved in Rocq on a probability space given by its
properties (Appendix~\ref{sec:verification}), and they assume independent
perturbations. Noise that is correlated between units or drifts over time is
covered only by Theorem~\ref{thm:margin}, Proposition~\ref{prop:mismatch} and
Theorem~\ref{thm:amplify}(i), which hold for every perturbation below the margin,
whatever its law. The union bounds, thresholds and redundancies are evaluated in
Rocq; the profile and restoration bounds along particular orbits are computed by
the artifact.

Let $\Hv_{1/2}(z)=1$ if $z\ge -\tfrac12$ and $\Hv_{1/2}(z)=0$ otherwise; on integer
arguments $\Hv_{1/2}$ and $\Hv$ agree. A \emph{layered integer threshold map} is
defined as in Definition~\ref{def:lttm} with $W_\ell\in\Z^{n_\ell\times
n_{\ell-1}}$; the results of this section hold in that generality.

\begin{definition}[Perturbed evaluation]\label{def:perturbed}
Let $F$ be a layered integer threshold map with parameters
$(W_\ell,\theta_\ell)_{\ell\le d}$ and let $\varepsilon\ge 0$. An
\emph{$\varepsilon$-perturbed trajectory} from $x_0$ is a sequence $(\tilde x_t)$
with $\tilde x_0=x_0$ and
$\tilde x_{t+1}=\tilde\Phi_{d,t}\circ\dots\circ\tilde\Phi_{1,t}(\tilde x_t)$, where
$\tilde\Phi_{\ell,t}(y)=\Hv_{1/2}(W_\ell y+\theta_\ell+e_{\ell,t})$ and each
$e_{\ell,t}\in\R^{n_\ell}$ satisfies $\|e_{\ell,t}\|_\infty\le\varepsilon$. The
perturbations may depend on $\ell$, $t$, the current input, and the past.
\end{definition}

For a transducer the device of Definition~\ref{def:transducer} acts on the
perturbed trajectory as on the exact one, between the steps; it is a function
of the state, so every statement of this section holds for the sequence of
states a run visits as it holds for an orbit of $F$.

\begin{theorem}[Integer margin]\label{thm:margin}
If $\varepsilon<\tfrac12$, every $\varepsilon$-perturbed trajectory from $x_0$
satisfies $\tilde x_t=F^t(x_0)$ for all $t\ge 0$. If some unit has
pre-activation $-1$ along the orbit, a perturbation of $+\tfrac12$ at that unit
changes its output, so the constant cannot be increased.
\end{theorem}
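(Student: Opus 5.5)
The plan is a double induction, over the steps $t$ and, within a step, over the layers $\ell$. Since the weights and biases are integers and the inputs lie in $\B$, every exact pre-activation $z=(W_\ell y+\theta_\ell)_i$ is an integer. The key observation is that for an integer $z$ and a real $e$ with $|e|\le\varepsilon<\tfrac12$ we have $\Hv_{1/2}(z+e)=\Hv(z)$. To see this, split on the sign of $z$. If $z\ge0$, then $z+e\ge -\varepsilon>-\tfrac12$, so both sides are $1$. If $z\le-1$, then $z+e\le -1+\varepsilon<-\tfrac12$, so both sides are $0$. This is the only arithmetic the proof needs.

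With the observation in hand, fix $t$ and assume $\tilde x_t=F^t(x_0)$. Write $y_0=\tilde x_t$ and $y_\ell=\Phi_\ell(y_{\ell-1})$ for the exact intermediate vectors. I would show by induction on $\ell$ that $\tilde\Phi_{\ell,t}\circ\dots\circ\tilde\Phi_{1,t}(\tilde x_t)=y_\ell$. At layer $\ell$ the input is the exact $y_{\ell-1}\in\B^{n_{\ell-1}}$, so each coordinate of $W_\ell y_{\ell-1}+\theta_\ell$ is an integer. The perturbation $e_{\ell,t}$ has sup-norm at most $\varepsilon$, and the observation applies coordinatewise. The perturbation may depend on the input and on the past, but this is harmless: the bound is pointwise and uses only $\|e_{\ell,t}\|_\infty\le\varepsilon$, whatever $e_{\ell,t}$ is. Taking $\ell=d$ gives $\tilde x_{t+1}=F(\tilde x_t)=F^{t+1}(x_0)$. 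The base case $t=0$ holds by definition. For a transducer, the device acts on equal states in the same way, so equality is preserved between steps as the remark before the theorem says.

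For sharpness, suppose the unit in row $i$ of layer $\ell$ at step $t$ has exact pre-activation $-1$ on the orbit. Take every other perturbation to be $0$ and this one to be $+\tfrac12$. By the first part, applied with $\varepsilon=0$, the inputs to layer $\ell$ are exact. The perturbed argument is then $-\tfrac12$, and $\Hv_{1/2}(-\tfrac12)=1\ne0=\Hv(-1)$, so the output of that unit flips. The perturbation has size $\tfrac12$, so no margin $\varepsilon\ge\tfrac12$ can work and the constant cannot be increased.

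I expect no real obstacle. The one point needing care is the boundary convention of $\Hv_{1/2}$: the sharpness example relies on $\Hv_{1/2}$ firing at exactly $-\tfrac12$, which is why the strict inequality $\varepsilon<\tfrac12$ is the right hypothesis.
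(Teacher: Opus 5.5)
Your proof is correct and follows the paper's argument essentially step for step. Both rest on the same points: the pre-activation on an exact input is an integer; splitting into the cases $z\ge0$ and $z\le-1$ gives $\Hv_{1/2}(z+e)=\Hv(z)$; induction runs over layers and time; and sharpness comes from the example $\Hv_{1/2}(-1+\tfrac12)=1\neq0=\Hv(-1)$.
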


\begin{proof}
Suppose the input $y$ of layer $\ell$ at time $t$ equals the unperturbed input.
Then $z=W_\ell y+\theta_\ell\in\Z^{n_\ell}$. If $z_i\ge 0$ then
$z_i+e_i\ge-\varepsilon>-\tfrac12$, and if $z_i\le -1$ then
$z_i+e_i\le -1+\varepsilon<-\tfrac12$, so $\Hv_{1/2}(z_i+e_i)=\Hv(z_i)$ and the
layer's output is the unperturbed one. Induction over layers and time gives the
first statement, and $\Hv_{1/2}(-1+\tfrac12)=1\neq 0=\Hv(-1)$ the second.
\end{proof}

Along the first $4000$ steps of the self-describing instance an average of
$26{,}825$ rows of $\Lev(\Nhost)$ per step sit at pre-activation $-1$ and $4{,}066$
at $0$, so the
constant $\tfrac12$ is attained there.

\begin{proposition}[Weight and bias errors]\label{prop:mismatch}
Suppose each layer is realized with weights $W_\ell+\Delta_\ell$ and biases
$\theta_\ell+\Gamma_\ell$, where $|(\Delta_\ell)_{ij}|\le\varepsilon$ if
$(W_\ell)_{ij}\neq 0$, $|(\Delta_\ell)_{ij}|\le\gamma$ if $(W_\ell)_{ij}=0$, and
$|(\Gamma_\ell)_i|\le\eta$, with thresholding by $\Hv_{1/2}$.
\begin{enumerate}[label=(\roman*),leftmargin=2em]
\item If $\varepsilon k_{\ell,i}+\gamma(n_{\ell-1}-k_{\ell,i})+\eta<\tfrac12$ for
every layer $\ell$ and row $i$, where $k_{\ell,i}$ is the effective fan-in of the
row, every orbit of the realized system coincides with the orbit of $F$.
\item Let $(x_t)_{t\le T}$ be the states visited by a run of $F$ from $x_0$, and
let $K^\ast$ and $A^\ast$ be the largest numbers of inputs equal to $1$ that a row
reads at the nonzero and at the zero entries of its layer along the run. If
$\varepsilon K^\ast+\gamma A^\ast+\eta<\tfrac12$, the realized trajectory from
$x_0$ visits $x_0,\dots,x_T$.
\end{enumerate}
\end{proposition}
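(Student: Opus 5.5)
The plan is to reduce both parts to Theorem~\ref{thm:margin}. I will show that the realized pre-activation of every row, on the exact input of its layer, differs from the exact integer pre-activation by less than $\tfrac12$. Then the same case split ($z_i\ge0$ versus $z_i\le-1$) together with induction over layers and steps gives the conclusion.

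Fix a layer $\ell$ and a row $i$, and suppose the input $y\in\B^{n_{\ell-1}}$ of the layer equals the unperturbed one. The realized pre-activation is
\[
z_i+\sum_j(\Delta_\ell)_{ij}y_j+(\Gamma_\ell)_i,\qquad z_i=(W_\ell y+\theta_\ell)_i\in\Z .
\]
I will regard the error $e_i=\sum_j(\Delta_\ell)_{ij}y_j+(\Gamma_\ell)_i$ as an additive perturbation. Since $y_j\in\B$, only the indices with $y_j=1$ contribute. I will split them into the nonzero entries of row $i$ and the zero entries. This gives
\[
|e_i|\le\varepsilon\,|\{j:(W_\ell)_{ij}\ne0,\ y_j=1\}|+\gamma\,|\{j:(W_\ell)_{ij}=0,\ y_j=1\}|+\eta .
\]
For (i) I will bound the first count by $k_{\ell,i}$ and the second by $n_{\ell-1}-k_{\ell,i}$, uniformly in $y$. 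Then $|e_i|<\tfrac12$ holds for every input, and the argument of Theorem~\ref{thm:margin} applies from every initial state, so every orbit coincides with that of $F$.

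For (ii) the bound holds only on the inputs the run actually visits, so the induction has to carry the run along. I will induct on $t$ and, within a step, on $\ell$. The hypothesis is that the realized trajectory has reached $x_t$ and that the layer-$\ell$ input equals the exact intermediate value $\Phi_{\ell-1}\circ\dots\circ\Phi_1(x_t)$. On that input, the counts above are at most $K^\ast$ and $A^\ast$ by their definition as maxima along the run. Hence $|e_i|\le\varepsilon K^\ast+\gamma A^\ast+\eta<\tfrac12$, thresholding by $\Hv_{1/2}$ reproduces $\Hv(z_i)$, and the next layer's input is again exact. For a transducer the device acts between steps as a function of the state, as remarked after Definition~\ref{def:perturbed}, so the step from $x_t$ to $x_{t+1}$ goes through unchanged.

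The only delicate point is in (ii). $K^\ast$ and $A^\ast$ are defined over the exact run, including its intermediate layers, so I must make sure the induction invokes them only on inputs already known to be exact. Ordering the induction by (step, layer), and using that each layer's exactness is established before its output is read, handles this. The rest is the triangle inequality and the fact that $z_i$ is an integer.
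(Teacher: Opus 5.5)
Your proposal is correct and follows the paper's own proof. On a Boolean input the error of a row is at most $\sum_{j:y_j=1}|(\Delta_\ell)_{ij}|+\eta$, which is bounded by the left side of~(i) for every input and by that of~(ii) while the trajectories agree, and the induction of Theorem~\ref{thm:margin} then applies.
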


\begin{proof}
On an input $y\in\B^{n_{\ell-1}}$,
$|(\Delta_\ell y)_i+(\Gamma_\ell)_i|\le\sum_{j:y_j=1}|(\Delta_\ell)_{ij}|+\eta$,
which is at most the left side of~(i), and of~(ii) while the trajectories agree;
the induction of Theorem~\ref{thm:margin} applies.
\end{proof}

The term $\gamma(n_{\ell-1}-k_{\ell,i})$, the leakage of the entries the design
sets to zero, grows with the width of the layer; along an orbit only the inputs
equal to $1$ leak, and few bits of a SUBLEQ state are set at any time.

\begin{proposition}[Independent perturbations]\label{prop:random}
Suppose each pre-activation evaluation is perturbed by an independent random
variable $e$ with $\Pr(|e|\ge\tfrac12)\le p$, and let $M=\sum_\ell n_\ell$. The
probability that the perturbed trajectory differs from the orbit of $F$ at some
time $t\le T$ is at most $TMp$. For centred Gaussian perturbations of standard
deviation $s$, $p=\erfc\bigl(1/(2\sqrt2\,s)\bigr)$.
\end{proposition}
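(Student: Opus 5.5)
The plan is to reduce the claim to Theorem~\ref{thm:margin} through a union bound over the finitely many pre-activation evaluations of a run. Consider the event $E$ that some evaluation, at some layer $\ell$, row $i$ and time $t\le T$, receives a perturbation with $|e|\ge\tfrac12$; there are $M$ evaluations per step and $T$ steps. On the complement of $E$, every perturbation is strictly below $\tfrac12$ in absolute value. The proof of Theorem~\ref{thm:margin}, applied row by row, then shows that each layer reproduces the unperturbed output, because it uses only $|e_i|<\tfrac12$ at each row and not a uniform $\varepsilon$. Induction over layers and time then gives $\tilde x_t=F^t(x_0)$ for all $t\le T$. So the event that the trajectory leaves the orbit by time $T$ is contained in $E$.

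One point needs care. Because the perturbed trajectory is random, the evaluations it performs depend on past noise. I would therefore index the noise variables by position, $(t,\ell,i)$, rather than by the input actually seen. There is one variable per position, independent of the state, so $E$ is a union of $TM$ events, each of probability at most $p$, and $\Pr(E)\le TMp$ follows by subadditivity. Independence is not needed for this bound; it only guarantees that the variables are well defined on a common space with the stated marginal bounds.

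For the Gaussian case, take $e\sim\mathcal N(0,s^2)$. Then
\[
\Pr(|e|\ge\tfrac12)=2\Pr\bigl(e/s\ge 1/(2s)\bigr)=\erfc\bigl(1/(2\sqrt2\,s)\bigr),
\]
since $\Pr(Z\ge u)=\tfrac12\erfc(u/\sqrt2)$ for a standard normal $Z$. The only real obstacle is the formal handling of the adaptive trajectory in the measurable setting. Fixing the noise per position before the run, and defining the trajectory as a deterministic function of those noise values, removes it.
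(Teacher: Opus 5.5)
Your proposal is correct and takes the same route as the paper: the row-wise margin argument of Theorem~\ref{thm:margin}, a union bound over the $TM$ evaluations, and $\Pr(|e|\ge\tfrac12)=2Q(1/(2s))=\erfc\bigl(1/(2\sqrt2\,s)\bigr)$ in the Gaussian case. The paper states the containment through the first deviation, which occurs at an evaluation whose input is still on the orbit and whose perturbation is at least $\tfrac12$. That event is slightly finer than your ``some perturbation is large'' and is the one Proposition~\ref{prop:profile} refines, but for the bound $TMp$ the two give the same result.
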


\begin{proof}
By the argument of Theorem~\ref{thm:margin}, the first deviation occurs at an
evaluation whose input is unperturbed and whose perturbation satisfies
$|e|\ge\tfrac12$. There are $TM$ evaluations up to time $T$.
\end{proof}

\begin{proposition}[Profile bound]\label{prop:profile}
Let the perturbations be independent centred Gaussians of standard deviation
$s$, let $(x_t)$ be the orbit of $F$ from $x_0$, and for $t\ge0$ and each row $i$
of each layer let $z_{t,i}$ be the pre-activation of row $i$ on the input
$x_t$. Put
\begin{gather*}
b'(t)=1-\prod_i\Bigl(1-Q\bigl(|z_{t,i}+\tfrac12|/s\bigr)\Bigr),\qquad
b(t)=\sum_i Q\bigl(|z_{t,i}+\tfrac12|/s\bigr),\\
Q(u)=\Pr(\mathcal N(0,1)>u)=\tfrac12\erfc(u/\sqrt2).
\end{gather*}
Given that a perturbed trajectory from $x_0$ agrees with the orbit up to step $t$,
the probability that it differs from it at step $t+1$ is at most $b'(t)$, the
probability that some row of the step fails while it reads the values of the orbit;
and $b'(t)\le b(t)\le\tfrac12Mp$. If copies of the trajectory are run from $x_0$,
each stopped at its first deviation or after $T$ steps, the expected number of
copies that deviate is at most the expectation of $\sum_c\sum_{t<e_c}b'(t)$, where
$e_c$ is the number of steps copy $c$ runs.
\end{proposition}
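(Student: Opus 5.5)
We follow the induction of Theorem~\ref{thm:margin}, made probabilistic. Condition on the event $E_t$ that the perturbed trajectory agrees with the orbit up to step $t$, so $\tilde x_t=x_t$. The step is evaluated layer by layer, and as long as every row so far has produced its orbit value, the next layer reads exactly the orbit's input. Its row $i$ therefore has pre-activation $z_{t,i}+e$, with $e$ the fresh perturbation of that evaluation. Row $i$ fails exactly when $\Hv_{1/2}(z_{t,i}+e)\neq\Hv(z_{t,i})$. For $z\ge0$ this means $e<-(z+\tfrac12)$, and for $z\le-1$ it means $e\ge-(z+\tfrac12)=|z+\tfrac12|$. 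By symmetry of the centred Gaussian the failure probability is $Q(|z_{t,i}+\tfrac12|/s)$ in both cases, and the boundary is a null set.

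The event that the step deviates is contained in the event $F_t$ that some row fails while it reads the values of the orbit. On the complement of $F_t$ every row reproduces its orbit value, and induction over the layers gives $\tilde x_{t+1}=x_{t+1}$. To evaluate $\Pr(F_t\mid E_t)$ we use a coupling. Define hypothetical outcomes in which every row of step $t$ is fed the orbit input, with its own perturbation. Since the perturbations of step $t$ are independent of the past, and hence of $E_t$, these failure indicators are independent Bernoulli variables with the probabilities above. The actual first failure coincides with the first hypothetical failure, so $F_t$ is the union of independent events and $\Pr(F_t\mid E_t)=1-\prod_i(1-q_i)=b'(t)$.

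The chain of bounds then follows. First, $1-\prod(1-q_i)\le\sum q_i$ by the union bound, which gives $b'(t)\le b(t)$. Second, since $z_{t,i}$ is an integer, $|z_{t,i}+\tfrac12|\ge\tfrac12$, so $Q(|z_{t,i}+\tfrac12|/s)\le Q(1/(2s))=\tfrac12\erfc(1/(2\sqrt2 s))=\tfrac12 p$. Summing over the $M$ rows gives $b(t)\le\tfrac12 Mp$.

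For the copies, linearity of expectation reduces the claim to a single copy $c$. Write $D_t$ for the event that copy $c$ first deviates at step $t+1$. Then $\Pr(D_t)=\Pr(E_t)\Pr(\text{deviation}\mid E_t)\le\mathbb E[\mathbf 1_{t<e_c}\,b'(t)]$, since $E_t$ is exactly the event that the copy is still running at step $t$ and $b'(t)$ is deterministic. Summing over $t<T$ gives $\Pr(\text{copy } c \text{ deviates})\le\mathbb E\sum_{t<e_c}b'(t)$, and summing over $c$ gives the claim.

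The main obstacle is the rigorous conditioning step. We must justify that fresh perturbations are independent of $E_t$, and that the coupling with the hypothetical orbit-fed rows identifies $F_t$ exactly. This is what makes $b'(t)$ the exact failure probability and not just an upper bound. Formally, we would build the probability space so that the step-$t$ noise vector is independent of the $\sigma$-algebra generated by earlier noise.
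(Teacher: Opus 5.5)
Your proof is correct and follows the paper's argument: condition on agreement up to step $t$, compute each row's failure probability as $Q(|z_{t,i}+\tfrac12|/s)$, get $b'(t)$ from the independence of the rows' perturbations while every earlier row reproduces the orbit, and bound the copies by $\sum_t\Pr(e_c>t)\,b'(t)$. Your coupling with orbit-fed evaluations is a cleaner phrasing of the paper's layer-by-layer product, and the chain $b'(t)\le b(t)\le\tfrac12Mp$ follows, as in the paper, from the union bound and $|z_{t,i}+\tfrac12|\ge\tfrac12$.
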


\begin{proof}
Condition on agreement up to step $t$, so the input of the first layer at step
$t+1$ is $x_t$. As in Theorem~\ref{thm:margin}, the output of the step differs
from $x_{t+1}$ only if some row, reading an unperturbed input, fails, that is, has
its perturbed pre-activation on the far side of $-\tfrac12$: for $z_{t,i}\ge0$ this
is the event $e<-(z_{t,i}+\tfrac12)$, and for $z_{t,i}\le-1$ the event
$e\ge-\tfrac12-z_{t,i}$, each of probability $Q(|z_{t,i}+\tfrac12|/s)$. While no row has
failed, every row of the next layer reads the values of the orbit, and the
perturbations of its rows are independent; so no row of the step fails with
probability $\prod_i(1-Q(|z_{t,i}+\tfrac12|/s))$, layer by layer, which gives $b'(t)$. The
union bound gives $b'(t)\le b(t)$, and $|z_{t,i}+\tfrac12|\ge\tfrac12$ gives
$Q(|z_{t,i}+\tfrac12|/s)\le Q(1/(2s))=p/2$. For the last statement, copy $c$ deviates at
step $t+1$ with probability at most $b'(t)$ given that it has run $t$ steps
without deviating, so its probability of deviating is at most
$\sum_t\Pr(e_c>t)\,b'(t)=\mathbb E\bigl[\sum_{t<e_c}b'(t)\bigr]$.
\end{proof}

\subsection{Tolerances of the host}\label{sec:certified}

For $\Lev(\Nhost)$, with $M=41{,}967$ rows, the largest effective fan-in is $256$,
attained by the $40$ units of the $256$-way disjunctions of the operand fetch and
the read multiplexer, and the widest layer has $8210$ rows.
Proposition~\ref{prop:mismatch}(i) therefore makes every orbit exact for
$\varepsilon<1/512$ when nothing leaks, and for $\gamma<6.09\times10^{-5}$ when the
weights are exact. Along the first $10{,}000$ steps of the self-describing instance
$K^\ast=8$ and $A^\ast=349$, so (ii) certifies $\gamma<1.43\times10^{-3}$ and
$\varepsilon<1/16$, each alone; along the self-reproducing run $A^\ast=441$, which
certifies $\gamma<1.13\times10^{-3}$. Under uniform random errors the first
$10{,}000$ steps stayed exact up to $\gamma=3.9\times10^{-3}$ and up to
$\varepsilon=1/16$.

Under Gaussian read noise of deviation $s=0.05$, $p=1.52\times10^{-23}$. Over the
$T=1{,}664{,}939$ steps of the self-reproducing run $TM=6.99\times10^{10}$, and
Proposition~\ref{prop:random} bounds the probability of a deviation by
$1.1\times10^{-12}$; at $s=0.10$ this bound exceeds one. Along the first $4000$
steps of the self-describing instance, $73.61\%$ of the rows sit at distance
$\tfrac12$ from the comparator at every step and the rest at distance at least
$\tfrac32$, so the sum $b$ of Proposition~\ref{prop:profile} lies a factor $2.7$
below $Mp$ wherever it is below one.

With $256$ noisy copies stepped beside the exact orbit, no copy left it in
$768{,}000$ step--copy pairs at $s\le0.06$, and failures were observed from
$s=0.08$ on. The product form $b'(t)$ is the exact probability that some row of
step $t$ fails while the state is on the orbit, and a failed row need not change
the state: with noisy copies restarted from the exact state at every step, the
steps with a failing row matched $b'$ within $1.3$ standard deviations at every
level, and $4.9$ to $6.3\%$ of them ended on the orbit at $s=0.09$ to $0.12$
(Appendix~\ref{app:measure}). So $b'$ exceeds the rate at which the host leaves its
orbit by that fraction, and at the sample size of Table~\ref{tab:curve} it lies
within the exact $95\%$ Poisson interval of the measured rate at every level; the
sum $b$ does so up to $s=0.11$ and lies above it from $s=0.12$ on, where several
rows of a step fail together (Figure~\ref{fig:noise}(a)).

\begin{figure}[t]
\centering
\begin{tikzpicture}
\begin{semilogyaxis}[name=a,width=7.4cm,height=5.0cm,font=\footnotesize,
  xlabel={noise deviation $s$},ylabel={per-step probability},
  xmin=0.055,xmax=0.155,ymin=1e-13,ymax=1e2,
  xtick={0.06,0.08,0.10,0.12,0.14},
  xticklabel style={/pgf/number format/fixed,/pgf/number format/fixed zerofill,/pgf/number format/precision=2},
  ytick={1e-12,1e-9,1e-6,1e-3,1},
  legend style={font=\scriptsize,at={(0.97,0.03)},anchor=south east,draw=none,fill=white,fill opacity=0.8,text opacity=1},
  legend cell align=left,title={(a) $\Lev(\Nhost)$},title style={font=\footnotesize}]
\addplot[densely dashed,thick,black!55] coordinates {(0.06,3.30e-12) (0.08,1.72e-5) (0.09,1.16e-3) (0.10,2.41e-2) (0.11,2.30e-1) (0.12,1.30) (0.15,3.60e1)};
\addlegendentry{$Mp$}
\addplot[densely dotted,thick,black] coordinates {(0.06,1.21e-12) (0.08,6.34e-6) (0.09,4.27e-4) (0.10,8.85e-3) (0.11,8.42e-2) (0.12,4.73e-1) (0.15,1.30e1)};
\addlegendentry{profile sum $b$}
\addplot[thick,black] coordinates {(0.06,1.21e-12) (0.08,6.34e-6) (0.09,4.27e-4) (0.10,8.81e-3) (0.11,8.07e-2) (0.12,3.77e-1) (0.15,1.00)};
\addlegendentry{profile product $b'$}
\addplot[only marks,mark=*,mark size=1.6pt,error bars/.cd,y dir=both,y explicit] coordinates {
  (0.08,6.94e-6) += (0,7.36e-6) -= (0,4.15e-6)
  (0.09,4.07e-4) += (0,6.01e-5) -= (0,5.43e-5)
  (0.10,8.71e-3) += (0,1.14e-3) -= (0,1.03e-3)
  (0.11,7.70e-2) += (0,1.00e-2) -= (0,9.14e-3)
  (0.12,3.59e-1) += (0,4.68e-2) -= (0,4.25e-2)
  (0.15,1.00) += (0,0) -= (0,1.19e-1)};
\addlegendentry{measured, $95\%$}
\end{semilogyaxis}
\begin{semilogyaxis}[at={(a.east)},anchor=west,xshift=1.0cm,width=7.4cm,height=5.0cm,font=\footnotesize,
  xlabel={noise deviation $s$},
  xmin=0.19,xmax=0.37,ymin=1e-13,ymax=1e2,
  xtick={0.20,0.25,0.30,0.35},
  xticklabel style={/pgf/number format/fixed,/pgf/number format/fixed zerofill,/pgf/number format/precision=2},
  ytick={1e-12,1e-9,1e-6,1e-3,1},
  legend style={font=\scriptsize,at={(0.97,0.03)},anchor=south east,draw=none,fill=white,fill opacity=0.8,text opacity=1},
  legend cell align=left,title={(b) $N_3$, read by majority},title style={font=\footnotesize}]
\addplot[densely dashed,thick,black!55] coordinates {(0.20,8.03e-9) (0.22,1.16e-6) (0.25,2.48e-4) (0.27,3.48e-3) (0.30,7.22e-2) (0.33,6.90e-1) (0.36,3.89)};
\addlegendentry{first order}
\addplot[thick,black] coordinates {(0.20,3.59e-13) (0.22,1.82e-10) (0.25,1.58e-7) (0.27,4.57e-6) (0.30,2.26e-4) (0.33,4.34e-3) (0.36,4.30e-2)};
\addlegendentry{restoration bound}
\addplot[only marks,mark=*,mark size=1.6pt,error bars/.cd,y dir=both,y explicit] coordinates {
  (0.27,4.04e-6) += (0,5.39e-6) -= (0,2.73e-6)
  (0.30,2.13e-4) += (0,5.97e-5) -= (0,4.94e-5)
  (0.33,3.66e-3) += (0,1.02e-3) -= (0,8.42e-4)
  (0.36,2.93e-2) += (0,8.10e-3) -= (0,6.72e-3)};
\addlegendentry{measured, $95\%$}
\addplot[only marks,mark=o,mark size=1.8pt] coordinates {(0.24,9.79e-7) (0.27,8.47e-5) (0.30,2.46e-3) (0.33,2.58e-2)};
\addlegendentry{a copy deviates}
\end{semilogyaxis}
\end{tikzpicture}
\caption{The per-step probability of leaving the orbit under Gaussian read noise
of deviation $s$, against the bounds of Section~\ref{sec:robust}; the lines join
the values computed at the deviations of Tables~\ref{tab:curve}
and~\ref{tab:bound3}, and the error bars are exact $95\%$ intervals for a Poisson
count. (a)~$\Lev(\Nhost)$: the rate at which noisy copies first leave the exact
orbit, the product form $b'$ and the sum $b$ of the profile bound of
Proposition~\ref{prop:profile}, and the bound $Mp$ of
Proposition~\ref{prop:random}. (b)~$N_3$: the rate at which the state read by
majority first leaves the orbit, the restoration bound of
Proposition~\ref{prop:restore} and the first-order bound of
Theorem~\ref{thm:amplify}(ii), with the rate at which some copy of some state bit
first deviates.}
\label{fig:noise}
\end{figure}

\subsection{Fault tolerance by amplification}\label{sec:amplify}

Theorem~\ref{thm:margin} tolerates perturbations below $\tfrac12$ because every
pre-activation is an integer. Replicating the state and letting every unit read
every copy multiplies the margin. Let $\rho_r\colon\B^n\to\B^{rn}$ be the
replication $\rho_r(x)=(x,\dots,x)$, copy $c$ occupying the coordinates
$cn,\dots,cn+n-1$, let $J_r$ be the all-ones $r\times r$ matrix and
$\mathbf 1_r$ the all-ones vector.

\begin{definition}[Amplification]\label{def:amplify}
For an LTTM $F$ with parameters $(W_\ell,\theta_\ell)_{\ell\le d}$ and an odd
$r\ge1$, the \emph{$r$-fold amplification} $F^{(r)}$ is the LTTM on $\B^{rn}$
with parameters
\[
W^{(r)}_\ell=J_r\otimes W_\ell\in\Tern^{rn_\ell\times rn_{\ell-1}},\qquad
\theta^{(r)}_\ell=\mathbf 1_r\otimes\bigl(r\,\theta_\ell+\tfrac{r-1}{2}\bigr)\in\Z^{rn_\ell}.
\]
Row $(c,i)$ of layer $\ell$ reads every copy of input $j$ with the weight
$(W_\ell)_{ij}$, so its effective fan-in is $r$ times that of row $i$; $F^{(r)}$
has $rM$ rows and $r^2$ times the nonzero entries of $F$, and $F^{(1)}=F$.
\end{definition}

\begin{proposition}[Amplified margin]\label{prop:amplify}
Let $F$ be an LTTM and $r$ odd.
\begin{enumerate}[label=(\roman*),leftmargin=2em]
\item $F^{(r)}\circ\rho_r=\rho_r\circ F$, and the same holds for each layer.
\item On the input $\rho_r(y)$, the pre-activation of row $(c,i)$ of layer
$\ell$ is $rz+\tfrac{r-1}{2}$, where $z\in\Z$ is the pre-activation of row $i$
of $\Phi_\ell$ on $y$, and its distance from $-\tfrac12$ is $r\,|z+\tfrac12|\ge
r/2$.
\item If the input differs from $\rho_r(y)$ in $f$ coordinates, every
pre-activation differs from the value in~(ii) by at most $f$; a row whose inputs
carry $f<r/2$ wrong copies therefore computes its value on $\rho_r(y)$ under
every perturbation of magnitude less than $r/2-f$.
\end{enumerate}
\end{proposition}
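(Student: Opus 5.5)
The plan is to compute the pre-activation of row $(c,i)$ of layer $\ell$ of $F^{(r)}$ directly from the Kronecker structure, and read off (ii) first, then (i), then (iii).

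\textbf{Step (ii).} For an input $\rho_r(y)$, copy $c'$ holds $y$. Row $(c,i)$ has weight $(W_\ell)_{ij}$ on every copy of input $j$, so its weighted sum is $\sum_{c'<r}\sum_j (W_\ell)_{ij}y_j = r\,(W_\ell y)_i$. Adding the bias $r(\theta_\ell)_i+\tfrac{r-1}{2}$ gives $rz+\tfrac{r-1}{2}$ with $z=(W_\ell y+\theta_\ell)_i$. Then
\[
rz+\tfrac{r-1}{2}+\tfrac12 = r\bigl(z+\tfrac12\bigr),
\]
so the distance to $-\tfrac12$ is $r|z+\tfrac12|$, which is at least $r/2$ because $z\in\Z$. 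The bias is an integer because $r$ is odd.

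\textbf{Step (i).} The sign of $r(z+\tfrac12)$ is the sign of $z+\tfrac12$. For integer $z$, $z\ge0$ holds exactly when $z+\tfrac12>0$. Moreover, $rz+\tfrac{r-1}{2}\ge0$ exactly when $z\ge0$: for $z\le-1$ the value is at most $-r+\tfrac{r-1}{2}<0$. Hence every row $(c,i)$ outputs $\Hv(z)$, the output of row $i$ of $\Phi_\ell$ on $y$, and $\Phi^{(r)}_\ell\circ\rho_r=\rho_r\circ\Phi_\ell$. Composing over the layers gives $F^{(r)}\circ\rho_r=\rho_r\circ F$.

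\textbf{Step (iii).} If the input differs from $\rho_r(y)$ in $f$ coordinates, each differing coordinate changes the weighted sum by a weight in $\Tern$ times a difference in $\{-1,1\}$, so by at most one. The pre-activation therefore moves by at most $f$; only coordinates the row actually reads count, which is the intended reading of "wrong copies among its inputs". Now let $|e|<r/2-f$. The total deviation from $rz+\tfrac{r-1}{2}$ is then below $r/2$, which is at most the distance $r|z+\tfrac12|$ to the threshold $-\tfrac12$. So the perturbed pre-activation stays strictly on the same side of $-\tfrac12$, and $\Hv_{1/2}$ returns $\Hv(z)$. This is the argument of Theorem~\ref{thm:margin} with margin $r/2-f$ in place of $\tfrac12$. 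The condition $f<r/2$ only ensures this margin is positive.

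\textbf{Main obstacle.} The only delicate point is bookkeeping: the threshold comparison uses $\Hv$ in (i) but $\Hv_{1/2}$ in (iii), and one must check that the odd offset $\tfrac{r-1}{2}$ places both cases $z\ge0$ and $z\le-1$ at distance at least $r/2$ from $-\tfrac12$. A worked check with $z=0$ and $z=-1$ suffices.
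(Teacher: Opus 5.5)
Your proof is correct and follows the paper's argument: you compute the pre-activation $rz+\tfrac{r-1}{2}$ from the Kronecker structure, split into the cases $z\ge0$ and $z\le-1$ to get (ii) and then (i) layer by layer, and bound the effect of each wrong input by one for (iii). You also spell out the final step of (iii), that a total deviation below $r/2\le r|z+\tfrac12|$ keeps the perturbed pre-activation strictly on the same side of $-\tfrac12$, which the paper leaves implicit.
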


\begin{proof}
$(J_r\otimes W_\ell)\rho_r(y)=\mathbf 1_r\otimes rW_\ell y$, so row $(c,i)$ has
pre-activation $r(W_\ell y)_i+r(\theta_\ell)_i+\tfrac{r-1}{2}=rz+\tfrac{r-1}{2}$. If
$z\ge0$ this is at least $\tfrac{r-1}{2}\ge0$, the row outputs $1=\Hv(z)$, and
the distance from $-\tfrac12$ is $rz+\tfrac r2=r(z+\tfrac12)$; if $z\le-1$ it is
at most $-\tfrac{r+1}{2}<0$, the row outputs $0=\Hv(z)$, and the distance is
$-(rz+\tfrac r2)=r|z+\tfrac12|$. This is~(ii), and~(i) follows layer by layer.
For~(iii), every weight has magnitude at most $1$, so each wrong input moves the
pre-activation by at most $1$.
\end{proof}

\begin{theorem}[Fault tolerance by amplification]\label{thm:amplify}
Let $F$ be an LTTM with $M$ rows, $r$ odd and $x_0\in\B^n$.
\begin{enumerate}[label=(\roman*),leftmargin=2em]
\item Every $\varepsilon$-perturbed trajectory of $F^{(r)}$ from $\rho_r(x_0)$
with $\varepsilon<r/2$ equals $\bigl(\rho_r(F^t(x_0))\bigr)_{t\ge0}$.
\item Let the perturbations be independent with $\Pr(|e|\ge u)\le\varphi(u)$
for a non-increasing $\varphi$. The probability that the perturbed trajectory of
$F^{(r)}$ from $\rho_r(x_0)$ differs from $\rho_r(F^t(x_0))$ at some $t\le T$
is at most $T\,rM\,\varphi(r/2)$. For centred Gaussian perturbations of standard
deviation $s$ it is at most $T\,rM\erfc\bigl(r/(2\sqrt2\,s)\bigr)$, and at most
$r\sum_{t<T}\sum_iQ(r|z_{t,i}+\tfrac12|/s)$ in terms of the pre-activations
$z_{t,i}$ of $F$ of Proposition~\ref{prop:profile}.
\item Let $\delta\in(0,1)$. Under Gaussian perturbations the trajectory follows
the orbit for $T$ steps with probability at least $1-\delta$ whenever
$r\ge2\sqrt2\,s\sqrt{\ln(TrM/\delta)}$, so $r=O\bigl(s\sqrt{\log(TM/\delta)}\bigr)$
suffices. When $\varphi(u)\le e^{-u/\lambda}$ it does so whenever
$r\ge2\lambda\ln(TrM/\delta)$: the bound $p=e^{-1/(2\lambda)}$ at margin
$\tfrac12$ becomes $p^{\,r}$.
\end{enumerate}
\end{theorem}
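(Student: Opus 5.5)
We derive all three parts from Proposition~\ref{prop:amplify}, repeating the induction of Theorem~\ref{thm:margin} and the union bounds of Propositions~\ref{prop:random} and~\ref{prop:profile} with margin $r/2$ in place of $\tfrac12$.

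For (i) we induct over layers and time. Suppose the input of layer $\ell$ at time $t$ equals $\rho_r(y)$, where $y$ is the unperturbed input of $\Phi_\ell$. By Proposition~\ref{prop:amplify}(ii), every row $(c,i)$ then has pre-activation $rz+\tfrac{r-1}{2}$, and this value lies at distance at least $r/2$ from $-\tfrac12$, on the side given by $\Hv(z)$. A perturbation of magnitude $\varepsilon<r/2$ therefore leaves $\Hv_{1/2}$ unchanged. The layer outputs $\rho_r(\Phi_\ell(y))$, which is Proposition~\ref{prop:amplify}(i), and the induction closes.

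For (ii) we argue as in Proposition~\ref{prop:random}. By the argument of (i), the first deviation occurs at an evaluation whose input is an exact replicated state and whose perturbation has $|e|$ at least the distance of that row's pre-activation from $-\tfrac12$. That distance is at least $r/2$. Since $\varphi$ is non-increasing, each evaluation contributes at most $\varphi(r/2)$. There are $T\,rM$ evaluations up to time $T$, so the union bound gives $T\,rM\,\varphi(r/2)$.

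In the Gaussian case, $\Pr(|e|\ge r/2)=\erfc\bigl(r/(2\sqrt2\,s)\bigr)$. For the profile form we use the exact distance $r|z_{t,i}+\tfrac12|$ from Proposition~\ref{prop:amplify}(ii). Failure is a one-sided event, as in Proposition~\ref{prop:profile}, so its probability is $Q\bigl(r|z_{t,i}+\tfrac12|/s\bigr)$. Each row $i$ of $F$ has $r$ copies with the same pre-activation, which gives the factor $r$. Summing over $t<T$ (conditioning on agreement up to step $t$) and over rows, and then taking a union bound, yields the stated sum.

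For (iii) we bound $\erfc(u)\le e^{-u^2}$ for $u\ge0$. The failure probability in (ii) is then at most $TrM\,e^{-r^2/(8s^2)}$. This is at most $\delta$ whenever $r^2\ge 8s^2\ln(TrM/\delta)$, which is exactly the condition $r\ge2\sqrt2\,s\sqrt{\ln(TrM/\delta)}$.

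The condition is implicit, since $r$ appears inside the logarithm. To get $r=O\bigl(s\sqrt{\log(TM/\delta)}\bigr)$ we note that $\ln r$ grows more slowly than $r^2$. Concretely, $r=C s\sqrt{\ln(TM/\delta)}$ with a fixed constant $C$ satisfies the inequality once we absorb $\ln r\le r^2/(16s^2)+O(\ln s)$. This asymptotic absorption is the only delicate step, and we will handle it loosely: either assume $s$ is bounded below by a constant, or state the bound in the implicit form.

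For the exponential tail, $\varphi(r/2)\le e^{-r/(2\lambda)}=p^r$ with $p=e^{-1/(2\lambda)}$. So $TrM\,p^r\le\delta$ exactly when $r\ge2\lambda\ln(TrM/\delta)$.
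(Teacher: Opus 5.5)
Your proposal is correct and follows the paper's proof essentially line by line. It uses Proposition~\ref{prop:amplify}(ii) with the induction of Theorem~\ref{thm:margin} for (i), a first-deviation union bound over the $TrM$ row evaluations for (ii), with the exact one-sided tail $Q(r|z_{t,i}+\tfrac12|/s)$ for the profile form, and $\erfc(u)\le e^{-u^2}$ for (iii).

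The paper does not argue the $O(\cdot)$ remark at all. If you keep your discussion of it, note that the term that resists absorption is the $\ln s$ inside $\ln r$, which is harmful for large $s$. So the natural reading is asymptotic in $TM/\delta$ with $s$ fixed or bounded above, not an assumption that $s$ is bounded below.
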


\begin{proof}
By Proposition~\ref{prop:amplify}(ii), on an input equal to the replicated
state of the orbit every row of $F^{(r)}$ lies at distance at least $r/2$ from
the comparator, and the argument of Theorem~\ref{thm:margin} gives~(i). For~(ii),
the first step at which the perturbed trajectory differs from the orbit, if
there is one, reads an exact input, and a row of that step changes its output
only if its perturbation has magnitude at least its distance $r|z+\tfrac12|$
from the comparator, on the far side: the probability is at most
$\varphi(r/2)$, and for Gaussian perturbations exactly $Q(r|z+\tfrac12|/s)$,
which is at most $\tfrac12\erfc(r/(2\sqrt2\,s))$. The union bound over the $rM$
rows of a step and the $T$ steps gives the three bounds, the last summing the
exact probabilities over the profile, each row of $F$ contributing $r$ rows of
$F^{(r)}$. For~(iii), $\erfc(u)\le e^{-u^2}$ for $u\ge0$, so
$TrM\erfc(r/(2\sqrt2\,s))\le TrM\,e^{-r^2/(8s^2)}\le\delta$ when
$r^2\ge8s^2\ln(TrM/\delta)$; and $TrM\,e^{-r/(2\lambda)}\le\delta$ when
$r\ge2\lambda\ln(TrM/\delta)$, where $e^{-r/(2\lambda)}=(e^{-1/(2\lambda)})^r$.
\end{proof}

In (ii) and (iii) the perturbation of a row has the same distribution whatever the
row's fan-in, as for noise added at the input of a comparator, and every copy adds
margin against it. When the noise sits on every synapse instead, the error of a row
grows with its active inputs, and a copy buys less.

\begin{corollary}[Noise on every synapse]\label{cor:synapse}
Suppose that at every step each programmed entry of every layer of $F^{(r)}$ is
read with an independent centred Gaussian error of standard deviation $s_{\mathrm w}$,
the biases exactly, and let $K^\ast$ be the largest number of active programmed
inputs of a row of $F$ along the orbit of $x_0$ up to time $T$, as in
Proposition~\ref{prop:mismatch}(ii). The probability that the trajectory of $F^{(r)}$
from $\rho_r(x_0)$ differs from $\rho_r(F^t(x_0))$ at some $t\le T$ is at most
$T\,rM\erfc\bigl(\sqrt r/(2\sqrt{2K^\ast}\,s_{\mathrm w})\bigr)$, which is at most
$\delta$ when $r\ge8s_{\mathrm w}^2K^\ast\ln(TrM/\delta)$.
\end{corollary}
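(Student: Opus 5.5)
The plan is to follow the proof of Theorem~\ref{thm:amplify}(ii): locate the first deviation, bound the failure probability of a single row given that its input is exact, and finish with a union bound.

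First, condition on the trajectory agreeing with $\rho_r(F^t(x_0))$ up to step $t$, and on every earlier layer of step $t+1$ having produced its exact output. Then each row $(c,i)$ of layer $\ell$ reads exactly the replicated input $\rho_r(y)$ of the orbit. By Proposition~\ref{prop:amplify}(ii) its noiseless pre-activation lies at distance $r|z+\tfrac12|\ge r/2$ from the comparator.

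With synaptic noise, each programmed entry of the row has an independent error $g\sim\mathcal N(0,s_{\mathrm w}^2)$, and the error acts only where the input is $1$. So the total perturbation of the row is $\sum_j g_j y_j$, summed over the programmed entries whose input is active. On $\rho_r(y)$, a row of $F^{(r)}$ has $r$ times as many active programmed inputs as row $i$ of $F$ on $y$. Along the orbit up to $T$ this number is at most $rK^\ast$.

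Given that the input is exact, the errors of distinct rows are independent, and they are independent of the input. The row's perturbation is therefore a centred Gaussian of variance at most $rK^\ast s_{\mathrm w}^2$. The probability that it reaches magnitude $r/2$ is at most
\[
\erfc\Bigl(\frac{r/2}{\sqrt2\,\sqrt{rK^\ast}\,s_{\mathrm w}}\Bigr)=\erfc\Bigl(\frac{\sqrt r}{2\sqrt{2K^\ast}\,s_{\mathrm w}}\Bigr).
\]
A row whose input has fewer active programmed entries has a smaller variance, and the bound only improves because $\erfc$ is decreasing. This is the statement of Theorem~\ref{thm:amplify}(ii) with $\varphi(u)=\erfc\bigl(u/(\sqrt{2rK^\ast}s_{\mathrm w})\bigr)$. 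The caveat is that the noise is no longer a single variable per row, so I would rerun the first-deviation argument directly rather than cite the theorem. The first row to fail reads an exact input, and the union bound over the $rM$ rows and $T$ steps gives $TrM\erfc(\cdot)$.

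Finally, use $\erfc(u)\le e^{-u^2}$ as in Theorem~\ref{thm:amplify}(iii). This gives $TrM\,e^{-r/(8K^\ast s_{\mathrm w}^2)}\le\delta$ exactly when $r\ge8s_{\mathrm w}^2K^\ast\ln(TrM/\delta)$.

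The main obstacle is the conditioning. The set of active inputs, and hence the variance, depends on the state. The bound $K^\ast$ is valid only while the trajectory is on the orbit, because $K^\ast$ is measured along the orbit. The first-deviation argument handles this: at the first failing row all its inputs are still orbit values, so the count $rK^\ast$ applies. A further point is that the errors are drawn afresh at every step. They are therefore independent of the conditioning event, which depends only on errors drawn earlier.
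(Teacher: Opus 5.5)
Your proposal is correct and takes the same approach as the paper's proof. At the first deviation every layer reads the replicated orbit input, so the row's error is a sum over at most $rK^\ast$ active programmed entries, a centred Gaussian of deviation at most $s_{\mathrm w}\sqrt{rK^\ast}$, at distance at least $r/2$ from the comparator by Proposition~\ref{prop:amplify}(ii); the union bound over the $rM$ rows and $T$ steps, with $\erfc(u)\le e^{-u^2}$, then gives both statements, and your choice to rerun the first-deviation argument rather than cite Theorem~\ref{thm:amplify}(ii) is what the paper's proof does implicitly.
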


\begin{proof}
At the first step at which the trajectory differs from the orbit the input of
every layer is exact, and on $\rho_r(y)$ the error of row $(c,i)$ is the sum of the
errors at the $r$ copies of each of the at most $K^\ast$ active programmed inputs
of row $i$ of $F$, a centred Gaussian of deviation at most
$s_{\mathrm w}\sqrt{rK^\ast}$. Its distance from the comparator is at least $r/2$
(Proposition~\ref{prop:amplify}(ii)); the union bound over the $rM$ rows and the
$T$ steps, and $\erfc(u)\le e^{-u^2}$, give the two statements.
\end{proof}

A copy thus divides the effective noise by $r$ on the pre-activation and by
$\sqrt r$ on the synapses. Along the self-reproducing run of the host $K^\ast=8$,
and the bound of Corollary~\ref{cor:synapse} falls below $10^{-3}$ with $r=5$ at
$s_{\mathrm w}=0.05$ and with $r=21$ at $s_{\mathrm w}=0.10$.

In von Neumann's model of unreliable components~\cite{vonNeumann1956} the error
does not sit on a sum but on the output: every unit outputs the complement of its
value with a fixed probability, whatever its margin. Amplification then helps only
below a threshold.

\begin{proposition}[Output flips]\label{prop:flip}
Suppose that every evaluation of every row of $F^{(r)}$ outputs the complement of
its value with probability $\varepsilon$, independently, and let $k^\ast$ be the
largest effective fan-in of a row of $F$ and $n$ its number of state bits.
\begin{enumerate}[label=(\roman*),leftmargin=2em]
\item Started from $\rho_r(x_0)$, the probability that the state of $F^{(r)}$ read by
majority differs from $F^t(x_0)$ after one of the first $T$ steps is at most
$T(M+n)(2ek^\ast\varepsilon)^{(r+1)/2}$. For $\varepsilon<1/(2ek^\ast)$ it is at
most $\delta$ when $r+1\ge2\ln(T(M+n)/\delta)/\ln\bigl(1/(2ek^\ast\varepsilon)\bigr)$.
\item Let a row of $F$ be the disjunction of $k$ inputs that hold $0$ on the orbit,
and let $2k\varepsilon>1$. There is $\zeta<1$, depending on $k$ and $\varepsilon$
only, such that a row of $F^{(r)}$ that copies it, reading $kr$ input copies each
flipped with probability $\varepsilon$, independently, computes a wrong value with
probability at least $1-\zeta^{(r-1)/2}$.
\end{enumerate}
\end{proposition}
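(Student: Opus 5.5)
For (i) the plan is to run a first-deviation argument as in Theorem~\ref{thm:margin}, with errors counted in copies, and then take a union bound. Call a step \emph{good} if, for every row $i$ of every layer $\ell$ (and for the state bits at the output), fewer than $(r+1)/2$ of the $r$ copies $(c,i)$ output the wrong value. On a good step the majority reading is correct, but this is not enough to propagate: a copy in layer $\ell+1$ reads \emph{all} $r$ copies of each of its $k\le k^\ast$ inputs. So the invariant I would carry instead is the number $f$ of wrong copies among the inputs a row reads.

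By Proposition~\ref{prop:amplify}(iii), and because flips act on outputs rather than on sums, a copy of row $i$ computes the value it has on $\rho_r(y)$ whenever fewer than $r/2$ of its $kr$ input copies are wrong, that is, when $f\le (r-1)/2$. Its output is then wrong only if it is flipped itself. Hence, conditioned on the event that every row of the previous layer is good, each copy of row $i$ is wrong independently with probability $\varepsilon$ \emph{if} its inputs satisfy $f\le(r-1)/2$. This suggests the event to bound per row: a row fails if among the $kr$ input copies plus its own $r$ flips, some bad pattern occurs. The cleaner route is to bound, for each row, the probability that at least $(r+1)/2$ of the flips relevant to it occur. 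The relevant flips are those of the $r$ copies of each of its $k$ inputs, at most $k^\ast r$ copies in total, with the row's own copies counted at the next layer. This gives \[\Pr\le\binom{k^\ast r}{(r+1)/2}\varepsilon^{(r+1)/2}\le\Bigl(\frac{2ek^\ast r}{r+1}\varepsilon\Bigr)^{(r+1)/2}\le(2ek^\ast\varepsilon)^{(r+1)/2},\] using $\binom{N}{m}\le(eN/m)^m$. The $M+n$ factor comes from one such event per row and per final state-bit reading, and the $T$ factor from the steps. The union bound then shows that, outside these events, by induction over layers every row reads fewer than $r/2$ wrong copies, so every majority is correct. The threshold condition on $r$ is the logarithm of $T(M+n)(2ek^\ast\varepsilon)^{(r+1)/2}\le\delta$.

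The main obstacle is making the induction exact. A wrong copy at layer $\ell$ arises either from its own flip or from too many wrong inputs, and the second cause must be excluded by the events already union-bounded. I would define the event for row $i$ of layer $\ell$ as ``at least $(r+1)/2$ flips among the $\le k^\ast r$ copies feeding it'', and prove by induction over layers that when no such event occurs, every wrong copy at layer $\ell$ is wrong only through its own flip. The $n$ extra events handle the majority reading of the final state copies.

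For (ii), the orbit value of the disjunction is $0$, and in $F^{(r)}$ the row is $\Hv(\sum_{\text{copies}}x-r+\tfrac{r-1}{2})$, which fires when at least $(r+1)/2$ of the $kr$ copies are $1$. The number of flipped inputs is $\mathrm{Bin}(kr,\varepsilon)$ with mean $kr\varepsilon>r/2$. A Chernoff (Hoeffding) lower-tail bound then gives \[\Pr\bigl(\mathrm{Bin}(kr,\varepsilon)\le(r-1)/2\bigr)\le\exp\bigl(-2kr(\varepsilon-\tfrac1{2k})^2\bigr),\] which I would rewrite as $\zeta^{(r-1)/2}$ with $\zeta=\exp(-4k(\varepsilon-\tfrac1{2k})^2\cdot\tfrac{r}{r-1})$. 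Since the factor $r/(r-1)$ depends on $r$, I would instead take $\zeta=\exp(-4k(\varepsilon-1/(2k))^2)<1$, valid because $r\ge r-1$. The row's own output flip only changes a wrong value with probability $\varepsilon$, so I would either fold it into $\zeta$ or read ``computes'' as before the output flip, as the statement suggests.
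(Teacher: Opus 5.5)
Your part (i) is the paper's argument. It uses one event per row of $F$ (at least $(r+1)/2$ flips among the at most $k^\ast r$ input copies that all its copies read) and one per state bit, the estimate $\binom{k^\ast r}{(r+1)/2}\varepsilon^{(r+1)/2}\le(2ek^\ast\varepsilon)^{(r+1)/2}$, and a union bound over $T(M+n)$ events. Defining the events on the flips alone, rather than through the paper's ``first wrong row'', is if anything cleaner. The state-bit events only need $\binom{r}{(r+1)/2}\varepsilon^{(r+1)/2}\le(2e\varepsilon)^{(r+1)/2}$, which you left implicit. In (ii), your reading of ``computes'' as the value before the row's own flip is also the paper's.

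Part (ii) is correct but argued differently. You apply Hoeffding's inequality and obtain $\zeta=e^{-(2k\varepsilon-1)^2/k}$. The paper instead proves by induction on $m$ that fewer than $t$ of $m$ independent copies flip with probability at most $c^m/x^{t-1}$, where $c=1-\varepsilon+\varepsilon x$; this is an elementary generating-function (Chernoff) bound. It then takes $x=1-u$ with $u=(2k\varepsilon-1)/(4k\varepsilon)$, and Bernoulli's inequality gives $\zeta\le 8k\varepsilon/(2k\varepsilon+1)^2$.

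Your route is shorter if Hoeffding is taken as known. The paper's is self-contained, which suits its Rocq proof, and it is much sharper near the threshold. Its bound on $\zeta$ depends only on $2k\varepsilon$. Hoeffding uses only that a flip lies in $[0,1]$, not its small variance $\varepsilon(1-\varepsilon)$, so your $\zeta$ tends to $1$ as $k$ grows with $2k\varepsilon$ fixed. For the host's $k=256$ at $\varepsilon=1/256$, yours is $e^{-1/256}\approx0.996$ against the paper's $8/9$. The statement asks only for some $\zeta<1$ depending on $k$ and $\varepsilon$, so both suffice.
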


\begin{proof}
All $r$ copies of a row read the same copies of its inputs, so they compute the
same value before their outputs flip; call the row wrong in a step when that value
differs from the orbit's. Consider the first wrong row of the run. Every row it
reads is right, so each of its at most $k^\ast r$ input copies is wrong exactly when
its own output flipped, independently with probability $\varepsilon$, and by
Proposition~\ref{prop:amplify}(ii) and~(iii) the row is wrong only if at least
$(r+1)/2$ of them are. This has probability at most
$\binom{k^\ast r}{(r+1)/2}\varepsilon^{(r+1)/2}\le(2ek^\ast\varepsilon)^{(r+1)/2}$,
since $\binom mj\le(em/j)^j$. While no row is wrong, the majority of the $r$ copies
of a state bit is wrong only if at least $(r+1)/2$ of them flipped, with probability
at most $(2e\varepsilon)^{(r+1)/2}$. The union bound over the $M$ rows and the $n$
state bits of each of the $T$ steps gives the bound of~(i), and taking logarithms
gives the redundancy.

For~(ii), the copy has bias $-r+\tfrac{r-1}{2}=-\tfrac{r+1}{2}$ and reads its $kr$
input copies with weight $1$, so its pre-activation is the number of flipped copies
minus $\tfrac{r+1}{2}$, and it is wrong exactly when at least $\tfrac{r+1}{2}$ of
them flipped. For $0<x\le1$ and $c=1-\varepsilon+\varepsilon x$, fewer than $t\ge1$
of $m$ independent copies flip with probability at most $c^m/x^{t-1}$: by induction
on $m$, either the first copy flips and fewer than $t-1$ of the others do, or it does
not and fewer than $t$ do, which has probability at most
$\varepsilon c^{m-1}/x^{t-2}+(1-\varepsilon)c^{m-1}/x^{t-1}=c^m/x^{t-1}$, the first
term absent when $t=1$. For $r=2h+1$, $m=kr$ and $t=h+1$ this is
$c^k(c^{2k}/x)^h\le\zeta^h$ with $\zeta=c^{2k}/x$. Take
$u=(2k\varepsilon-1)/(4k\varepsilon)$, which lies in $(0,\tfrac12)$, and $x=1-u$, so
that $c=1-\varepsilon u$. Then $c^{2k}(1+\varepsilon u)^{2k}\le1$,
$(1+\varepsilon u)^{2k}\ge1+2k\varepsilon u$ and
$(1+2k\varepsilon u)(1-u)=(2k\varepsilon+1)^2/(8k\varepsilon)>1$, so $\zeta<1$.
\end{proof}

The threshold belongs to amplification, which keeps the fan-in of $F$ and adds no
restoring stages, and not to computation with noisy threshold units: formulas of
noisy three-input gates compute reliably for every
$\varepsilon<1/6$~\cite{HajekWeller1991}, and gates of fan-in $k$ cannot do so above
$1/2-1/(2\sqrt k)$~\cite{EvansSchulman1999}. In the host it comes from $48$
disjunctions, of fan-in $256$ in the operand fetch and the operand reads and of
fan-in $128$ in the next counter, whose inputs all hold $0$ when the bit they
compute is $0$. For $\Lev(\Nhost)$, with $k^\ast=256$, the amplification is
therefore reliable for $\varepsilon<7.2\times10^{-4}$, and for $\varepsilon>1/512$
such a disjunction is wrong with a probability that tends to one as $r$ grows. Every
other unit of the host has fan-in at most $8$. Split into trees of fan-in $8$, each
of the $48$ disjunctions would take three levels instead of one. The step would then
have depth $17$, since the fetch and the reads lie on its longest path, and
$1/(2ek^\ast)$ would rise to $0.023$.

Integer weights reach the same margin without copies: the layer with weights
$rW_\ell$ and biases $r\theta_\ell+\tfrac{r-1}{2}$ has pre-activation $rz+\tfrac{r-1}{2}$
on the input itself, so Theorem~\ref{thm:amplify} holds for it with $M$ rows in
place of $rM$. A ternary row cannot hold the weight $r$, and reading $r$ copies of
an input with weight $1$ is how it realizes one.

\begin{remark}[Thresholds and restoring organs]\label{rem:restoring}
In von Neumann's multiplexing~\cite{vonNeumann1956} every component fails with a
fixed probability $p$ whatever its inputs, and a restoring organ of majority
elements follows every level; reliable computation then needs redundancy
logarithmic in the size of the circuit~\cite{DobrushinOrtyukov1977,Pippenger1985}
and is possible only below a threshold on
$p$~\cite{HajekWeller1991,EvansSchulman1999}. The amplification $F^{(r)}$ is
Pierce's interwoven redundant logic~\cite{Pierce1964,Pierce1965} with threshold
units. The restoring organ is absorbed into every unit, which sums the $r$ copies
of each input with one weight, and the identity rows of $\Lev$ become majority
units. Amplification does not enlarge the tolerance to systematic errors: for $F^{(r)}$
the conditions of Proposition~\ref{prop:mismatch} are those for $F$ with $\eta/r$ in
place of $\eta$.
\end{remark}

\begin{construction}[The amplified host]\label{con:amplified}
$\Lev(\Nhost)$ is a clocked netlist: its rows are units, each reading rows of the
layer before it, and the rows of its last layer are the next-state values. Let
$N_r=\Lev(\Nhost)^{(r)}$, a clocked netlist on $r\cdot2057$ state bits with
$41{,}967\,r$ units and $75{,}501\,r^2$ edges. Its rows are again in layer order,
each reading only the layer before, so $\Lev(N_r)=N_r$, and
Proposition~\ref{prop:amplify} describes every row of $\Lev(N_r)$. Its device
reads each bit of a register as the majority of the $r$ copies and writes every
copy.
\end{construction}

\begin{corollary}[Exact self-reproduction of the amplified host]\label{cor:amplifiedselfrep}
$N_r$ realizes the transducer of Definitions~\ref{def:transducer}
and~\ref{def:subleq} on replicated states. Hence, for
$\tau^\ast_r=\describe(\field{\sigma(N_r)}\cat\field{m^\ast})$, the instance
$\Omega^\ast_r=(N_r,m^\ast,\tau^\ast_r)$ is exactly self-reproducing and
Proposition~\ref{prop:steps} gives its running time.
\end{corollary}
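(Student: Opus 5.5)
The proof reduces Corollary~\ref{cor:amplifiedselfrep} to Theorem~\ref{thm:selfrep}. First, $N_r$ runs the same program on replicated states. Second, the self-reproduction argument uses the netlist only through its transducer behaviour and enters the output only through the byte string $\sigma(N_r)$.

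\textbf{Step 1: one step of $N_r$ is one step of the host on copies.} Since $\Lev(N_r)=N_r$, the transducer of the instance is $N_r$ itself. By Proposition~\ref{prop:amplify}(i), $N_r\circ\rho_r=\rho_r\circ\Lev(\Nhost)$, and by Lemma~\ref{lem:lev} and Proposition~\ref{prop:host}, $\Lev(\Nhost)$ computes one step of $\mathrm{SUBLEQ}_8$ with the halt latch. Hence on a state $\rho_r(x)$ a step of $N_r$ yields $\rho_r$ of the host's next state.

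\textbf{Step 2: the device keeps states replicated.} The device of Construction~\ref{con:amplified} reads each register bit as the majority of its $r$ copies and writes every copy. On a replicated state the majority equals the common value, so its action on $\rho_r(y)$ is $\rho_r$ of the action of the device of Definition~\ref{def:transducer} on $y$. The halt bit is likewise read from replicated copies. The initial state $x_0(m^\ast)$ must be taken as the replicated state $\rho_r(x_0(m^\ast))$; the plan is to read Definition~\ref{def:instance} for $N_r$ with this layout, and this is the one point where the wording of the definition needs care. Induction on steps then shows that the run of $N_r$ from $\rho_r(x_0(m^\ast))$ on any tape is $\rho_r$ of the run of $\Lev(\Nhost)$. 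The two runs therefore halt at the same step, have the same step count and produce the same output word.

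\textbf{Step 3: apply the self-reproduction theorem.} By Theorem~\ref{thm:selfrep}, whose proof uses only the memory-level behaviour of the transducer, the run on any recipe $\tau$ outputs $\delta(\tau)\cat\tau$. Take $\tau=\tau^\ast_r$. Lemma~\ref{lem:describe} gives $\delta(\tau^\ast_r)=\field{\sigma(N_r)}\cat\field{m^\ast}$, so the output equals $\ser(\Omega^\ast_r)$. The running time is Proposition~\ref{prop:steps} applied to the census of $\tau^\ast_r$, since the step counts agree by Step 2.

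The main obstacle is Step 1, namely checking that the literal \emph{is} $\Lev(N_r)=N_r$ is justified. Every row of $N_r$ must read only rows of the preceding layer, so that levelization adds no identity rows and the depth stays $13$. The state layout must also be consistent, so that the last layer of $N_r$ gives the next-state values of the replicated state in the order the device and the serialization expect.
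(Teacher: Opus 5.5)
Your proposal is correct and follows the paper's proof. Proposition~\ref{prop:amplify}(i), together with the majority device (which reads the common value of equal copies and writes every copy), makes the run of $N_r$ from $\rho_r(x_0(m^\ast))$ the replication of the host's run, with the same output and step count; the proofs of Theorems~\ref{thm:univ} and~\ref{thm:selfrep} use only the transducer, so they then give the output $\delta(\tau^\ast_r)\cat\tau^\ast_r=\ser(\Omega^\ast_r)$. The identity $\Lev(N_r)=N_r$ that you flag as the main obstacle is already stated in Construction~\ref{con:amplified}, on the grounds that each row reads only the layer before it, so the paper's proof simply takes it as given.
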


\begin{proof}
From a replicated state $N_r$ produces the replicated next state
(Proposition~\ref{prop:amplify}(i)), and the device reads the majority of equal
copies, which is their common value, and writes every copy, so the run of $N_r$
from $\rho_r(x_0(m))$ visits the replications of the states $\Nhost$ visits from
$x_0(m)$ on the same tape and emits the same bytes. The proofs of
Theorems~\ref{thm:univ} and~\ref{thm:selfrep} use the transducer alone, so they
hold with $N_r$ in place of $\Nhost$. The same holds for every odd $r$ and every
$A\ge8$ with $\Lev(H_A)$ in place of $\Lev(\Nhost)$: the run of $N_r$ from a
replicated state visits the replicated states of the transducer's run with the same
output, and on $P^\ast$ and every recipe it reads its halt bit as $0$ for the steps
of Proposition~\ref{prop:steps} and as $1$ after them, with the output of
Theorem~\ref{thm:selfrep}.
\end{proof}

Run with $N_3$ in place of $\Nhost$ under Gaussian read noise on every
pre-activation, $\Omega^\ast$ emits the whole instance at $s=0.10$, $0.15$, $0.20$
and $0.25$, where the host itself fails at $s=0.10$ after $143$ bytes and at
$s=0.15$ and $0.20$ before emitting one (Appendix~\ref{app:measure}). A single wrong
copy is outvoted in the next layer, and the computation fails only when two wrong
copies meet, an event of second order in the probability of a flip. The following
bound counts these events.

\begin{proposition}[Restoration]\label{prop:restore}
Let the perturbations of $N_3$ be independent centred Gaussians of standard
deviation $s$ on every pre-activation. The probability that the state of $N_3$ read
by majority differs from the orbit of $x_0$ after one of the first $T$ steps is at
most $\sum_{t<T}b(t)$, where $b(t)$, given in Definition~\ref{def:restore}, is
computed layer by layer along the orbit from the distance of each row to the
comparator and from the inputs whose errors move it toward the comparator.
\end{proposition}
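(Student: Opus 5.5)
The plan is to argue step by step along the orbit, as in Proposition~\ref{prop:profile}, but with "deviation" measured by majority over the three copies rather than by any single copy. Say that step $t$ is \emph{restored} when, at the end of the step, every state bit has at most one wrong copy. The majority reading is then exactly the orbit state $F^{t}(x_0)$. The claim will follow from a union bound over $t<T$ once we show two things. First, the probability that step $t$ is not restored, given that step $t-1$ was restored, is at most $b(t)$. Second, restoration is all the next step needs, because Proposition~\ref{prop:amplify}(iii) with $f\le1<3/2$ keeps every row within reach of its exact value.

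To control one step, I would go through it layer by layer, carrying a set of rows that may be wrong. The key tool is Proposition~\ref{prop:amplify}(ii)--(iii). On a replicated input, row $(c,i)$ of $N_3$ has pre-activation $3z+1$, at distance $3|z+\tfrac12|\ge\tfrac32$ from the comparator. Each wrong input copy moves it by at most one, and only the wrong inputs whose weight sign pushes toward the comparator matter. So a row fails in one of two ways:
\begin{itemize}[leftmargin=2em]
\item its own Gaussian perturbation crosses the distance $3|z+\tfrac12|$ minus the number $f$ of harmful wrong inputs;
\item $f$ alone reaches the distance.
\end{itemize}
With this in hand, $b(t)$ is the probability that some layer contains two wrong copies of the same row, or some row is pushed over by the combined effect. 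Since each row is read by every copy in the next layer, a single wrong copy per row is absorbed, with $f\le1$ per distinct input. We therefore bound the probability that the set of wrong copies fails to be "at most one copy per original row" at each layer.

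Concretely, I would compute it as follows. For each layer, take the probability that a row of $F$ has at least two of its three copies wrong. Given at most one wrong copy per input row, a copy of row $i$ is wrong with probability at most $Q\bigl((3|z_{t,i}+\tfrac12|-f_i)/s\bigr)$. Here $f_i$ is the number of inputs of row $i$ that move it toward the comparator, each of which is wrong in a single copy. I would bound $f_i$ by the count, from the previous layer's rows, of those whose error moves row $i$ toward the comparator. This is the "inputs whose errors move it toward the comparator" of Definition~\ref{def:restore}. The three copies of a row read identical inputs, so conditional on the inputs their perturbations are independent. Two wrong copies then cost at most $3q_i^2$, where $q_i$ is the per-copy bound. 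This is where the second-order behaviour appears. Summing over rows and layers gives $b(t)$, and Definition~\ref{def:restore} is presumably exactly this sum.

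The main obstacle will be making the conditioning rigorous. The event "at most one wrong copy per row in layer $\ell-1$" does not fix \emph{which} copy is wrong, and wrong inputs from different rows may sit in different copies. A copy of row $i$ may therefore see several wrong inputs, one from each harmful predecessor, and that is why $f_i$ must count predecessors rather than be $1$. I would handle this with a worst-case bound: take $f_i$ to be the number of harmful predecessors that fail with nonnegligible probability, and use the monotonicity of $Q$ so that the bound holds uniformly over which copies are wrong. Independence of the fresh perturbations at layer $\ell$ from everything earlier then lets the layer bounds multiply, or be summed by a union bound. Finally, the device acts only on majority readings and writes every copy, so it preserves restoration, and induction over $t$ closes the argument.
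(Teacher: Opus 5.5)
Your proposal has a genuine gap. The invariant you carry from step to step is too weak, and your proposed repair either gives a vacuous bound or is not rigorous.

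\textbf{The invariant.} Every copy of a row of $N_3$ reads every copy of every input, since its weights are $J_3\otimes W_\ell$. So the wrong copies of \emph{distinct} input rows add up at every copy of the reading row. What Proposition~\ref{prop:amplify}(iii) needs to stay below $3/2$ is the total number $|S_v|$ of wrong copies among the harmful inputs $U(v)$, not the number per input row. ``At most one wrong copy per state bit'' does not bound that total.

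For example, a state whose copy $1$ is wrong in every bit is restored in your sense. Yet in the next step, the decode line of the current counter value, all eight of whose inputs are harmful, has pre-activation $3\cdot0+1-8=-7$ in every copy, far on the wrong side of the comparator. So ``restoration is all the next step needs'' is false. Likewise, no bound on step $t$ that uses only the restoration of step $t-1$ can be small.

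The wrong copies a step inherits have to be charged by their probabilities. This is why Definition~\ref{def:restore} unrolls the recursion along the whole orbit, with layer $1$ of each step reading layer $d$ of the step before, and with $P$ and $K$ zero only at step $0$ and on the device cells. Conditioning step by step on restoration throws this information away.

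\textbf{The repair.} Taking $f_i=|U(i)|$ gives a valid but vacuous bound: the fan-in-$256$ disjunctions of the operand fetch have all $256$ inputs harmful when they compute $0$. Counting only predecessors that fail ``with nonnegligible probability'' is not a step a proof can take. Either way, the sum you obtain is not the $b(t)$ of Definition~\ref{def:restore}.

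\textbf{The missing idea.} Make $|S_v|\ge2$ the fatal event, and let single wrong copies propagate at first order. Call a layer a \emph{stop} when either
\begin{itemize}
\item it holds a wrong copy of a row $v$ with $|S_v|\ge2$, or
\item it is the last layer of a step and has a row with two wrong copies.
\end{itemize}
Let $W_v$ be the event that $v$ has a wrong copy and no stop has occurred up to its layer. On $W_v$ one has $|S_v|\le1$, and $|S_v|=1$ only on the union of the $W_u$ with $u\in U(v)$.

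Given the earlier layers, the three copies of $v$ fail independently, each with probability at most $q_{|S_v|}(v)$, where $q_j(v)=Q\bigl((3|z_v+\tfrac12|-j)/s\bigr)$. Induction over the layers then gives $\Pr(W_v)\le P_v=3q_0(v)+3\Delta_v\sum_{u\in U(v)}P_u$, with $\Delta_v=q_1(v)-q_0(v)$.

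A first stop at a row $w$ then arises in one of three ways:
\begin{itemize}
\item two or three wrong copies of one $u\in U(w)$, charged to $P^{(2)}_u$ or $P^{(3)}_u$ times $\min(1,3q_2(w))$ or $\min(1,3q_3(w))$;
\item wrong copies of two distinct $u_1,u_2\in U(w)$, charged to $P_{u_1}P_{u_2}+K_{u_1u_2}$, where $K$ accounts for the correlation that shared ancestors induce between $W_{u_1}$ and $W_{u_2}$;
\item otherwise, a first stop at the last layer of a step, which costs $\sum_vP^{(2)}_v$.
\end{itemize}
The state read by majority can be wrong after a step only if some stop has occurred. Summing these first-stop probabilities over the layers of the first $T$ steps therefore gives $\sum_{t<T}b(t)$.
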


Appendix~\ref{app:restore} defines $b(t)$ and proves the proposition. The artifact
computes the bound along the exact orbit, in C, by \texttt{src/restore\_bound.c}.
The matrix $K$ is held sparse, and the
entries it drops are carried by rank-one majorants that contribute less than
$10^{-12}$ of $b(t)$. The run is split into $36$ chunks, each restarted twenty steps
early from $P=1$ and $K=0$, values that bound the probabilities they stand for; the
chunks reproduce the uninterrupted recursion to within $10^{-14}$.

Along the self-reproducing run, $T=1{,}664{,}939$, the bound $\sum_{t<T}b(t)$ is
$6.13\times10^{-7}$ at $s=0.20$, where Theorem~\ref{thm:amplify}(ii) gives
$1.34\times10^{-2}$. It is $3.11\times10^{-4}$ at $s=0.22$ and $0.271$ at $s=0.25$,
where the first-order bound exceeds one. Along the first $20{,}000$ steps of the
self-describing instance, $b(t)$ lies within the exact $95\%$ Poisson interval of
the measured failure rate at $s=0.27$, $0.30$ and $0.33$, and $1.47$ times above the
rate at $s=0.36$, the highest level measured (Table~\ref{tab:bound3},
Figure~\ref{fig:noise}(b)). The failure rates are $21$ and $12$ times below the
rates at which some copy of the replicated state left its orbit at $s=0.27$ and
$0.30$.

\subsection{A generation of the stored host under noise}\label{sec:storednoise}

The ring machine is a ternary threshold network like the host, so the bounds of this
section apply to a whole generation of the stored host, in which the host decodes its
recipe into the next bank, copies the recipe and hands over. The following corollary
gives the noise under which such a generation runs exactly.

\begin{corollary}[Self-reproduction under noise]\label{cor:storednoise}
For the ring of three banks in which Section~\ref{sec:storedhost} runs $\mathcal S$,
let $M=470{,}604{,}761$ be the number of rows of $\Lev(N^{15}_{3,S,L})$ and
$T=95{,}762{,}842{,}100$ the number of applications in a generation. Under
independent centred Gaussian perturbations of deviation $s$ on every pre-activation,
a generation of $\mathcal S$ runs exactly with probability at least
$1-TM\erfc\bigl(1/(2\sqrt2\,s)\bigr)$, which exceeds $0.999$ for $s\le0.050$. The threefold and fivefold amplifications of
Definition~\ref{def:amplify} reach the same probability for $s\le0.148$ and
$s\le0.247$.
\end{corollary}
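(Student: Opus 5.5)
The plan is to realize a generation of $\mathcal S$ as an orbit of a single layered ternary threshold map and then apply Proposition~\ref{prop:random} and Theorem~\ref{thm:amplify}(ii) to it.

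First, by Theorem~\ref{thm:physnet} and Lemma~\ref{lem:lev}, the ring machine $\mathcal M^{15}_{3,S,L}$ is the LTTM $\Lev(N^{15}_{3,S,L})$. Its row count is $M=470{,}604{,}761$, taken from the census of Appendix~\ref{app:sizes}. By Theorem~\ref{thm:storedrep}(iii), a generation of $\mathcal S$ started from $x_0$ is the orbit of this map for $T$ applications. The value of $T$ is the number stated after Theorem~\ref{thm:storedrep}, namely $(P-1)S+G$ with $P=2{,}653{,}813$ passes, $S=36{,}085$ and $G=36{,}080$.

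No device acts from outside here: the whole generation, including the writes into the next bank, is carried by the dynamics. Definition~\ref{def:perturbed} therefore applies directly, and "runs exactly" means that the perturbed trajectory equals the orbit for $T$ steps.

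Proposition~\ref{prop:random} then bounds the probability of a deviation by $TMp$ with $p=\erfc(1/(2\sqrt2\,s))$. This gives the first claim once we check numerically that $TM\erfc(1/(2\sqrt2\cdot 0.05))\le 10^{-3}$. Here $TM\approx4.5\times10^{19}$, and $\erfc(10/\sqrt8)=\erfc(3.536)\approx 1.5\times10^{-23}\cdot 3$, roughly $10^{-6}$ relative. The final numerical check is a routine evaluation, to be done in Rocq with interval bounds on $\erfc$. Monotonicity of $\erfc$ then extends the bound to all $s\le0.050$.

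For the amplified claims, apply Theorem~\ref{thm:amplify}(ii) to $F=\Lev(N^{15}_{3,S,L})$ with $r=3$ and $r=5$, starting from $\rho_r(x_0)$. By Proposition~\ref{prop:amplify}(i), the orbit of $F^{(r)}$ is the replication of the generation. The bound is $T\,rM\,\erfc(r/(2\sqrt2\,s))$, and we need to check that it is at most $10^{-3}$ at $s=0.148$ and at $s=0.247$. Since $r/s\approx 20.27$ and $20.24$, both arguments are close to the unamplified $1/0.05=20$. The extra factor $r$ and the slightly larger argument roughly cancel, which explains the thresholds near $0.05r$. Monotonicity in $s$ again covers all smaller $s$.

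The main obstacle is not conceptual but the certified numerics. The margins are thin: the thresholds $0.148$ and $0.247$ are evidently chosen just below where the bound crosses $10^{-3}$. So $\erfc$ near arguments of about $3.57$ must be evaluated with rigorous enclosures, for example from continued-fraction bounds $\frac{e^{-u^2}}{\sqrt\pi}\cdot\frac{1}{u+1/(2u)}\le\erfc(u)\le \frac{e^{-u^2}}{\sqrt\pi\,u}$ together with interval arithmetic on $e^{-u^2}$.

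A secondary point is making sure that $M$ counts the rows of the levelized map, identity rows included, since those rows are also perturbed. It must also be confirmed that $T$ counts applications of the map rather than passes.
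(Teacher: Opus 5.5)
Your reduction is the same as the paper's. A generation of $\mathcal S$ is $T$ applications of $\Lev(N^{15}_{3,S,L})$ with no device outside the network, and Proposition~\ref{prop:random} and Theorem~\ref{thm:amplify}(ii) give the bounds $TM\erfc\bigl(1/(2\sqrt2\,s)\bigr)$ and $TrM\erfc\bigl(r/(2\sqrt2\,s)\bigr)$, both increasing in $s$. Your checks of $M$ and $T$ are right.

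What goes wrong is the numerical step, which is where the three thresholds come from. The argument at $s=0.05$ is $1/(2\sqrt2\cdot0.05)=10/\sqrt2\approx7.07$, not $10/\sqrt8\approx3.54$. At $3.54$ one has $\erfc\approx5.7\times10^{-7}$ and $TM\erfc\approx2.6\times10^{13}$, so the estimate as you wrote it would refute the claim. The amplified arguments are likewise about $7.17$ and $7.16$, not $3.57$.

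More seriously, with the correct arguments your proposed certificate $\erfc(u)\le e^{-u^2}/(\sqrt\pi\,u)$ is too weak for one of the three claims. It overestimates $\erfc(u)$ by a factor of about $1+1/(2u^2)$, roughly $1\%$ here. That is harmless for $r=1$ at $s=0.050$, where it gives about $6.9\times10^{-4}$, and for $r=3$ at $s=0.148$, where it gives about $5.3\times10^{-4}$.

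For $r=5$ at $s=0.247$, however, the true value of $5TM\erfc\bigl(5/(2\sqrt2\cdot0.247)\bigr)$ is about $0.99993\times10^{-3}$, less than $10^{-4}$ relative below the target. Your bound gives about $1.0095\times10^{-3}$, so it cannot certify the fivefold threshold. Your own remark that the margins are thin is exactly why.

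You need an upper enclosure that captures the correction $-1/(2u^2)$. One is $\erfc(u)\le\frac{e^{-u^2}}{u\sqrt\pi}\bigl(1-\frac1{2u^2}+\frac3{4u^4}\bigr)$, which the paper proves in Rocq and evaluates with the interval tactic; it gives about $0.99994\times10^{-3}$. The third convergent $\frac{e^{-u^2}}{\sqrt\pi}\cdot\frac{2u^2+2}{2u^3+3u}$ of the continued fraction would also do.
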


\begin{proof}
Proposition~\ref{prop:random} and Theorem~\ref{thm:amplify}(ii) over the $T$
applications; Appendix~\ref{app:sizes} gives the rows of $\Lev(N^\alpha_{K,S,L})$
in closed form.
\end{proof}

\section{Related work}\label{sec:related}

\emph{Threshold computers.} Von Neumann's EDVAC report describes the
stored-program computer with McCulloch--Pitts elements~\cite{vonNeumann1945}.
Goto's parametron computes the majority of three inputs, an input inverted by
reversing the polarity of its coupling~\cite{Goto1959}, so parametron circuits are
networks of ternary threshold units; MUSASINO-1~\cite{MurogaTakashima1959} and
PC-1, of about $4{,}200$ parametrons~\cite{Okoma2002}, were stored-program machines
built from them, and DONUT a general-purpose computer of about $500$ threshold
gates~\cite{CoatesLewis1964}; see also~\cite{Muroga1971}.
Weighted threshold circuits of depth $d$ are simulated by majority circuits of depth
$d+1$~\cite{GoldmannHastadRazborov1992,GoldmannKarpinski1998,SiuRoychowdhuryKailath1995},
and Remark~\ref{rem:ternarydepth} is the smallest instance of that exchange.

\emph{Networks as programmable computers.} Recurrent networks of saturated-linear
units with rational weights are
Turing-universal~\cite{SiegelmannSontag1995,SimaOrponen2003}, and programs have been
compiled into neural networks~\cite{GruauEtAl1995,LindnerEtAl2023}. A looped
transformer with fixed weights executes SUBLEQ programs placed in its
input~\cite{GiannouEtAl2023}, as does a looped ReLU perceptron~\cite{LiangEtAl2025};
a fixed transformer simulates any transformer of a class from a description in its
input~\cite{LiuAndoniHsu2026}, and Loom executes compiled C programs in a looped
transformer whose state holds the program~\cite{Turkcan2026}. A programmable neural
network, a fixed-weight network with auxiliary inputs, behaves as any network of a
class when those inputs carry a code of its weight
matrix~\cite{DonnarummaPreveteTrautteur2012}. These use real-valued weights. The ring machine is a fixed ternary network of this kind, which evaluates
any network stored in its state. Valiant's universal circuits simulate any circuit
of size $n$ from its program bits in size $\Theta(n\log n)$~\cite{Valiant1976}, and
intrinsic simulation between automata networks~\cite{RiosWilsonTheyssier2024a}
builds the simulating network from the simulated one; here the simulating network
is fixed.

\emph{Finiteness and noise.} Every network here is a finite
automaton~\cite{Kleene1956,Minsky1967}. Power beyond finite
automata comes from unbounded precision~\cite{SiegelmannSontag1995}, growing numbers
of units~\cite{ChungSiegelmann2021} or changing weights~\cite{CabessaSiegelmann2014},
and von Neumann's cellular automaton is universal because its tape can be
extended~\cite{vonNeumann1966,Burks1970}. Analog noise reduces discrete-time analog
models to finite automata~\cite{MaassOrponen1998}.

\emph{Self-reproduction.} Von Neumann's universal constructor uses a description
once as instructions for construction and once as data to be
copied~\cite{vonNeumann1966,Burks1970}. Moore and Myhill gave abstract
formulations~\cite{Moore1962,Myhill1964}, Lee and Thatcher constructed Turing
machines that print their own description~\cite{Lee1963,Thatcher1963}, cellular
constructions were given by Arbib, Codd, Langton, Pesavento, and Perrier, Sipper
and Zahnd, the last with an attached program and a growing data
tape~\cite{Arbib1966,Codd1968,Langton1984,Pesavento1995,PerrierSipperZahnd1996},
and Ib\'a\~nez et al.\ reproduce by self-inspection~\cite{IbanezEtAl1995}.
Self-replicating field-programmable arrays copy configurations between
cells~\cite{MangeEtAl1998}, and Cell Matrix cells read and write the truth tables
of their neighbours~\cite{DurbeckMacias2001}. Program replicators include
Tierra~\cite{Ray1991} and those that emerge in~\cite{AgueraYArcas2024}. Cotler,
Hongler and Hudcov\'a give an informal condition for self-replication in cellular
automata and place the automata that meet it strictly between the globally and the
locally universal ones, so that Turing universality does not imply self-replication;
they require the encoder and the decoder between symbolic computation and the
dynamics to be computationally simpler than the computation attributed to the
system~\cite{CotlerHonglerHudcova2025}. In those terms the host running $P^\ast$
(Section~\ref{sec:selfrep}) is a replicator relative to an instantiation map. The
organism and the stored host need no such map: each is a configuration of a fixed
threshold network, the only map between the computation and the dynamics is the
encoding of a state of the ring machine in the bits of that network
(Theorem~\ref{thm:physnet}), and what is reproduced includes the weights of the
network that computes.

\emph{Networks that produce their own parameters.} Neural networks have been
trained to output their own weights~\cite{ChangLipson2018,Gabor2022}, to reproduce
over generations~\cite{RandazzoVersariMordvintsev2021,ShvartzmanRam2024} and to
emit the program text that trains them~\cite{Schmidgall2021}; hypernetworks and
self-referential weight matrices generate or modify
weights~\cite{HaDaiLe2017,Schmidhuber1993,IrieEtAl2022}. In each case the relation
between a network and its parameters is learned and approximate.

\emph{Machine-checked universal machines.} A universal Turing machine has been
programmed and verified in Coq, with polynomial time and constant-factor space
overhead~\cite{ForsterKunzeWuttke2020}, and the Game of Life has been simulated in
the Game of Life by circuits verified in HOL4~\cite{MyreenCarneiro2025}. Here one
Rocq development covers the interpreter, the universal computer, construction,
reproduction and noise.

\section{Conclusion}\label{sec:conclusion}

A fixed network of ternary threshold units holds threshold networks as records and
runs them, and a stored network can build any network that fits a bank, copy its
description and hand control to what it built. The organism and the stored SUBLEQ
host rebuild themselves in this way, weights and biases included, in every ring, and
integer pre-activations keep every generation exact while no perturbation of a
pre-activation reaches one half.

The fixed network itself is not reproduced. Its netlist is ternary and compiles into
records by Lemma~\ref{cor:compile}, but its state, which contains the window, the
banks and the tapes, cannot fit in its own window, so a network that also rebuilds
the network that runs it needs another construction. Other questions are open:
whether a ternary step of SUBLEQ can have depth below $11$, since
Lemma~\ref{lem:muxdepth} and Remark~\ref{rem:ternarydepth} bound only a read and the
branch; how few records a self-reproducing stored network needs; and how much noise
a reproducing run tolerates when the perturbations are correlated or drift, which
the independence assumed in Section~\ref{sec:robust} excludes.

\appendix
\section{Serialization and file formats}\label{app:sigma}

\begin{definition}[Serialization]\label{def:sigma}
Let $N$ be a clocked netlist on $n$ state bits. Number its signals with the state
bits first, in the order fixed by the state layout, and the units after them in the
order of $N$. The \emph{serialization} $\sigma(N)$ is the \texttt{safetensors}
encoding of four integer vectors:
\begin{enumerate}[label=(\roman*)]
\item the bias of each unit, in the order of $N$;
\item the number of predecessors of each unit, in the same order;
\item one signed entry per edge, its magnitude the index of the predecessor plus
one and its sign the weight, the edges of a unit consecutive and in the order the
unit lists them;
\item for each state bit, the index of the signal computing its next value.
\end{enumerate}
The file is an eight-byte little-endian header length, a JSON header object, and
the concatenated tensor data in little-endian order. The biases are stored as
eight-bit integers, and the fan-ins, the edges and the next-state indices as
sixteen-bit integers when their values fit and as thirty-two-bit integers
otherwise. Seven thirty-two-bit scalars record the address width, the data width,
the number of memory words, the width of the program counter, the number of state
bits, the number of units and the version of the format. Four choices that a
general-purpose serializer leaves open are fixed. The header object begins with the
key \texttt{\_\_metadata\_\_}, whose value lists the machine name, the quantization
tag, the state layout and the device addresses of Table~\ref{tab:device}, in that
order. The remaining entries are the tensors, ordered by element type and then by
name, with the data in the same order. The JSON has compact separators and no
non-ASCII escaping. Spaces pad the header so that the tensor data begins at an
eight-byte boundary.
\end{definition}

The fixed choices make $\sigma$ a function. Every predecessor carries the weight
$\pm1$, so the sign of an edge entry gives its weight, and the fan-in vector cuts
the edge vector into the predecessor list of each unit; every index refers to a
signal that these vectors define. So $N$, the order of its units included, is
recovered from $\sigma(N)$ alone.

For $\Nhost$ the metadata has \texttt{machine} \texttt{subleq8} and
\texttt{weight\_quantization} \texttt{ternary}. Its \texttt{signal\_registry} is
the first string below, the state layout of Construction~\ref{con:host} by first
coordinate and width, with the number of words for the memory, and its \texttt{io}
is the second, the addresses of Table~\ref{tab:device}:
\begin{center}\ttfamily\small
\{"halt":[8,1],"mem":[9,256,8],"pc":[0,8]\}\\
\{"c\_eot":250,"c\_rd":252,"c\_rw":251,"c\_wr":249,"halt\_pc":255,"r\_in":253,"r\_out":254\}
\end{center}
The scalars are the \texttt{I32} tensors \texttt{manifest.addr\_bits},
\texttt{data\_bits}, \texttt{memory\_words}, \texttt{pc\_width},
\texttt{state\_bits}, \texttt{units} and \texttt{version}, holding $8$, $8$, $256$,
$8$, $2057$, $17{,}598$ and $6$, and the vectors are \texttt{net.fanin},
\texttt{net.next} and \texttt{net.pred} of type \texttt{I16} and \texttt{net.bias}
of type \texttt{I8}.

The netlists $N^\alpha_{K,S,L}$ are written as files of little-endian $32$-bit
integers: the numbers of state bits, units and edges; the bias of each unit; the
fan-in of each unit; one signed entry per edge, as in $\sigma$; and, for each state
bit, the index of the signal computing its next value. The state bits are numbered
first, in the order of Theorem~\ref{thm:physnet}, and the units after them, in the
order of the Rocq definition. An image of records, such as the organism or the
stored host, is the sequence of its records, each laid out as in
Definition~\ref{def:twohead}, in the order of the Rocq definition, with the window
cells of Construction~\ref{con:organism} or~\ref{con:stored}; in the stored host the
cells of the units, of the levels and of the device follow the state bits, in that
order.

\section{Sizes of the constructions}\label{app:sizes}

This appendix gives the numbers of units, edges, rows, records and steps to which
the main text refers. The closed forms are proved in Rocq (Table~\ref{tab:rocq}),
and the census of a particular byte string is computed by the artifact.

\subsection{The ring machine}

With $W=2^\alpha$, $\beta=8+3\alpha$ and $N=S\beta$, the netlist $N^\alpha_{K,S,L}$ of
Theorem~\ref{thm:physnet} has
\begin{gather*}
5KN+4KS+7K\beta+9KL+22K+8W+\beta+22\quad\text{units and}\\
14KN+10KS+16K\beta+23KL+43K+3\alpha W+12W+46\quad\text{edges},
\end{gather*}
and $\Lev(N^\alpha_{K,S,L})$ has depth $19$ and
$34W+\beta+43K+24K\beta+30KN+23KS+51KL-56$ rows.

\subsection{The host}

The netlist $H_A$ has
\[
(8A+4)2^A+\tfrac12(A^2+39A+4)\quad\text{units and}\quad
(49A+4)2^{A-1}+\tfrac16(A^3+9A^2+194A+24)\quad\text{edges},
\]
the sums over the blocks of Table~\ref{tab:blocks}. Of the units,
$(8A+4)2^A+5A$ form the memory, that is, the decode lines, the fetch, the operand
reads and the write, and the other $\tfrac12(A^2+29A+4)$ the arithmetic and
control datapath. For $A=8$ these are $17{,}598$ units, $17{,}448$ of the memory and
$150$ of the datapath, carrying $68{,}730$ parameters: $51{,}132$ weights of $\pm1$
and one bias per unit in $[-8,1]$. The serialization $\sigma(\Nhost)$ has
$160{,}240$ bytes.

$\Lev(H_A)$ has depth $13$ and $(19A+10)2^A+\tfrac12(A^2+109A+54)$ rows, of which
$(11A+6)2^A+35A+25$ are identity units and the others the units of $H_A$. For $A=8$
these are $M=41{,}967$ rows on $n=2057$ state bits, $24{,}369$ of them identity
units, in thirteen layers of $2313$, $8210$, $2090$, $2587$, $6427$, $2347$, $2379$,
$2389$, $2609$, $2353$, $2339$, $3867$ and $2057$ rows. As dense matrices the
layers have $124{,}165{,}940$ entries, $75{,}501$ of them nonzero.

The folded host of Remark~\ref{rem:fold} has
\[
(8A+2)2^A+\tfrac12(A^2+39A+4)\quad\text{units and}\quad
(10A^2+37A+2)2^{A-1}+\tfrac16(A^3+9A^2+194A+24)\quad\text{edges},
\]
$17{,}086$ and $120{,}508$ for $A=8$.

\subsection{Programs and recipes}

The recipe $\tau_H=\describe(\sigma(\Nhost))$ of Corollary~\ref{cor:selfdesc} has
$144{,}891$ bytes, against $144{,}818$ for the shortest recipe of $\sigma(\Nhost)$.
Its census is $\nlit=1137$, $\nrep=147$, $f_{\mathrm{lit}}=143{,}459$ and
$f_{\mathrm{rep}}=16{,}781$, so Theorem~\ref{thm:univ} gives a run of $793{,}709$
steps.

The tape $\tau^\ast$ of Construction~\ref{con:selfrep} has $144{,}996$ bytes, and
$\ser(\Omega^\ast)$ has $305{,}508$: $8+160{,}240$ for the framed serialization of
the host, $8+256$ for the framed memory image and $144{,}996$ for the tape. The
census of $\tau^\ast$ is $\nlit=1141$, $\nrep=152$, $f_{\mathrm{lit}}=143{,}550$ and
$f_{\mathrm{rep}}=16{,}962$, so the run of Proposition~\ref{prop:steps} takes
$794{,}960$ steps in phase~(A), one in phase~(R) and $869{,}978$ in phase~(B),
$1{,}664{,}939$ in all.

\subsection{The stored host}

In $\mathcal S=\mathcal S_{8,20}$ the window holds the thirteen cells of
Definition~\ref{def:twohead}, $\mathrm{booted}$, $\mathrm{cnt}$, $\mathrm{dest}$
and $\mathrm{lim}$ in cells $13$ to $54$, the state bits of $\Nhost$ in cells $55$
to $2111$, and after them the cells of its $17{,}598$ units, of the levels and of
the device, $19{,}765$ cells in all, so $\alpha=15$ and $\beta=53$. Its $36{,}080$
records are $2104$ of the startup, $33{,}568$ of the step, $403$ of the device and
$1+4$ of the hand-over. Its image has $1{,}912{,}240$ bits, read as $239{,}030$
bytes, and $\tau_{\mathcal S}=\describe(\mathcal S)$ has $240{,}900$ bytes.

For a recipe $\tau'$ of $\nlit$ literal and $\nrep$ repeat tokens,
$|\tau'|=f_{\mathrm{lit}}+\nlit+2\nrep+1$, so Proposition~\ref{prop:steps} gives a
generation of Theorem~\ref{thm:storedrep}
\[
P=11f_{\mathrm{lit}}+4f_{\mathrm{rep}}+13\nlit+21\nrep+16
\]
passes. For $\tau'=\describe(\mathcal S_{A,w})$ of an image of $B$ bytes,
$P=11B+13\lceil B/127\rceil+16$ when the image has no four equal consecutive bytes,
since $\describe$ then emits $\lceil B/127\rceil$ literal tokens and nothing else,
and $P\le\tfrac{25}{2}B+29$ in every case, since the proof of
Lemma~\ref{lem:describe} gives $\nlit\le f_{\mathrm{lit}}/127+\nrep+1$ and
$4\nrep\le f_{\mathrm{rep}}$. The census of $\tau_{\mathcal S}$ is $\nlit=1888$, $\nrep=8$,
$f_{\mathrm{lit}}=238{,}995$ and $f_{\mathrm{rep}}=35$, which gives the
$P=2{,}653{,}813$ passes stated after Theorem~\ref{thm:storedrep}; there the tapes
have $L=8|\tau_{\mathcal S}|+64=1{,}927{,}264$ cells.

For every $A\ge8$ and $w\ge2$ the stored host $\mathcal S_{A,w}$ has
$A2^A+A+2w+8$ startup records, $11A+12w+75$ device records,
\[
\tfrac{33}{2}A2^A-2^{A+1}+\tfrac16(A^3-A)+A^2+17A+4
\]
step records and $4+\pi$ hand-over records, $\pi\le7$ of them empty. The step
records count the records of Lemma~\ref{cor:compile} over the blocks of
Table~\ref{tab:blocks}: a conjunction or a disjunction of $k\ge2$ literals takes
$k-1$ records, a unit of threshold $2$ on three or four literals four or eight, and
the cubic term comes from the carry terms. The stored host $\mathcal S_A$ of
Theorem~\ref{thm:widths} therefore has
\[
G_A=35A2^{A-1}-2^{A+1}+\tfrac16(A^3-A)+A^2+29A+14w_A+91+\pi_A
\]
records. Its window satisfies
$A+\lfloor\log_2A\rfloor+3\le\alpha_A\le A+\lfloor\log_2A\rfloor+8$, its counter
width satisfies $w_A\le2A+10$, and a generation in banks of $G_A$ slots takes between
$2^8A^34^A$ and $2^{13}A^34^A$ applications of $\mathcal M^{\alpha_A}_{K,G_A,L}$.

For $A=8$ to $12$, $\mathcal S_A$ has $36{,}080$, $80{,}463$, $178{,}112$,
$391{,}220$ and $853{,}176$ records, windows of $2^{15}$ to $2^{19}$ cells,
counters of $20$, $21$, $22$, $23$ and $24$ bits and images of $239{,}030$ to
$6{,}932{,}055$ bytes, and a generation takes $2{,}653{,}813$, $6{,}253{,}470$,
$14{,}583{,}670$, $33{,}661{,}915$ and $76{,}962{,}164$ passes, from
$9.57\times10^{10}$ to $6.57\times10^{13}$ applications of $\mathcal M$ with $S=G_A$.

\subsection{The amplified host}

For $r=3$ the amplified host $N_3$ of Construction~\ref{con:amplified} has $6171$
state bits, $125{,}901$ units, $679{,}509$ edges, depth $13$ and biases in
$[-23,4]$, and $|\sigma(N_3)|=3{,}121{,}539$ bytes. The tape $\tau^\ast_3$ has
$3{,}022{,}111$ bytes, $\ser(\Omega^\ast_3)$ has $6{,}143{,}922$, and the run of
$\Omega^\ast_3$ takes $33{,}796{,}449$ steps.

\section{The records of the organism}\label{app:organism}

The organism $\rho$ of Construction~\ref{con:organism} consists of the $143$ records
below, in the order of the bank, each written as the gate of
Definition~\ref{def:code} that it computes into its output cell. Cells are named by
their role: $\mathrm{rd}$, $\mathrm{arm}$ and $\mathrm{cpy}$ are the cells under the read head,
the arm and the copy head, $\mathrm{AR}$ to $\mathrm{HO}$ the requests, $m$ the mode,
\phstart, \phcodon, \phgate{} and \phaddr{} the phase, $\xi_0,\dots,\xi_3$ the codon
register, $z_0,\dots,z_4$ the counter, and $t_{27},\dots,t_{44}$ the temporaries,
each reused once its value has been read. Every record is a gate, so $\code(\rho)$
has no escape. The records are generated by \texttt{src/twohead.py}, which also
writes them to \texttt{rocq/Organism.v}.

{\footnotesize
\emph{Record 1, the start flag:} 1~$t_{27}\gets \textsc{start}$. \emph{Records 2--5, the counter at zero:} 2~$t_{28}\gets \neg z_0\wedge \neg z_1$; 3~$t_{28}\gets t_{28}\wedge \neg z_2$; 4~$t_{28}\gets t_{28}\wedge \neg z_3$; 5~$t_{28}\gets t_{28}\wedge \neg z_4$. \emph{Records 6--14, the codon after this pass's shift, and its end and escape at completion:} 6~$t_{29}\gets \xi_2\wedge \xi_1$; 7~$t_{30}\gets \xi_0\wedge \mathrm{rd}$; 8~$t_{31}\gets t_{29}\wedge t_{30}$; 9~$t_{29}\gets t_{29}\wedge \neg t_{30}$; 10~$t_{30}\gets \textsc{codon}\wedge t_{28}$; 11~$t_{32}\gets t_{30}\wedge t_{31}$; 12~$t_{30}\gets t_{30}\wedge \neg t_{31}$; 13~$t_{31}\gets t_{32}\wedge \neg m$; 14~$t_{33}\gets t_{32}\wedge m$. \emph{Records 15--17, the escape of the codon held in \textsc{gate}:} 15~$t_{34}\gets \xi_3\wedge \xi_2$; 16~$t_{35}\gets \xi_1\wedge \xi_0$; 17~$t_{34}\gets t_{34}\wedge \neg t_{35}$. \emph{Records 18--24, consume, write under the arm, copy:} 18~$t_{35}\gets \textsc{gate}\wedge t_{34}$; 19~$t_{36}\gets \textsc{codon}\vee \textsc{addr}$; 20~$t_{35}\gets t_{36}\vee t_{35}$; 21~$t_{36}\gets \textsc{gate}\vee \textsc{addr}$; 22~$t_{36}\gets t_{36}\wedge \neg m$; 23~$t_{34}\gets \textsc{gate}\wedge \neg t_{34}$; 24~$t_{37}\gets m\wedge t_{35}$. \emph{Records 25--63, the gate bit $7-z$ that the held codon determines:} 25~$t_{38}\gets \xi_3\wedge \neg \xi_2$; 26~$t_{39}\gets \xi_1\vee \xi_0$; 27~$t_{40}\gets \neg \xi_1\vee \neg \xi_0$; 28~$t_{39}\gets t_{39}\wedge t_{40}$; 29~$t_{40}\gets \xi_1\wedge \xi_0$; 30~$t_{41}\gets \xi_1\vee \xi_0$; 31~$t_{40}\gets \xi_2\wedge t_{40}$; 32~$t_{41}\gets \neg \xi_2\wedge t_{41}$; 33~$t_{40}\gets t_{40}\vee t_{41}$; 34~$t_{41}\gets \xi_3\wedge \xi_1$; 35~$t_{40}\gets \neg \xi_3\wedge t_{40}$; 36~$t_{40}\gets t_{41}\vee t_{40}$; 37~$t_{41}\gets \xi_2\vee t_{39}$; 38~$t_{39}\gets \neg \xi_2\vee \neg t_{39}$; 39~$t_{39}\gets t_{41}\wedge t_{39}$; 40~$t_{41}\gets \xi_3\wedge \xi_1$; 41~$t_{39}\gets \neg \xi_3\wedge t_{39}$; 42~$t_{39}\gets t_{41}\vee t_{39}$; 43~$t_{41}\gets \neg t_{38}\wedge \neg \xi_0$; 44~$t_{38}\gets \neg t_{38}\wedge \xi_0$; 45~$t_{42}\gets z_0\wedge \neg \xi_1$; 46~$t_{43}\gets \neg z_0\wedge \xi_1$; 47~$t_{42}\gets t_{42}\vee t_{43}$; 48~$t_{41}\gets z_0\wedge t_{41}$; 49~$t_{38}\gets \neg z_0\wedge t_{38}$; 50~$t_{38}\gets t_{41}\vee t_{38}$; 51~$t_{41}\gets z_1\wedge t_{42}$; 52~$t_{38}\gets \neg z_1\wedge t_{38}$; 53~$t_{38}\gets t_{41}\vee t_{38}$; 54~$t_{41}\gets z_1\vee z_0$; 55~$t_{40}\gets t_{41}\wedge t_{40}$; 56~$t_{39}\gets \neg t_{41}\wedge t_{39}$; 57~$t_{39}\gets t_{40}\vee t_{39}$; 58~$t_{38}\gets z_2\wedge t_{38}$; 59~$t_{39}\gets \neg z_2\wedge t_{39}$; 60~$t_{38}\gets t_{38}\vee t_{39}$; 61~$t_{38}\gets t_{34}\wedge t_{38}$; 62~$t_{34}\gets \neg t_{34}\wedge \mathrm{rd}$; 63~$t_{34}\gets t_{38}\vee t_{34}$. \emph{Records 64--77, the next phase:} 64~$t_{38}\gets \textsc{codon}\wedge \neg t_{28}$; 65~$t_{38}\gets t_{38}\vee t_{31}$; 66~$t_{39}\gets \textsc{addr}\wedge t_{28}$; 67~$t_{38}\gets t_{38}\vee t_{39}$; 68~$t_{38}\gets t_{38}\vee \textsc{start}$; 69~$t_{29}\gets \neg m\vee t_{29}$; 70~$t_{39}\gets t_{30}\wedge t_{29}$; 71~$t_{40}\gets \textsc{gate}\wedge \neg t_{28}$; 72~$t_{39}\gets t_{39}\vee t_{40}$; 73~$t_{29}\gets t_{30}\wedge \neg t_{29}$; 74~$t_{30}\gets \textsc{gate}\wedge t_{28}$; 75~$t_{29}\gets t_{29}\vee t_{30}$; 76~$t_{30}\gets \textsc{addr}\wedge \neg t_{28}$; 77~$t_{29}\gets t_{29}\vee t_{30}$. \emph{Records 78--107, the counter: load $3$, $7$ or $17$, or count down:} 78~$t_{28}\gets t_{28}\vee \textsc{start}$; 79~$t_{30}\gets \neg z_0$; 80~$t_{40}\gets \neg z_0$; 81~$t_{41}\gets z_1\vee t_{40}$; 82~$t_{42}\gets \neg z_1\vee \neg t_{40}$; 83~$t_{41}\gets t_{41}\wedge t_{42}$; 84~$t_{40}\gets t_{40}\wedge \neg z_1$; 85~$t_{42}\gets z_2\vee t_{40}$; 86~$t_{43}\gets \neg z_2\vee \neg t_{40}$; 87~$t_{42}\gets t_{42}\wedge t_{43}$; 88~$t_{40}\gets t_{40}\wedge \neg z_2$; 89~$t_{43}\gets z_3\vee t_{40}$; 90~$t_{44}\gets \neg z_3\vee \neg t_{40}$; 91~$t_{43}\gets t_{43}\wedge t_{44}$; 92~$t_{40}\gets t_{40}\wedge \neg z_3$; 93~$t_{44}\gets z_4\vee t_{40}$; 94~$t_{40}\gets \neg z_4\vee \neg t_{40}$; 95~$t_{40}\gets t_{44}\wedge t_{40}$; 96~$t_{44}\gets t_{38}\vee t_{39}$; 97~$z_0\gets t_{28}\vee t_{30}$; 98~$t_{30}\gets t_{28}\wedge t_{44}$; 99~$t_{41}\gets \neg t_{28}\wedge t_{41}$; 100~$z_1\gets t_{30}\vee t_{41}$; 101~$t_{30}\gets t_{28}\wedge t_{39}$; 102~$t_{41}\gets \neg t_{28}\wedge t_{42}$; 103~$z_2\gets t_{30}\vee t_{41}$; 104~$z_3\gets \neg t_{28}\wedge t_{43}$; 105~$t_{30}\gets t_{28}\wedge t_{29}$; 106~$t_{28}\gets \neg t_{28}\wedge t_{40}$; 107~$z_4\gets t_{30}\vee t_{28}$. \emph{Records 108--119, the codon register shifts in \textsc{codon}:} 108~$t_{28}\gets \textsc{codon}\wedge \xi_2$; 109~$t_{30}\gets \neg \textsc{codon}\wedge \xi_3$; 110~$\xi_3\gets t_{28}\vee t_{30}$; 111~$t_{28}\gets \textsc{codon}\wedge \xi_1$; 112~$t_{30}\gets \neg \textsc{codon}\wedge \xi_2$; 113~$\xi_2\gets t_{28}\vee t_{30}$; 114~$t_{28}\gets \textsc{codon}\wedge \xi_0$; 115~$t_{30}\gets \neg \textsc{codon}\wedge \xi_1$; 116~$\xi_1\gets t_{28}\vee t_{30}$; 117~$t_{28}\gets \textsc{codon}\wedge \mathrm{rd}$; 118~$t_{30}\gets \neg \textsc{codon}\wedge \xi_0$; 119~$\xi_0\gets t_{28}\vee t_{30}$. \emph{Records 120--125, the bits under the arm and the copy head:} 120~$t_{28}\gets t_{36}\wedge t_{34}$; 121~$t_{30}\gets \neg t_{36}\wedge \mathrm{arm}$; 122~$\mathrm{arm}\gets t_{28}\vee t_{30}$; 123~$t_{28}\gets t_{37}\wedge \mathrm{rd}$; 124~$t_{30}\gets \neg t_{37}\wedge \mathrm{cpy}$; 125~$\mathrm{cpy}\gets t_{28}\vee t_{30}$. \emph{Records 126--133, the mode and the phase flags:} 126~$t_{28}\gets m\vee t_{32}$; 127~$t_{30}\gets \neg m\vee \neg t_{32}$; 128~$t_{28}\gets t_{28}\wedge t_{30}$; 129~$m\gets t_{28}\wedge \neg \textsc{start}$; 130~$\textsc{codon}\gets t_{38}$; 131~$\textsc{gate}\gets t_{39}$; 132~$\textsc{addr}\gets t_{29}$; 133~$\textsc{start}\gets t_{33}$. \emph{Records 134--143, the requests:} 134~$\mathrm{AA}\gets t_{36}$; 135~$\mathrm{AC}\gets t_{37}$; 136~$\mathrm{AR}\gets t_{35}$; 137~$\mathrm{RR}\gets t_{27}\vee t_{31}$; 138~$t_{28}\gets t_{27}\vee t_{33}$; 139~$\mathrm{RA}\gets t_{28}$; 140~$\mathrm{RC}\gets t_{28}$; 141~$\mathrm{CB}\gets t_{27}$; 142~$\mathrm{CT}\gets t_{27}$; 143~$\mathrm{HO}\gets t_{33}$.
}

\section{The restoration bound}\label{app:restore}

\begin{definition}[The restoration bound]\label{def:restore}
\allowdisplaybreaks
Let $s$ be the deviation of Proposition~\ref{prop:restore}, and unroll
$\Lev(\Nhost)$ along the orbit of $x_0$, the rows of layer~$1$ of each step reading
the rows of layer~$d$ of the step before. For a row $v$ of a step, let $z_v$ be its pre-activation and $y_v$ its
value on the orbit, and put $D_v=3|z_v+\tfrac12|$, $q_j(v)=Q\bigl((D_v-j)/s\bigr)$
for $j=0,1,2,3$ and $\Delta_v=q_1(v)-q_0(v)$. Let $U(v)$ be the set of rows $u$ that
$v$ reads with a weight $w$ satisfying $w(1-2y_u)(z_v+\tfrac12)<0$, the inputs
whose error moves $v$ toward the comparator. Define, layer by layer,
\[
\begin{gathered}
P_v=3q_0(v)+3\Delta_vX_v,\qquad
P^{(2)}_v=3q_0(v)^2+3\bigl(q_1(v)^2-q_0(v)^2\bigr)X_v,\\
P^{(3)}_v=q_0(v)^3+\bigl(q_1(v)^3-q_0(v)^3\bigr)X_v,
\end{gathered}
\]
where $X_v=\sum_{u\in U(v)}P_u$, and for distinct rows $v_1,v_2$ of one layer
\[
K_{v_1v_2}=9\Delta_{v_1}\Delta_{v_2}\Bigl(\sum_{u\in U(v_1)\cap U(v_2)}P_u
+\sum_{\substack{u_1\in U(v_1),\ u_2\in U(v_2)\\ u_1\neq u_2}}K_{u_1u_2}\Bigr),
\]
with $P$ and $K$ zero on the state of step~$0$ and on the device cells of every
state, and $P^{(2)}$ and $P^{(3)}$ zero on every state. Let
\begin{multline*}
b(t)=\sum_{w}\Bigl(\min\bigl(1,3q_2(w)\bigr)\sum_{u\in U(w)}P^{(2)}_u
+\min\bigl(1,3q_3(w)\bigr)\sum_{u\in U(w)}P^{(3)}_u\\*
+\sum_{\substack{u_1,u_2\in U(w)\\ u_1<u_2}}\bigl(P_{u_1}P_{u_2}+K_{u_1u_2}\bigr)\Bigr)
+\sum_{v}P^{(2)}_v,
\end{multline*}
the first sum over the rows of the layers of step~$t$ and the last over the rows
of its last layer.
\end{definition}

\begin{proof}[Proof of Proposition~\ref{prop:restore}]
Call a copy wrong when its output differs from the replicated orbit. Every copy
of a row $v$ reads the same copies, and a wrong copy of an input $u$ moves the
pre-activation of a copy of $v$ by $w(1-2y_u)=\pm1$, toward the comparator
exactly when $u\in U(v)$. Let $S_v$ be the set of wrong copies of the rows of
$U(v)$. By Proposition~\ref{prop:amplify}(ii) a copy of $v$ is wrong only if its
perturbation, on the far side of the comparator, has magnitude at least
$D_v-|S_v|$; given the perturbations of the earlier layers the three copies of
$v$ therefore fail independently, each with probability at most $q_{|S_v|}(v)$.
Call a layer a \emph{stop} when it holds a wrong copy of a row $v$ with
$|S_v|\ge2$, or when it is the last layer of a step and one of its rows has two
wrong copies. The state read by majority is wrong after a step only if the last
layer of that step is a stop, and while no layer is a stop the device reads
every cell correctly and writes every copy of it, so the probability to be
bounded is at most the sum, over the layers of the first $T$ steps, of the
probability that the first stop is that layer.

For a row $v$ let $W_v$ be the event that $v$ has a wrong copy and neither its
layer nor an earlier one is a stop. On $W_v$, $|S_v|\le1$, and $|S_v|=1$ only on
the union of the $W_u$ with $u\in U(v)$, so
$\Pr(W_v)\le3q_0(v)+3\Delta_v\sum_{u\in U(v)}\Pr(W_u)$ and, by induction over the
layers, $\Pr(W_v)\le P_v$; on the device cells of a state $W_v$ is empty, the
device having written every copy. Two of three independent copies fail with
probability at most $3q_j^2$ and all three with probability at most $q_j^3$, so
the probability that no earlier layer is a stop, $|S_v|\le1$, and $v$ has two or
three wrong copies is at most $P^{(2)}_v$ or $P^{(3)}_v$; for a row of a state it
is zero as far as the layers after it are concerned, since two wrong copies of a
state row make the last layer of the step before a stop. For distinct rows
$v_1,v_2$ of one layer the copies of $v_1$ and of $v_2$ fail independently given
the earlier layers, so
\[
\Pr(W_{v_1}\cap W_{v_2})\le9\,\mathbb E\Bigl[\bigl(q_0(v_1)+\Delta_{v_1}\mathbf 1_{|S_{v_1}|=1}\bigr)\bigl(q_0(v_2)+\Delta_{v_2}\mathbf 1_{|S_{v_2}|=1}\bigr)\Bigr],
\]
the expectation taken on the event that no earlier layer is a stop. Both
indicators are $1$ only on the union of the $W_u$ with $u\in U(v_1)\cap U(v_2)$
and of the $W_{u_1}\cap W_{u_2}$ with $u_1\in U(v_1)$, $u_2\in U(v_2)$,
$u_1\neq u_2$; expanding the product and bounding the latter events by induction
gives $\Pr(W_{v_1}\cap W_{v_2})\le P_{v_1}P_{v_2}+K_{v_1v_2}$, since
$\sum_{u_1\neq u_2}P_{u_1}P_{u_2}\le X_{v_1}X_{v_2}$.

If the first stop holds a wrong copy of a row $w$ with $|S_w|\ge2$, then either
$S_w$ consists of two or of three copies of one row $u\in U(w)$, events of
probability at most $P^{(2)}_u$ and $P^{(3)}_u$ after which a copy of $w$ fails
with probability at most $\min(1,3q_2(w))$ or $\min(1,3q_3(w))$, or two distinct
rows $u_1,u_2\in U(w)$ have wrong copies, with probability at most
$P_{u_1}P_{u_2}+K_{u_1u_2}$. Otherwise the first stop is the last layer of a step
and one of its rows $v$ has two wrong copies with $|S_v|\le1$, with probability
at most $P^{(2)}_v$. Summing over the rows of the layers of the first $T$ steps
gives $\sum_{t<T}b(t)$.
\end{proof}

\section{Measurements under perturbation}\label{app:measure}

This appendix gives in full the measurements that Sections~\ref{sec:certified}
and~\ref{sec:amplify} summarize; every run is recorded in the artifact.

\subsection{The host}

Table~\ref{tab:curve} compares the rate at which noisy copies first leave the
exact orbit with $b'$, $b$ and the bound $Mp$ of Proposition~\ref{prop:random}.

Restarting $1024$ noisy copies from the exact state at each of the first $4000$
steps, $4{,}096{,}000$ copy-steps in all, separates the failures of rows from the
deviations of the state. At $s=0.08$, $0.09$, $0.10$, $0.11$, $0.12$ and $0.15$ some
row failed in $31$, $1739$, $35{,}940$, $333{,}003$, $1{,}556{,}080$ and
$4{,}095{,}990$ copy-steps, against $26.0$, $1750.3$, $36{,}109.5$, $332{,}517.2$,
$1{,}554{,}804.7$ and $4{,}095{,}992.7$ under $b'$, and $1$, $100$, $2269$,
$19{,}939$, $76{,}719$ and $11$ of them ended on the orbit; at $s=0.20$ every
copy-step failed and none did.

\begin{table}[ht]
\centering\small
\setlength{\tabcolsep}{5pt}
\begin{tabular}{@{}lrrllll@{}}
\hline
$s$ & Deviated & Exposure & Measured rate, $95\%$ interval & $b'$ & $b$ & $Mp$\\
\hline
$0.06$ & 0/256 & 768{,}000 & $0$, $[0,\,4.80\times10^{-6}]$ & $1.21\times10^{-12}$ & $1.21\times10^{-12}$ & $3.30\times10^{-12}$\\
$0.08$ & 7/256 & 1{,}008{,}916 & $6.94\times10^{-6}$, $[2.79,\,14.3]\times10^{-6}$ & $6.34\times10^{-6}$ & $6.34\times10^{-6}$ & $1.72\times10^{-5}$\\
$0.09$ & 202/256 & 496{,}131 & $4.07\times10^{-4}$, $[3.53,\,4.67]\times10^{-4}$ & $4.27\times10^{-4}$ & $4.27\times10^{-4}$ & $1.16\times10^{-3}$\\
$0.10$ & 256/256 & 29{,}384 & $8.71\times10^{-3}$, $[7.68,\,9.85]\times10^{-3}$ & $8.81\times10^{-3}$ & $8.85\times10^{-3}$ & $2.41\times10^{-2}$\\
$0.11$ & 256/256 & 3{,}325 & $7.70\times10^{-2}$, $[6.79,\,8.70]\times10^{-2}$ & $8.07\times10^{-2}$ & $8.42\times10^{-2}$ & $2.30\times10^{-1}$\\
$0.12$ & 256/256 & 714 & $3.59\times10^{-1}$, $[3.16,\,4.05]\times10^{-1}$ & $3.77\times10^{-1}$ & $4.73\times10^{-1}$ & $1.30$\\
$0.15$ & 256/256 & 256 & $1.00$, $[0.88,\,1]$ & $1.00$ & $1.30\times10^{1}$ & $3.60\times10^{1}$\\
$0.20$ & 256/256 & 256 & $1.00$, $[0.88,\,1]$ & $1.00$ & $1.89\times10^{2}$ & $5.21\times10^{2}$\\
\hline
\end{tabular}
\caption{The per-step probability that $\Lev(\Nhost)$ leaves the exact orbit under
read noise of standard deviation $s$. One noise-free copy and $256$ noisy copies
were stepped together from the initial state of the self-describing instance,
$3000$ steps for $s\le0.06$ and $4000$ otherwise, the full state compared after
every step; the exposure is the number of step--copy pairs before a first
deviation, and the interval is the exact $95\%$ interval for a Poisson count over
it. $b'$ and $b$ are the product form and the sum of Proposition~\ref{prop:profile}
averaged over the same exposure; at $s=0.08$ and $s=0.09$ $b'$ bounds the
expected number of deviating copies by $6.4$ and $212.0$, against $7$ and $202$
observed.}
\label{tab:curve}
\end{table}

The complete self-reproducing run was executed under the same noise on every
pre-activation of every layer at every step, a run ending when it emits a byte that
differs from $\ser(\Omega^\ast)$, when the halt bit is set, or when the whole
instance is emitted. On $\Lev(\Nhost)$ it emitted the whole instance at $s=0.05$,
and set the halt bit after $10{,}517$ steps and $143$ bytes at $s=0.10$, after
$135$ steps at $s=0.15$ and after $2$ at $s=0.20$. With $N_3$ in place of $\Nhost$,
the device reading each register bit as the majority of its three copies, it
emitted the whole instance and halted on the step of the exact orbit at $s=0.10$,
$0.15$, $0.20$ and $0.25$, where Theorem~\ref{thm:amplify}(ii) bounds the
probability of a deviation over the run by $1.54\times10^{-39}$,
$3.19\times10^{-12}$, $1.34\times10^{-2}$ and $4.14\times10^{2}$.

\subsection{The amplified host}

Repeated on $N_3$, the measurement of Table~\ref{tab:curve}
gives the rate at which some copy of some state bit leaves the orbit:
$9.79\times10^{-7}$, $8.47\times10^{-5}$, $2.46\times10^{-3}$ and
$2.58\times10^{-2}$ per step at $s=0.24$, $0.27$, $0.30$ and $0.33$. These rates
lie below the profile bound of Theorem~\ref{thm:amplify}(ii) by factors between
$9.8$ and $19$, since a flip in a layer before the last is absorbed unless a second
error meets it (Proposition~\ref{prop:amplify}(iii)). Restricted to the rows of the
last layer, whose wrong copies nothing outvotes, the bound is $8.54\times10^{-5}$
at $s=0.27$, inside the $95\%$ interval $[6.65,10.6]\times10^{-5}$ of the measured
rate, which comes from $74$ events. At $s=0.30$ to $0.36$ the measured rate exceeds
this restricted bound by factors $1.4$ to $1.6$, a wrong copy of an earlier layer
then meeting a second error more often.

Table~\ref{tab:bound3} measures the failure itself: $64$ noisy copies of $N_3$ were
stepped beside an exact one from the initial state of the self-describing instance
for up to $20{,}000$ steps, each stopped at its first wrong majority-read state. At
$s=0.27$ the restoration bound gives probability $0.091$ that a copy fails within the
$20{,}000$ steps, against $5$ of the $64$ copies.

\begin{table}[ht]
\centering\small
\begin{tabular}{@{}lrrlll@{}}
\hline
$s$ & Failed & Exposure & Measured rate, $95\%$ interval & Restoration bound & $rM\erfc(r/2\sqrt2s)$\\
\hline
$0.10$ & --- & --- & --- & $8.62\times10^{-59}$ & $9.24\times10^{-46}$\\
$0.15$ & --- & --- & --- & $4.02\times10^{-25}$ & $1.92\times10^{-18}$\\
$0.20$ & --- & --- & --- & $3.59\times10^{-13}$ & $8.03\times10^{-9}$\\
$0.22$ & --- & --- & --- & $1.82\times10^{-10}$ & $1.16\times10^{-6}$\\
$0.25$ & --- & --- & --- & $1.58\times10^{-7}$ & $2.48\times10^{-4}$\\
$0.27$ & 5/64 & 1{,}237{,}505 & $4.04\times10^{-6}$, $[1.31,\,9.43]\times10^{-6}$ & $4.57\times10^{-6}$ & $3.48\times10^{-3}$\\
$0.30$ & 63/64 & 295{,}163 & $2.13\times10^{-4}$, $[1.64,\,2.73]\times10^{-4}$ & $2.26\times10^{-4}$ & $7.22\times10^{-2}$\\
$0.33$ & 64/64 & 17{,}472 & $3.66\times10^{-3}$, $[2.82,\,4.68]\times10^{-3}$ & $4.34\times10^{-3}$ & $6.90\times10^{-1}$\\
$0.36$ & 64/64 & 2{,}188 & $2.93\times10^{-2}$, $[2.25,\,3.74]\times10^{-2}$ & $4.30\times10^{-2}$ & $3.89$\\
\hline
\end{tabular}
\caption{The per-step probability that the state of $N_3$ read by majority leaves the orbit under read noise of standard deviation $s$: $64$ noisy copies stepped beside an exact one from the initial state of the self-describing instance for up to $20{,}000$ steps, each stopped at its first wrong majority-read state, with the exact $95\%$ interval for a Poisson count over the exposure, against the bound $b(t)$ of Proposition~\ref{prop:restore} along the same orbit, weighted by the steps each copy ran where copies ran and averaged over the $20{,}000$ steps elsewhere, and the first-order bound of Theorem~\ref{thm:amplify}(ii).}
\label{tab:bound3}
\end{table}

\section{Verification}\label{sec:verification}

The artifact~\cite{Norton2026} holds the Rocq development, the generators, the
evaluators and the record of every run. Everything was run with Python 3.12.3,
PyTorch 2.14.0 with CUDA 12.6, NumPy 2.5.3, safetensors 0.8.0 and GCC 13.3.0 on
Ubuntu 24.04, on an Intel Core i9-12900H with $20$ hardware threads and $31$ GiB of
memory and an NVIDIA GeForce RTX 3070 Ti Laptop GPU with $8$ GiB of memory.

\subsection{The Rocq development}

The Rocq development, \texttt{rocq/} in the artifact, has $66$ files and $24{,}230$
lines. With Rocq 9.1.1, its standard library 9.2.0, Coquelicot 3.4.5, Interval
4.11.4 and MathComp 2.4.0 it builds in about two minutes. Table~\ref{tab:rocq} gives
the Rocq theorems behind each result. \texttt{rocq/Assumptions.v} names them and the lemmas they rest on,
$144$ theorems in all, and \texttt{src/paper.py rocq} rebuilds the development and
prints the axioms of each.

The theorems on netlists give every unit its value in terms of the state and show
by induction over the units that each unit fires to it; those on programs follow a
run step by step in $\mathrm{SUBLEQ}_A$ with the device of
Definition~\ref{def:transducer}; and those on stored networks relate each pass of
the ring machine to one step of the machine stored in it, the organism's decoder or
the host with its device.

Section~\ref{sec:robust} is proved on a probability space given by its properties.
A function $P$ on the events of the perturbations $(e_g)$ is monotone and
subadditive, is $0$ on the empty event and $1$ on the sure one, multiplies on events
that depend on disjoint sets of perturbations, and is additive and complementary on
the events built from the threshold events $\{c\le e_g\}$; each perturbation has the
tails $P(e_g\ge c)=Q(c/s)$ and $P(e_g\le c)=Q(-c/s)$, and in
Proposition~\ref{prop:flip} each output flips with probability $\varepsilon$. The
product of Gaussian measures with its outer measure is such a space: on events that
depend on disjoint sets of perturbations the outer measure is multiplicative, as
integrating the sections of a measurable cover shows.
Corollary~\ref{cor:synapse} assumes in addition that a sum of $m$ independent
centred Gaussians of deviation $s_{\mathrm w}$ is a centred Gaussian of deviation
$s_{\mathrm w}\sqrt m$. In Proposition~\ref{prop:restore} the device writes every
copy of its cells with the values of the run, as the device of $N_3$ does while the
state read by majority is the run's. The identity
$\bigl(\int_0^xe^{-t^2}dt\bigr)^2+\int_0^1e^{-x^2(1+t^2)}/(1+t^2)\,dt=\pi/4$ is proved
from its derivative, which is $0$, with the integrals and derivatives of
Coquelicot~\cite{BoldoLelayMelquiond2015}. With
$\erfc(x)=1-\tfrac2{\sqrt\pi}\int_0^xe^{-t^2}dt$ it gives $\erfc(x)\le e^{-x^2}$ for
$x\ge0$ and, for $x>0$,
\[
\frac{e^{-x^2}}{x\sqrt\pi}\Bigl(1-\frac1{2x^2}\Bigr)\le\erfc(x)
\le\frac{e^{-x^2}}{x\sqrt\pi}\Bigl(1-\frac1{2x^2}+\frac3{4x^4}\Bigr),
\]
from which the values of $\erfc$ in Sections~\ref{sec:certified}
and~\ref{sec:amplify} and in Corollary~\ref{cor:storednoise} are evaluated with the
interval tactic of the Interval library~\cite{Melquiond2008}: the failure
probability at $s=0.05$, the bound $TMp$ over the self-reproducing run, the
redundancies of Corollary~\ref{cor:synapse}, the threshold of
Proposition~\ref{prop:flip}, the first-order bound for $N_3$ over the run and the
noise levels of Corollary~\ref{cor:storednoise}. The profile bound of
Proposition~\ref{prop:profile}, the restoration bound of
Proposition~\ref{prop:restore} and the per-step values of Tables~\ref{tab:curve}
and~\ref{tab:bound3} are computed by the artifact in floating point. The results on real
numbers rest on the classical axioms of Rocq's real numbers and on functional
extensionality, and the evaluated values also on the specifications of its
primitive integers and floating-point numbers.

{\footnotesize
\setlength{\LTcapwidth}{\textwidth}
\begin{longtable}{@{}>{\raggedright\arraybackslash}p{0.27\textwidth}>{\raggedright\arraybackslash}p{0.69\textwidth}@{}}
\caption{The Rocq theorems behind each result.}\label{tab:rocq}\\
\hline
Result & Rocq theorems\\
\hline
\endfirsthead
\hline
Result & Rocq theorems\\
\hline
\endhead
\hline
\endfoot
Lemma~\ref{lem:lev} & \rocq{Lev.lev\_ok}\\
Theorem~\ref{thm:physnet} & \rocq{PhysSem.netlist\_step}, \rocq{PhysFacts.phys\_ternary}, \rocq{PhysLev.phys\_dmax}\\
Lemmas~\ref{lem:feedback} and~\ref{cor:compile} & \rocq{Schedule.netlist\_pass}, \rocq{Compile.compile\_unit}, \rocq{Compile.compile\_records}\\
Lemma~\ref{lem:code}, Proposition~\ref{prop:pass} & \rocq{Complete.describe\_complete}, \rocq{PassLemma.machine\_pass}\\
Theorem~\ref{thm:construct} & \rocq{Run.universal\_construction}, \rocq{Main.builds\_every\_image}\\
Theorem~\ref{thm:organism} & \rocq{Main.copies}, \rocq{Main.periodic}, \rocq{NetMain.network\_reproduces}, \rocq{NetMain.network\_periodic}\\
Lemmas~\ref{lem:sub} and~\ref{lem:mem} & \rocq{Adder.sub\_value}, \rocq{Gates.sAND\_fire}, \rocq{Gates.sOR\_fire}\\
Lemma~\ref{lem:branch}, Remark~\ref{rem:ternarydepth} & \rocq{Lower.branch\_unit}, \rocq{Lower.branch\_not\_ternary}\\
Lemma~\ref{lem:muxdepth} & \rocq{Lower.read\_needs\_two\_levels}, \rocq{Lower.onehot\_needs\_two\_levels}\\
Proposition~\ref{prop:host} & \rocq{HostSem.host\_step}, \rocq{HostLev.host\_dmax}, \rocq{HostCount.host\_weights\_biases}\\
Remark~\ref{rem:fold} & \rocq{FoldHost.fold\_step}, \rocq{FoldHost.fold\_dmax}\\
Lemmas~\ref{lem:complete}, \ref{lem:describe}, Prop.~\ref{prop:optimal} & \rocq{Recipe.recipes\_complete}, \rocq{Recipe.describe\_spec}, \rocq{Recipe.recipe\_optimal}\\
Theorem~\ref{thm:univ} & \rocq{Pstar.constructor\_run}\\
Theorem~\ref{thm:selfrep}, Proposition~\ref{prop:steps} & \rocq{Pstar.selfrep\_run}\\
Proposition~\ref{prop:storedpass} & \rocq{StoredProp.stored\_pass}, \rocq{StoredWbr.wbr\_stored}\\
Theorem~\ref{thm:storedconstruct} & \rocq{StoredRun.stored\_construct}\\
Theorem~\ref{thm:storedrep} & \rocq{StoredMain.copies}, \rocq{StoredMain.periodic}, \rocq{StoredMain.network\_reproduces}, \rocq{StoredMain.network\_periodic}\\
Theorem~\ref{thm:widths} & \rocq{StoredWidth.width\_exists}, \rocq{StoredCount.stored\_records}\\
Theorem~\ref{thm:margin}, Proposition~\ref{prop:mismatch} & \rocq{Margin.margin}, \rocq{Margin.margin\_tight}, \rocq{Margin.mismatch\_all}, \rocq{Margin.mismatch\_run}\\
Propositions~\ref{prop:random} and~\ref{prop:profile} & \rocq{Noise.union\_gauss}, \rocq{Noise.profile\_step}, \rocq{Noise.copy\_bound}\\
Proposition~\ref{prop:amplify}, Theorem~\ref{thm:amplify} & \rocq{Margin.amap\_rep}, \rocq{Margin.amp\_margin}, \rocq{Noise.amp\_gauss}, \rocq{Noise.amp\_profile}\\
Corollary~\ref{cor:synapse} & \rocq{Noise.synapse\_bound}\\
Proposition~\ref{prop:flip} & \rocq{Noise.flip\_bound}, \rocq{FlipLower.flip\_fails}\\
Corollary~\ref{cor:amplifiedselfrep} & \rocq{AmpHost.amp\_selfrep}\\
Proposition~\ref{prop:restore} & \rocq{Restore.restoration}\\
Corollary~\ref{cor:storednoise} & \rocq{NoiseNum.generation\_1}, \rocq{NoiseNum.generation\_3}, \rocq{NoiseNum.generation\_5}\\
The other values of Section~\ref{sec:robust} & the theorems of \rocq{NoiseNum}, from the bounds on $\erfc$ in \rocq{Erfc}\\
Appendix~\ref{app:sizes} & the counts in \rocq{PhysFacts}, \rocq{PhysLev}, \rocq{HostCount}, \rocq{HostLev}, \rocq{FoldHost} and \rocq{StoredCount}\\
\end{longtable}}

\subsection{Evaluators and runs}

Four evaluators of the host were used: (a) an integer reference of
Definitions~\ref{def:transducer} and~\ref{def:subleq}; (b) the netlist that
$\sigma(\Nhost)$ denotes, unit by unit; (c) $\Lev(\Nhost)$; and (d)
\texttt{src/netlist\_eval.c}, with its own reference, device, parser and
levelization. The self-describing instance ran on (a), (b) and (d), $793{,}709$
steps emitting $\sigma(\Nhost)$, and $\Omega^\ast$ for three generations on (a), (b)
and (d) and one on (c), each started only from the output of the one before, split
by $\inst$ and rebuilt by $\sigma^{-1}$; (d) compared the full state and the output
with its reference after every step. One step of $N_3$ agreed with
Definition~\ref{def:subleq} on $65{,}536$ replicated random states, and on~(d)
$\Omega^\ast_3$ halted after the $33{,}796{,}449$ steps of
Proposition~\ref{prop:steps} and emitted $\ser(\Omega^\ast_3)$. Separate code in
\texttt{src/independent.py}, written from the definitions of this paper, rebuilds
$H_A$ for $A=3$ to $12$, levelizes $\sigma(\Nhost)$ and repeats the measurements of
Section~\ref{sec:certified} and Appendix~\ref{app:measure}, and rebuilding the host
from source reproduces $\sigma(\Nhost)$ byte for byte. One step of the folded host
agreed with $\mathrm{SUBLEQ}_A$ on $2^{16}$ random states for each $A$ from $3$ to
$10$, and on $2^{15}$ and $2^{14}$ for $A=11$ and $12$, and none of the $137{,}781$
ternary units on eight inputs with biases in $[-10,10]$ computes the branch
predicate.

The encoder satisfies Lemma~\ref{lem:describe} on the $29$ serialized machines of
the artifact, $971{,}266{,}155$ bytes, and on $68{,}061$ strings straddling every
boundary of the token format; the shortest recipe attains the bound of
Lemma~\ref{lem:complete} on $3{,}203$ strings without equal adjacent bytes; the step
counts of Theorem~\ref{thm:univ} and Proposition~\ref{prop:steps} agree with the
reference on $880$ recipes; and $\inst(\ser(\Omega))=\Omega$ on $310$ instances. The
constructor emitted all $29$ machines on the reference, and four of them on the
netlist of $\sigma(\Nhost)$.

For $N^6_{3,160,3200}$, of $151{,}892$ units on $32{,}301$ state bits, and for
$N^{12}_{2,3,40}$ and $N^6_{5,2,7}$, \texttt{src/twohead.py} writes byte for byte the
file that the extracted netlist definition writes, and one step of the netlist
agreed with a transcription of Definition~\ref{def:twohead} on $700$ random states of
three other members. Six generations of the organism, $1{,}191{,}183$ steps each,
ran in $\mathcal M_{3,160,3200}$ with the outcome Theorem~\ref{thm:organism} states,
the first also on $N^6_{3,160,3200}$ beside the transcription, the full state
compared after every step. The extracted definition of $\mathcal S_{A,w}$ writes for
$A=8$ and $9$ the images that \texttt{src/twohead.py} builds, and
\texttt{src/ring\_run.c} ran six generations of $\mathcal S$ in a ring of three banks,
$S=36{,}085$ and $L=1{,}927{,}264$, about eight minutes each, with the outcome
Theorem~\ref{thm:storedrep} states. One pass of $\mathcal S$ agreed with the
startup, one step of $\Nhost$ and the device of Definition~\ref{def:ring} from $300$
random states, as did one pass of $\mathcal S_9$ from $100$ and of $\mathcal S_{10}$
from $30$, and reduced ordered decision diagrams show that from every window each
block of $\mathcal S_8$ to $\mathcal S_{10}$ leaves what Construction~\ref{con:stored}
and Definition~\ref{def:ring} prescribe. The counts of Theorem~\ref{thm:widths} agree
with the construction for $A=8$ to $12$.

\subsection{Digests}

Every run of $\Omega^\ast$ produced the same output. Table~\ref{tab:digests} gives
the SHA-256 digests of three files of the artifact, in the formats of
Appendix~\ref{app:sigma}, the organism's image taken as $3718$ bytes of value $0$ or
$1$; the digests of the others are recorded with the runs that produce them, in
\texttt{paper/runs/}.
\begin{table}[htb]\centering\ttfamily\footnotesize
\begin{tabular}{@{}l@{\quad}l@{}}
\normalfont $\ser(\Omega^\ast)$ & c5950d3fec6bcf70e8111586a030d21b2dfa55bd69a1c7fe5f016cb06e7fb523\\
\normalfont organism $\rho$ & d4913db0b5ef6d3f122c44974cf9a58015fe7dabacb7176cf377d5cd0743f00b\\
\normalfont stored host $\mathcal S$ & 9a7821a540fed2ed58fbb263213f65aead6bbbd8cca6344abcabcc5f4289903c
\end{tabular}
\normalfont\caption{The SHA-256 digests of the artifact's files.}\label{tab:digests}
\end{table}


\end{document}